\newtheorem{theorem}{Theorem}
\newtheorem{lemma}[theorem]{Lemma}
\newtheorem{definition}[theorem]{Definition}
\newcommand{\jl}{T}
\newcommand{\dist}{\mu}
\DeclareMathOperator{\vol}{vol}
\DeclareMathOperator{\rad}{rad}
\DeclareMathOperator{\diam}{diam}
\DeclareMathOperator{\volLB}{volLB}
\DeclareMathOperator{\vrad}{vrad}
\DeclareMathOperator{\tr}{tr}
\newcommand{\sym}{\text{sym}}
\newcommand{\sep}{P}
\newcommand{\jldim}{\ell}
\newcommand{\ip}[2]{\langle #1, #2\rangle }
\newcommand{\cut}[1]{}
\title{Private Query Release via the Johnson-Lindenstrauss Transform}
\author{Aleksandar Nikolov \\
Univesity of Toronto\\
anikolov@cs.toronto.edu}
\date{}
\begin{document}

\maketitle

\begin{abstract}
  We introduce a new method for releasing answers to statistical queries with differential privacy, based on the Johnson-Lindenstrauss lemma. The key idea is to randomly project the query answers to a lower dimensional space so that the distance between any two vectors of feasible query answers is preserved up to an additive error. Then we answer the projected queries using a simple noise-adding mechanism, and lift the answers up to the original dimension. Using this method, we give, for the first time, purely differentially private mechanisms with optimal worst case sample complexity under average error for answering a workload of $\qsize$ queries over a universe of size $\usize$. As other applications, we give the first purely private efficient mechanisms with optimal sample complexity for computing the covariance of a bounded high-dimensional distribution, and for answering 2-way marginal queries. We also show that, up to the dependence on the error, a variant of our mechanism is nearly optimal for every given query workload.
\end{abstract}

\section{Introduction}

One of the central problems in private data analysis is to release
answers to statistical (also known as linear) queries on the
data. Here, a statistical query is defined by a function
\(\query:\uni \to \R\), and the value of the query on a dataset
\(\ds \in \uni^\dsize\) is simply its average over data points, which
we write as
\(\query(\ds) = \frac{1}{\dsize}\sum_{\dsrow \in
  \ds}{\query(\dsrow)}\), with the sum taken with multiplicity.
Counting queries, which ask what fraction of the dataset satisfies
some predicate, are a special case of statistical queries, and are
themselves of significant interest. Many organizations report summary
statistics in the form of tables, where each cell in the table is the
answer to a counting query. This is the case, for example, for many of
the tables released by official statistics agencies such as the US
Census Bureau: see~\cite{HDMM} for a worked out example. As other examples,
statistical queries capture CDFs of one-dimensional distribution and
higher dimensional generalizations, mean estimation, and loss
gradients in empirical risk minimization. Thus, when we analyze data
about people whose privacy needs to be protected, the question of
privately computing answers to statistical queries is of fundamental
interest.

The privacy framework we adopt is
differential privacy~\cite{DworkMNS06}, which provides strong semantic
guarantees for protecting the privacy of individuals represented in a
dataset. We say that a randomized algorithm \(\mech\)  (usually called a
mechanism) that takes datasets \(\ds \in \uni^\dsize\) is \((\eps,
\delta)\)-differentially private if for any two datasets \(\ds,\ds'\)
that differ in at most one element, and for any measurable event \(S\)
in the range of \(\mech\), we have
\[
  \Pr(\mech(\ds) \in S) \le e^\eps \Pr(\mech(\ds')\in S) + \delta.
\]
The setting when \(\delta = 0\) is usually called pure differential
privacy, and the setting when \(\delta > 0\) is called approximate
differential privacy. In this paper, we focus on purely differentially
private mechanisms, although our basic technique applies to 
approximate differential privacy, as well as to intermediate privacy
notions~\cite{DworkR16,BunS16,Mironov17,GDP}. Pure differential privacy is the strictest of the standard
variants of differential privacy, and has some nice properties not
shared by approximate differential privacy and other relaxations: it
is a single parameter definition; satisfies a simple and tight
composition theorem (composing private mechanisms just requires ``adding
the epsilons''); it satisfies a strong group privacy property, i.e.,
some measure of privacy protection is automatically offered to small
groups of people and not just to individuals. These benefits, on the
other hand, can
sometimes come at the cost of increased error. Understanding when this
is the case requires completely understanding how much error a purely
differentially private algorithm needs to introduce for a given
statistical task.

Coming back to statistical queries, we can see from the examples in
the first paragraph that, typically, we are interested in releasing
answers to many statistical queries, rather than in just a single query. Let us call a sequence \(\queries = (\query_1, \ldots, \query_\qsize)\) of statistical queries a workload. The following are some of the fundamental questions one can ask about releasing answers to a workload of statistical queries under differential privacy:
\begin{enumerate}
    \item How much error is necessary, in the worst case, to answer any workload of \(\qsize\) statistical queries over a universe \(\uni\) of size \(\usize\) given a dataset of size \(\dsize\)?
    
    \item How can we characterize the optimal trade-off between error and dataset size for a given  workload \(\queries\)?
    
    \item Can we achieve the optimal worst case error in question 1., or the optimal error vs dataset size tradeoff in question 2.~using an efficient mechanism?
\end{enumerate}
There are many variants of these questions, depending on what flavor
of differential privacy we adopt (pure, or approximate, or
concentrated differential privacy), and what measure of error we
choose (worst case error, or average, i.e., root mean squared
error). Yet, there are few settings in which precise answers are
known. A tight answer for question 1.~in the setting of approximate
differential privacy was given by Bun, Ullman, and
Vadhan~\cite{BunUV14}. Question 2.~is open for any variant of
differential privacy and measure of error that has been studied, but
there have been some partial approximate
characterizations~\cite{HardtT10,BhaskaraDKT12,NTZ,Nikolov15,cdp}. The
answer to question 3.~depends on how the workload is represented, but
is generally wide open: for example, we do not even know an efficient
mechanism that achieves the optimal error vs dataset size tradeoff for
width-\(3\) marginal queries. 

In this paper we focus on the most basic of the three questions,
question 1., in the case of pure differential privacy. Surprisingly,
in this setting, the question is open, both for worst case error and
for average error. We introduce a new algorithmic technique for query
release, based on the Johnson-Lindenstrauss transform, and we show
that it resolves question 1.~for pure differential privacy and average
error.

In order to discuss our results and prior work more precisely, we now introduce some notation. Let us use \(\queries(\ds)\) to denote the vector of query answers \((\query_1(\ds), \ldots, \query_\qsize(\ds))\). Let us say the queries in the workload are bounded if, for each \(i\in [\qsize]\) and each \(\dsrow\in \uni\), \(|\query_i(\dsrow)| \le 1\). We define, respectively, the worst case (or \(\ell_\infty\)), and the average (or root mean squared, or \(\ell_2\)) error of a mechanism \(\mech\) on a workload \(\queries\) and datasets of size \(\dsize\) as 
\begin{align*}
&\err^{\ell_\infty}(\mech,\queries,n) = \max_{\ds \in \uni^\dsize} \E \|\mech(\ds) - \queries(\ds)\|_\infty;
&\err^{\ell_2}(\mech,\queries,n) = \max_{\ds \in \uni^\dsize} \E\left[ \frac{1}{\sqrt{\qsize}}\|\mech(\ds) - \queries(\ds)\|_2\right].
\end{align*}
Above, we assume that on input \(\ds\), the mechanism \(\mech\) outputs a random \(\qsize\)-dimensional vector \(\mech(\ds)\), and the expectation is taken with respect to the mechanism's randomness. The sample complexity of the mechanism on the workload \(\queries\) is the smallest dataset size for which the mechanism can achieve error at most \(\alpha\), denoted
\(
    \sc^{\ell_p}(\mech, \queries,\alpha) = \inf\{n: \err^{\ell_p}(\mech,\queries,n) \le \alpha\},
\)
where \(p \in \{2, \infty\}\).
Note that \(\err^{\ell_2}(\mech,\queries,n) \le
\err^{\ell_\infty}(\mech,\queries,n)\), and, therefore,
\(\sc^{\ell_2}(\mech, \queries,\alpha) \le \sc^{\ell_\infty}(\mech,
\queries,\alpha)\). Finally, we define the optimal sample complexity
of a workload \(\queries\) under \(\eps\)-differential privacy
by \[\sc^{\ell_p}_\eps(\queries,\alpha) = \inf\{\sc^{\ell_p}(\mech,
  \queries,\alpha): \mech \text{ is } \eps \text{-differentially
    private}\}.\] Now, question 1.~can be formulated precisely as the
problem of giving tight bounds on the worst case value of
\(\sc^{\ell_p}_\eps(\queries,\alpha)\) over workloads \(\queries\) of
\(\qsize\) bounded queries on a universe of size \(\usize\). The
Laplace noise mechanism~\cite{DworkMNS06}, a variant of the K-norm
mechanism~\cite{SteinkeU17}, and the SmallDB mechanism~\cite{BLR} show that\footnote{Here and in the rest of the paper we use the notation \(A \lesssim B\) to signify that there exists an absolute constant \(C > 0\) such that \(A \le CB\). We use \(A \gtrsim B\) to denote \(B \lesssim A\).}
\begin{align*}
&\sc^{\ell_2}_\eps(\queries, \alpha) \lesssim 
     \min\left\{\frac{\qsize}{\eps \alpha}, \frac{\log(\usize)}{\eps\alpha^3}\right\};
    &\sc^{\ell_\infty}_\eps(\queries, \alpha) \lesssim 
     \min\left\{\frac{\qsize}{\eps \alpha}, \frac{\log(\qsize)\log(\usize)}{\eps\alpha^3}\right\}.
\end{align*}
Stronger bounds are known using methods tailored to average error. The K-norm mechanism,\footnote{The first bound is not stated in~\cite{HardtT10}, but follows directly from their results and known estimates of the volume of a polytope contained in a ball.} and a combination of the K-norm mechanism and the projection mechanism~\cite{NTZ} give the  bounds
\[
\sc^{\ell_2}_\eps(\queries, \alpha) \lesssim 
\min \left\{
\frac{\log^{3/2}(\qsize)\sqrt{\qsize\log(\usize)}}{\eps\alpha},
\frac{\log^2(\qsize)\log^{3/2}(\usize)}{\eps\alpha^2}
\right\}.
\]
The first term on the right can be boosted to a sample
complexity upper bound for worst case error using non-private boosting
for queries (see Section~6.1 in~\cite{NTZ}).
We do not know how to boost the second term, since the error per query
is private information, and private boosting incurs too big of a loss
in the sample complexity in the pure differential privacy setting. The
second term is the only known sample complexity bound in the pure differential privacy setting which has quadratic dependence on \(\frac{1}{\alpha}\), and polylogarithmic dependence on the number of queries and universe size. Unfortunately, the powers of the logarithmic terms are not known to be tight. In particular, the best known lower bounds on sample complexity come from packing arguments~\cite{HardtT10,Hardt-thesis,De12}, and show that there exists a workload \(\queries\) of \(\qsize\) bounded queries on a universe of size \(\usize\) such that
\begin{align*}
&\sc^{\ell_2}_\eps(\queries, \alpha) \gtrsim 
     \min\left\{\frac{\qsize}{\eps \alpha},
    \frac{\sqrt{\qsize\log \usize}}{\eps \alpha},
    \frac{\log(\usize)}{\eps\alpha^2}\right\};
    &\sc^{\ell_\infty}_\eps(\queries, \alpha) \gtrsim 
     \min\left\{\frac{\qsize}{\eps \alpha},
    \frac{\sqrt{\qsize\log \usize}}{\eps \alpha},\frac{\log(\qsize)\log(\usize)}{\eps\alpha^2}\right\}.
\end{align*}
We note that these appear to be the best lower bounds that can be proved with packing arguments. Thus, closing the gap between upper and lower bounds would require either new algorithmic ideas, or developing a lower bound method for pure differential privacy that goes beyond packing. 

\paragraph{Results.} In this paper, we introduce a simple new algorithmic idea for the query release problem. While our main technique applies to any flavor of differential privacy, we focus on its application in the setting of pure differential privacy, where it gives a complete answer to question 1.~above for average error. In particular, we prove the following theorem.

\begin{theorem}\label{thm:main-worst-case}
For any workload \(\queries\) of \(\qsize\) bounded queries on a universe of size \(\usize\), we have
\begin{equation}
    \label{eq:sc-opt}
    \sc^{\ell_2}_\eps(\queries,\alpha)
    \lesssim
    \min\left\{
    \frac{\qsize}{\eps \alpha},
    \frac{\sqrt{\qsize \log \usize}}{\eps \alpha},
    \frac{\log \usize}{\eps \alpha^2}
    \right\}.
\end{equation}
Moreover, this sample complexity is achieved by a mechanism running in time polynomial in \(\qsize,\usize,\dsize,\frac{1}{\eps}\).
\end{theorem}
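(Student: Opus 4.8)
The plan is to prove the three bounds in \eqref{eq:sc-opt} with three different mechanisms and keep, for a given regime of parameters, whichever is best. The first bound $\frac{\qsize}{\eps\alpha}$ is the Laplace noise mechanism of \cite{DworkMNS06} (the $\ell_1$ sensitivity of $\queries(\ds)$ is at most $\frac{2\qsize}{n}$), and the second bound $\frac{\sqrt{\qsize\log\usize}}{\eps\alpha}$ is the $K$-norm mechanism of \cite{HardtT10} applied to the sensitivity body $\frac1n(K-K)$ with $K = \operatorname{conv}\{\queries(x):x\in\uni\}$; both are already in the literature, so the real work is the third bound, $\frac{\log\usize}{\eps\alpha^2}$, which I would get from the Johnson--Lindenstrauss mechanism sketched in the introduction. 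Note first that $K$ is a polytope with at most $\usize$ vertices, that $K\subseteq[-1,1]^\qsize$ so $K$ lies in the ball of radius $\sqrt\qsize$, and that every feasible answer $\queries(\ds)$ lies in $K$ (it is an average of the vertices $\queries(x)$, $x\in\ds$). Draw a random linear map $\Phi\colon\R^\qsize\to\R^{\jldim}$ — say a scaled Gaussian matrix with $\E\|\Phi v\|_2^2=\|v\|_2^2$ — with $\jldim$ of order $\log\usize$ times a suitable power of $1/\alpha$. By the Johnson--Lindenstrauss lemma, with probability at least $1-\tfrac1\usize$ the map $\Phi$ approximately preserves the pairwise distances (and norms) among the $\le\usize$ vertices of $K$ and the origin; I condition on this event, its complement contributing at most $\tfrac2\usize\le\alpha$ to the worst-case error since the mechanism always returns a point of $[-1,1]^\qsize$. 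The consequence I would extract is geometric: if $u=y-y'$ with $y,y'\in K$, writing $u$ as a convex combination of vertex differences $\queries(x)-\queries(x')$ and invoking the distance preservation together with the polarization identity shows that $\|u\|_2^2$ and $\|\Phi u\|_2^2$ differ by only a small amount (controlled by the entrywise error in the Gram matrix $(\langle\queries(x),\queries(x')\rangle)_{x,x'}$), so $\Phi$ is an approximate isometry on $K$; in particular $\|\hat y-\queries(\ds)\|_2$ is controlled by $\|\Phi\hat y-\Phi\queries(\ds)\|_2$ for any output $\hat y\in K$.

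The mechanism operates in the reduced space. The vector $\Phi\queries(\ds)\in\R^{\jldim}$ is the answer of a ``projected workload'' whose feasible answers lie in $\Phi K$, a polytope with at most $\usize$ vertices inside the radius-$O(\sqrt\qsize)$ ball of $\R^{\jldim}$; for neighbouring $\ds,\ds'$ we have $\Phi\queries(\ds)-\Phi\queries(\ds')=\tfrac1n\,\Phi\bigl(\queries(x)-\queries(x')\bigr)\in\tfrac1n\bigl(\Phi K-\Phi K\bigr)$, so the sensitivity in the norm $\|\cdot\|_{\Phi K-\Phi K}$ is at most $\tfrac1n$. We therefore release $z=\Phi\queries(\ds)+w$, where $w$ is the noise of the $K$-norm mechanism on $\R^{\jldim}$ with body $\Phi K-\Phi K$ at privacy level $\eps$; this is $\eps$-differentially private. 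Finally we post-process: solve the convex program $\hat p=\arg\min_{p\in\Delta_\usize}\bigl\|z-\Phi\sum_x p_x\queries(x)\bigr\|_2$ and output $\hat y=\sum_x\hat p_x\queries(x)\in K$. All of this runs in time polynomial in $\qsize,\usize,n,\tfrac1\eps$, and the output remains $\eps$-differentially private by post-processing.

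For the error, write $y^*=\queries(\ds)$. Since $\Phi\hat y$ is by construction the point of $\Phi K$ nearest to $z$ and $\Phi y^*\in\Phi K$, the obtuse-angle property of Euclidean projection onto a convex set gives $\|\Phi\hat y-\Phi y^*\|_2^2\le\langle w,\Phi\hat y-\Phi y^*\rangle\le\sup_{u\in K-K}\langle w,\Phi u\rangle\le 2\max_{x\in\uni}|\langle w,\Phi\queries(x)\rangle|$. Plugging this into the approximate-isometry statement from the first paragraph converts it into a bound on $\|\hat y-y^*\|_2$, and it remains to bound $\E\max_x|\langle w,\Phi\queries(x)\rangle|$. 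Here one uses that $w$ comes from the $K$-norm distribution of a body, $\Phi K-\Phi K$, that has only $\lesssim\usize^2$ vertices and circumradius $O(\sqrt\qsize)$ in the $\jldim$-dimensional space: such a body is thin in most directions, so the $\ell_2$-magnitude of $w$, and hence $\max_x|\langle w,\Phi\queries(x)\rangle|$, is far smaller than the crude circumradius bound would give. Carrying this through, normalizing by $\sqrt\qsize$, and adding the (small) Johnson--Lindenstrauss distortion term yields normalized $\ell_2$ error at most $\alpha$ as soon as $n\gtrsim\frac{\log\usize}{\eps\alpha^2}$, which is the third bound in \eqref{eq:sc-opt}.

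The step I expect to be the main obstacle is the quantitative one: choosing $\jldim$ so that, simultaneously, the Johnson--Lindenstrauss distortion is small enough to reconstruct $\queries(\ds)$ to normalized $\ell_2$ error $O(\alpha)$ and $\jldim$ is itself only polylogarithmic in $\usize$ (times a bounded power of $1/\alpha$), so that the $K$-norm noise in $\R^{\jldim}$ — whose size grows with $\jldim$ — stays under control. Both the passage from the Johnson--Lindenstrauss distance guarantee to a reconstruction guarantee (without losing an extra power of $1/\alpha$ or a dependence on $\qsize$) and the analysis of the $K$-norm noise on $\Phi K-\Phi K$ (showing it is small in the $\le\usize^2$ vertex directions) have to be done carefully; the rest — privacy, polynomial running time, and discarding the low-probability bad embedding — is routine.
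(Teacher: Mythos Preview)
Your high-level architecture --- randomly project, add $K$-norm noise in the low-dimensional space, then project back onto $K$ --- is exactly the paper's JL mechanism. The two steps you flag as ``the main obstacle'' are indeed the entire content of the proof, and the sub-arguments you sketch for them do not close the gap.

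\textbf{The JL step.} Your plan is to apply the classical (multiplicative) Johnson--Lindenstrauss lemma to the $\le\usize$ vertices of $K$ and then extend to all of $K$ via the Gram matrix: if inner products among vertices are preserved to additive error $\eta$, then for $u\in K-K$ (written as a signed convex combination of vertices with $\ell_1$-mass $\le 2$) one gets $\bigl|\,\|u\|_2^2-\|\Phi u\|_2^2\,\bigr|\lesssim\eta$. But multiplicative JL with parameter $\gamma$ (so $\jldim\sim\gamma^{-2}\log\usize$) gives $\eta\sim\gamma\qsize$ via polarization, since the vertices have norm up to $\sqrt\qsize$. Hence the additive error on $\|u\|_2$ is only $\sqrt{\gamma\qsize}$, and to make this $\le\alpha\sqrt\qsize$ you need $\gamma\lesssim\alpha^2$, i.e.\ $\jldim\sim\alpha^{-4}\log\usize$. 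Carrying this $\jldim$ through the noise analysis costs you at least one extra power of $1/\alpha$ in the final sample complexity. The paper avoids this by invoking the \emph{additive} JL lemma of Liaw, Mehrabian, Plan, and Vershynin (Theorem~\ref{thm:jl} here), which directly gives, for \emph{all} pairs $y,y'\in K$, $\bigl|\,\|\Phi y-\Phi y'\|_2-\|y-y'\|_2\,\bigr|\lesssim w(K)/\sqrt{\jldim}$. Since $w(K)\lesssim\sqrt{\qsize\log\usize}$ for bounded queries, $\jldim\sim\alpha^{-2}\log\usize$ suffices. This is a genuinely stronger statement (proved by chaining) than what vertex JL plus convexity yields.

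\textbf{The noise step.} Your analysis bounds $\|\Phi\widehat y-\Phi y^*\|_2^2$ by $\sup_{v\in\Phi K-\Phi K}\langle w,v\rangle$ and then hopes to control $\max_x|\langle w,\Phi\queries(x)\rangle|$ using that $\Phi K$ has few vertices. This is the NTZ projection-mechanism argument, and the paper explicitly notes (end of Section~3) that it is lossy for $K$-norm noise because one-dimensional marginals of the $K$-norm distribution are not subgaussian; this is precisely why the prior NTZ bound carried extra polylog factors. The paper instead uses the simpler projection inequality $\|\Phi\widehat y-\Phi y^*\|_2\le\|w\|_2$, bounds $\E\|w\|_2$ via the Gaussian width $w(\Phi K)$ using an optimized Hardt--Talwar decomposition (Lemma~\ref{lm:noise-meanw}), and then proves the key fact $\E\,w(\Phi K)\lesssim w(K)$ (Lemma~\ref{lm:jl-gauss}). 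Balancing the JL term $w(K)/\sqrt{\qsize\jldim}$ against the noise term $\sqrt{\jldim}\,w(K)/(\sqrt\qsize\,n\eps)$ at $\jldim=n\eps$ then gives error $w(K)/\sqrt{\qsize n\eps}$, hence $n\gtrsim\log\usize/(\eps\alpha^2)$. A smaller point: the second term $\sqrt{\qsize\log\usize}/(\eps\alpha)$ is not quite ``already in the literature'' either --- the Hardt--Talwar mechanism as stated has an extra $\log^{3/2}\qsize$, and the paper removes it by re-optimizing the privacy budget across the decomposition levels (Lemma~\ref{lm:Knorm-noise}).
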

The sample complexity bounds in \eqref{eq:sc-opt} are tight for random
queries (see Section~\ref{sect:rand-lb}), and resolve Open Problem 2.~from~\cite{DPorg-open-problem-optimal-query-release}.  The first term on the right
hand side of \eqref{eq:sc-opt} is known, and can be shown using either
the Laplace noise or the K-norm mechanism. To get the second term, we
optimize the privacy budget of the K-norm mechanism in order to remove
unnecessary logarithmic terms. Our main contribution is in
establishing the last term by giving a new private query release
mechanism, which we call the JL mechanism. The key new idea is to use a variant of
the Johnson-Lindenstrauss lemma to first randomly project the query
answers to a lower dimensional space, then add noise in the lower
dimensional space, and finally lift up the answers to the original
dimension.
To explain why this gives the sample complexity in \eqref{eq:sc-opt}, let us define the sensitivity polytope
\(\spoly_\queries\) of the workload \(\queries\), given by
\(\spoly_\queries = \mathrm{conv}\{\queries(\dsrow): \dsrow \in
\uni\}\), where
\(\queries(\dsrow) = (q_1(\dsrow), \ldots, q_k(\dsrow))\).
\(\spoly_\queries\) contains all possible values of \(\queries(\ds)\)
for all possible dataset sizes. A theorem  of Liaw, Mehrabian, Plan, and
Vershynin~\cite{JL-additive} guarantees that multiplication by a
suitable random matrix \(\jl \in \R^{\ell \times \qsize}\) with
\(\jldim \lesssim \frac{\log \usize}{\alpha^2}\) preserves all
distances between points in \(\spoly_\queries\) up to
\(\pm \alpha \sqrt{\qsize}\). Therefore, to achieve error
\(O(\alpha)\), it suffices to answer the projected queries \(\jl
\queries(\ds)\) with an answer vector \(\widetilde{Y} \in \jl \spoly_\queries\) with error
\(O(\alpha)\), and output a vector \(\widehat{Y} \in \spoly_\queries\)
such that \(\jl \widehat{Y} = \widetilde{Y}\). This works because, by
the result of Liaw et al., the distance between
\(\widehat{Y}\) and \(\queries(\ds)\) is approximately the same as the
distance between \(\widetilde{Y}\) and \(\jl\queries(\ds)\). The
benefit of using the random projection (i.e., multiplication by
\(\jl\)) is that now we only have to
answer the \(\jldim\)-dimensional workload \(\jl\queries\), for which
we can use the  K-norm mechanism, giving us error
\(\alpha\) with sample complexity
\(
\frac{\sqrt{\jldim \log \usize}}{\eps \alpha}
\lesssim
\frac{\log \usize}{\eps \alpha^2},
\)
as was our goal. The only caveat is that the K-norm mechanism does not
necessarily produce answers in \(\jl \spoly_\queries\), and
multiplication by \(\jl\) is only promised to preserve distances in \(\spoly_\queries\).
This problem has a simple fix: a least squares projection of the
noisy answers back onto  \(\jl \spoly_\queries\) does not increase the
error.

We note that the Johnson-Lindenstrauss lemma has been used before in
differential privacy in other contexts, e.g.~\cite{BlockiBDS12,KenthapadiKMM13,Stausholm21}. To the best of our knowledge,
the application to query released we present here is new.

A more careful analysis of this JL mechanism allows
giving the sample complexity bounds in terms of a natural geometric
property of the queries.
Intuitively, we expect that the sample
complexity required to compute \(\queries\) with differential privacy
scales with the ``size'' of \(\spoly_\queries\). In many situations,
the appropriate measure of size appears to be the Gaussian mean width,
defined for a set \(S \subseteq \R^\qsize\) by
\[
w(S) = \E\sup_{y \in S} \ip{y}{G},
\]
where \(G\) is a standard \(\qsize\)-dimensional Gaussian vector. The
novel bounds in Theorem~\ref{thm:main-worst-case} are implied and
refined by the next theorem. The theorem also refines the
computational complexity guarantee in
Theorem~\ref{thm:main-worst-case}, by showing that our mechanism runs
in  time polynomial in \(\qsize,\dsize,\frac{1}{\eps}\), and in the
running time of a separation oracle for \(\spoly_\queries\). Such an
oracle can be implemented in time polynomial in \(\qsize\) and
\(\usize\) using linear programming.
\begin{theorem}\label{thm:meanw}
For any workload of \(\qsize\) queries \(\queries\) over a universe of size \(\usize\), and any \(0 \le \alpha \le \frac{\diam(\spoly_\queries)}{\sqrt{\qsize}}\),
there exists an \(\eps\)-differentially private mechanism \(\mech\) with sample complexity
\[
\sc^{\ell_2}(\mech,\queries,\alpha) 
\lesssim
\min\left\{
  \frac{w(\spoly_\queries)}{\alpha\eps},
  \frac{w(\spoly_\queries)^2}{{\qsize}\eps \alpha^2}
\right\}
\]
Moreover, given a separation oracle for \(\spoly_\queries\), this
sample complexity is achieved by a mechanism running in time
polynomial in \(\qsize,\dsize,\frac{1}{\eps}\), and in the running
time of the separation oracle.
\end{theorem}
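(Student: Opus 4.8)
I would prove the two bounds in the minimum separately, but both rest on a single quantitative guarantee for the K‑norm mechanism that I would record first. For any workload $\queries'$ of $m$ queries, the K‑norm mechanism tuned to the symmetrized sensitivity polytope $\tfrac1n\cdot\tfrac12(\spoly_{\queries'}-\spoly_{\queries'})$ is $\eps$‑differentially private and adds noise $Z$ with $\E\|Z\|_2\lesssim \sqrt{\dim\spoly_{\queries'}}\cdot w(\spoly_{\queries'})/(n\eps)$. The reason, in brief: writing $Z=\tfrac2{n\eps}W$ with $W$ having density proportional to $e^{-\|\cdot\|_K}$ for $K=\tfrac12(\spoly_{\queries'}-\spoly_{\queries'})$, and using that $W$ is a $\mathrm{Gamma}(\dim K)$‑distributed radius times an independent direction drawn from the cone measure on $\partial K$, one gets $\E\|W\|_2^2\asymp(\dim K)^2\cdot\tr(M_K)$ where $M_K$ is the covariance of the uniform measure on $K$, and $\tr(M_K)\lesssim w(K)^2/\dim(K)$ by Urysohn's inequality bounding the volume radius of $K$ by its mean width, together with the bound on the isotropic constant. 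Taking $\queries'=\queries$, using $\dim\spoly_\queries\le\qsize$ and dividing by $\sqrt\qsize$ gives $\err^{\ell_2}\lesssim w(\spoly_\queries)/(n\eps)$, which is the first term of the minimum; obtaining this with no spurious powers of $\log\qsize$ or $\log\usize$ is what ``optimizing the privacy budget'' refers to, and a separation oracle for $\spoly_\queries$ suffices to sample $W$ by log‑concave sampling.

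For the second term I would implement the JL mechanism in the right target dimension. Set $\jldim\asymp w(\spoly_\queries)^2/(\qsize\alpha^2)$; this is a legitimate choice with $\jldim\ge 1$ because $\alpha\le\diam(\spoly_\queries)/\sqrt\qsize$ and $w(\spoly_\queries)\gtrsim\diam(\spoly_\queries)$, and if it forces $\jldim\ge\qsize$ then we are already in the regime where the first term is the smaller one, so take $\jldim=\qsize$ and $\jl=I$. Draw $\jl\in\R^{\jldim\times\qsize}$ with i.i.d.\ $N(0,1/\jldim)$ entries. On input $\ds$: run the K‑norm mechanism of the previous paragraph for the $\jldim$‑dimensional workload $\jl\queries$, obtaining $\widetilde Y_0=\jl\queries(\ds)+Z$; let $\widetilde Y$ be the $\ell_2$‑projection of $\widetilde Y_0$ onto $\jl\spoly_\queries$; and output any $\widehat Y\in\spoly_\queries$ with $\jl\widehat Y=\widetilde Y$. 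Privacy is immediate: $\jl$ is data‑independent, the K‑norm step is $\eps$‑differentially private for each fixed $\jl$, and projection and lifting are post‑processing.

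The utility analysis hinges on the deterministic inequality $\|\widehat Y-\queries(\ds)\|_2\le \|\jl\widehat Y-\jl\queries(\ds)\|_2+D(\jl)\le 2\|Z\|_2+D(\jl)$, where $D(\jl):=\sup_{x,y\in\spoly_\queries}\bigl|\,\|\jl(x-y)\|_2-\|x-y\|_2\,\bigr|$: the first step is the definition of $D(\jl)$ (both $\widehat Y$ and $\queries(\ds)$ lie in $\spoly_\queries$), and the second is the standard fact that projecting onto a convex set that contains $\jl\queries(\ds)$ at most doubles the distance to that point. Averaging and dividing by $\sqrt\qsize$ bounds $\err^{\ell_2}$ by $(2\E\|Z\|_2+\E D(\jl))/\sqrt\qsize$. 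For the distortion, the theorem of~\cite{JL-additive} applied to $\spoly_\queries-\spoly_\queries$ (mean width $2w(\spoly_\queries)$, radius $\lesssim w(\spoly_\queries)$) gives $\E D(\jl)\lesssim w(\spoly_\queries)/\sqrt\jldim\lesssim\alpha\sqrt\qsize$ for our $\jldim$. For the noise, I would use that a Gaussian projection contracts the Gaussian mean width: for standard $G\in\R^\jldim$, $\jl^{\!\top}G$ is distributed as $(\|G\|_2/\sqrt\jldim)\,g$ with $g$ an independent standard $\qsize$‑Gaussian, hence $\E_\jl w(\jl\spoly_\queries)=(\E\|G\|_2/\sqrt\jldim)\,w(\spoly_\queries)\le w(\spoly_\queries)$; plugging this into the K‑norm estimate applied in dimension $\jldim$ and then averaging over $\jl$ — here it is essential to average rather than condition on $\jl$ being good, which is exactly why the error was decomposed deterministically, since conditioning would cost a $\log(1/\alpha)$ factor — gives $\E\|Z\|_2\lesssim \sqrt\jldim\,w(\spoly_\queries)/(n\eps)=w(\spoly_\queries)^2/(\sqrt\qsize\,\alpha\,n\eps)$. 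Thus $\err^{\ell_2}\lesssim w(\spoly_\queries)^2/(\qsize\,\alpha\,n\eps)+\alpha\lesssim\alpha$ as soon as $n\gtrsim w(\spoly_\queries)^2/(\qsize\eps\alpha^2)$, which is the second term. All steps run in time polynomial in $\qsize,\dsize,1/\eps$ and in the oracle time: a separation oracle for $\jl\spoly_\queries$ and its symmetrization comes from running the ellipsoid method on the feasibility system $\{y\in\spoly_\queries:\jl y=p\}$ with the given oracle, the K‑norm noise is sampled by log‑concave sampling from that oracle, and the projection and the lifting are convex programs solved with it.

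I expect the genuine obstacle to be the sharp, polylog‑free form of the K‑norm $\ell_2$‑bound in the first paragraph: a naive analysis (recursive volume estimates, or bounding $\E\|W\|_2$ through crude containments of the polytope) loses polynomial‑in‑$\log\qsize,\log\usize$ factors, which would then be amplified — through the choice of $\jldim$ — in the second bound, so only the volume‑radius route via Urysohn's inequality and the bound on the isotropic constant delivers both the correct power of $1/\alpha$ and the polylog‑free constants demanded by the theorem. The secondary, easier point is decoupling the randomness of $\jl$ from that of $Z$, which the deterministic error decomposition plus linearity of expectation accomplishes; and a minor bookkeeping issue is that exact sampling of the K‑norm noise is needed for pure $\eps$‑differential privacy, handled by the standard precision arguments.
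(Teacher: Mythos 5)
The JL half of your proposal is sound and essentially the paper's route: a data-independent $\jl$ plus post-processing for privacy, the deterministic decomposition of the error into the noise term and the distortion $D(\jl)$, the bound $\E_\jl\, w(\jl \spoly_\queries)\lesssim w(\spoly_\queries)$, and a target dimension $\jldim\asymp w(\spoly_\queries)^2/(\qsize\alpha^2)$ are all mirrored in Lemma~\ref{lm:template}, Lemma~\ref{lm:jl-gauss} and Theorem~\ref{thm:main}. The genuine gap is the ``single quantitative guarantee'' on which both terms of your minimum rest. It is not true that the plain K-norm mechanism for (the symmetrization of) $\spoly_\queries$ has $\E\|Z\|_2\lesssim \sqrt{\qsize}\,w(\spoly_\queries)/\eps$, and the inequality you invoke for it, $\tr(M_K)\lesssim w(K)^2/\dim K$ for the covariance $M_K$ of the uniform measure on a symmetric convex body $K$, is false. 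Take $K=[-\qsize,\qsize]\times[-1,1]^{\qsize-1}$, a perfectly legitimate symmetrized sensitivity polytope: then $\tr(M_K)=\E\|U\|_2^2\approx \qsize^2/3$ while $w(K)^2/\qsize\approx \qsize$, so your bound is off by a factor of order $\qsize$, and the resulting K-norm error exceeds the target $w(\spoly_\queries)/(n\eps)$ by about $\sqrt{\qsize}$. The tools you cite cannot close this: Urysohn's inequality controls the volume radius, a geometric-mean quantity, and the isotropic-constant bound controls $\det(M_K)$; neither controls $\tr(M_K)$, which can be dominated by a single long direction. Your computation would be valid if $K$ were in isotropic position, but the coordinate system is fixed by the queries --- the $\ell_2$ error only permits rotations, and no rotation makes the elongated box isotropic. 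Since you apply the same claim to the projected body $\jl\spoly_\queries$, which can be just as non-isotropic, the second term of the minimum inherits the same gap.

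This non-isotropy is exactly the obstruction that the paper's proof is built around, and it is not a matter of sharpening constants: the paper proves Lemma~\ref{lm:noise-meanw} by going through Hardt and Talwar's recursive decomposition (Lemma~\ref{lm:decomp}, made unconditional via~\cite{BhaskaraDKT12}), which splits $\R^\qsize$ into orthogonal subspaces $W_1,\ldots,W_m$ of geometrically decreasing dimension with bodies $K_i$ whose uniform-measure second moments are controlled by volume lower bounds; it adds independent K-norm noise in each $W_i$ with the budget split $\eps_i\propto(\qsize_i\E\|U_i\|_2)^{2/3}$ (this is the ``privacy-budget optimization'' that removes the logarithms, cf.\ Lemma~\ref{lm:Knorm-noise}), and only then applies Urysohn levelwise through $\volLB(K,\ell)\le\sqrt{\ell}\,w(K)$ (Lemma~\ref{lm:vollb-meanw}), plus a centering step ensuring $0\in\spoly_\queries$ so symmetrization at most doubles the width. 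Replacing your one-shot K-norm step with this decomposed noise --- both for the first term and inside the JL mechanism for $\jl\queries$ --- repairs the argument and is precisely how the paper deduces Theorem~\ref{thm:meanw} from Lemma~\ref{lm:noise-meanw} and Theorem~\ref{thm:main}.
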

For some workloads, Theorem~\ref{thm:meanw} can give tighter bounds
than Theorem~\ref{thm:main-worst-case}. Moreover, it can apply in
situations when the universe \(\uni\) is not finite. We will give one
such example soon. To prove the theorem, we first show, using
Urysohn's inequality tying Gaussian width and volume, that the error
of the K-norm mechanism can be controlled in terms of the Gaussian
width of \(\spoly_\queries\). This gives the first term in the sample
complexity bound. We then apply the
JL mechanism with this noise bound, together with
the fact that multiplication by the random matrix \(\jl\) does not
asymptotically increase Gaussian width in expectation, which we prove.

Theorems~\ref{thm:main-worst-case}~and~\ref{thm:meanw} give improved,
and in fact tight sample complexity bounds for natural and important
workloads, e.g., for workloads of constant width marginal queries, and
workloads deriving from covariance estimation. Let \(\uni =
\{0,1\}^d\), and, for \(S \subseteq [d]\), define the query \(q_S(x) =
\prod_{i \in S} x_i\), i.e., the conjunction of the bits indexed by
\(S\). The workload of width-\(w\) marginals over \(d\)-dimensional
boolean data is defined as \(\queries_{w,d} = \{q_S: S\subseteq [d],
|S| = w\}\). Marginals capture many public releases of statistics in
the form of tables, and privately approximating marginal queries is
one of the better studied problems in differential
privacy~\cite{BarakCDKMT07,HardtRS12,CheraghchiKKL12,ChandrasekaranTUW14,ThalerUV12,conjunctions}. Using Theorem~\ref{thm:main-worst-case}, we
immediately get new
sample complexity bounds for computing marginals with pure
differential privacy. In particular, \eqref{eq:sc-opt} implies that, for any constant integer \(w\),
\begin{equation}
    \label{eq:sc-marg-opt}
    \sc_\eps^{\ell_2}(\queries_{w,d},\alpha)
    \lesssim
    \min\left\{
    \frac{d^w}{\eps \alpha},
    \frac{d^{(w+1)/2}}{\eps \alpha},
    \frac{d}{\eps \alpha^2}
    \right\}.
\end{equation}
The
third term in the sample complexity bound \eqref{eq:sc-marg-opt} is
new to this paper. We also prove a lower bound showing that
\eqref{eq:sc-marg-opt} is tight.

As another application, we also study the problem of estimating the
covariance of a bounded distribution in the Frobenius norm. The next
theorem crucially uses the refined Theorem~\ref{thm:meanw}, since
covariance matrices are naturally modeled using
statistical queries on an infinite domain. 
\begin{theorem}\label{thm:cov}
  There exists an \(\eps\)-differentially private mechanism that,
  given \(\dsize\) samples from a distribution \(\mu\) on \(B_2^d\)
  with covariance matrix \(\Sigma\), and outputs a matrix
  \(\widehat{\Sigma}\) such that, for any \(\alpha \in (0,1)\),
  \[
    \E \|\widehat{\Sigma} - \Sigma\|_F \le \alpha
  \]
  as long as
  \[
    \dsize \ge C\left( \min\left\{\frac{d^{1.5}}{\alpha\eps},
          \frac{d}{\alpha^2\eps}\right\}
      + \frac{1}{\alpha^2}
    \right)
  \]
  for a constant \(C\).
  Moreover, the mechanism runs in time polynomial in
  \(d,\dsize,\frac{1}{\eps}\). 
\end{theorem}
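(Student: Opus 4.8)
\emph{Proof plan.} The plan is to recast covariance estimation as a statistical query release problem and feed it into Theorem~\ref{thm:meanw}, with a separate elementary accounting for sampling error. Write $M = \E_{x\sim\mu}[xx^\top]$ for the second moment matrix and $m = \E_{x\sim\mu}[x]$ for the mean, so that $\Sigma = M - mm^\top$. The mechanism privately produces estimates $\widehat M$ and $\widehat m$ and outputs $\widehat\Sigma = \widehat M - \widehat m\widehat m^\top$. If $\widehat M$ and $\widehat m$ are constrained to the respective sensitivity polytopes described below, then in particular $\widehat m \in B_2^d$, so $\|\widehat m\|_2, \|m\|_2 \le 1$, and the identity $\widehat m\widehat m^\top - mm^\top = \widehat m(\widehat m - m)^\top + (\widehat m - m)m^\top$ gives $\|\widehat\Sigma - \Sigma\|_F \le \|\widehat M - M\|_F + 2\|\widehat m - m\|_2$. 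Hence it suffices to estimate $M$ in Frobenius norm and $m$ in $\ell_2$ norm, each to error a small constant times $\alpha$.

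Both tasks are statistical query release instances over the (infinite) universe $\uni = B_2^d$. For the second moment take the workload $\queries^M = \{\,x \mapsto x_i x_j : i,j \in [d]\,\}$ of $\qsize = d^2$ queries; its sensitivity polytope is $\spoly_{\queries^M} = \mathrm{conv}\{xx^\top : x \in B_2^d\} = \{A : A \succeq 0,\ \tr A \le 1\}$, the empirical answer vector on a dataset $\ds$ is the empirical second moment $M_\ds$, and the Frobenius error of an answer equals $\sqrt{\qsize} = d$ times the normalized $\ell_2$ error measured by Theorem~\ref{thm:meanw}. For the mean take $\queries^m = \{x \mapsto x_i : i \in [d]\}$, with $\spoly_{\queries^m} = B_2^d$, empirical answer $m_\ds$, and Frobenius error equal to $\sqrt d$ times the normalized error. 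A separation oracle for $\spoly_{\queries^M}$ runs in polynomial time (compute the smallest eigenvalue of $A$; if it is negative with eigenvector $v$, report the violated constraint $\ip{X}{vv^\top} \ge 0$, otherwise check $\tr X \le 1$), and for $B_2^d$ it is trivial, so Theorem~\ref{thm:meanw} yields mechanisms running in time polynomial in $d$, $\dsize$, and $1/\eps$.

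The one step beyond bookkeeping is bounding the Gaussian mean widths, and this is the part to get right: $\spoly_{\queries^M}$ lives in $d^2$ dimensions yet must have width only $O(\sqrt d)$. Using that $\sup_{A\succeq 0,\ \tr A\le 1}\ip{A}{S} = \max\{0,\lambda_{\max}(\tfrac12(S+S^\top))\}$ for any $S$, for a standard Gaussian matrix $G$ we get
\[
  w(\spoly_{\queries^M}) = \E\,\max\Bigl\{0,\ \lambda_{\max}\bigl(\tfrac12(G+G^\top)\bigr)\Bigr\} \le \E\bigl\|\tfrac12(G+G^\top)\bigr\|_{\mathrm{op}} \le \E\|G\|_{\mathrm{op}} \lesssim \sqrt d,
\]
by the classical bound on the operator norm of a Gaussian matrix, and similarly $w(\spoly_{\queries^m}) = \E\|G\|_2 \le \sqrt d$. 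Also $\diam(\spoly_{\queries^M}) \le \sqrt2$ and $\diam(\spoly_{\queries^m}) = 2$, so the hypothesis $\alpha \le \diam(\spoly_\queries)/\sqrt{\qsize}$ of Theorem~\ref{thm:meanw} holds comfortably for the normalized target errors (of order $\alpha/d < 1/d$ and $\alpha/\sqrt d < 1/\sqrt d$) we will request.

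Finally, assemble the pieces. Run the mechanism of Theorem~\ref{thm:meanw} with privacy parameter $\eps/2$ on $\queries^M$ with normalized target error a small constant times $\alpha/d$, and with $\eps/2$ on $\queries^m$ with normalized target error a small constant times $\alpha/\sqrt d$; composition gives $\eps$-differential privacy. Substituting $w \lesssim \sqrt d$ and $\qsize = d^2$ (resp.\ $\qsize = d$) into the two terms of the bound in Theorem~\ref{thm:meanw} shows $\E\|\widehat M - M_\ds\|_F \lesssim \alpha$ once $\dsize \gtrsim \min\{d^{1.5}/(\alpha\eps),\, d/(\alpha^2\eps)\}$, and $\E\|\widehat m - m_\ds\|_2 \lesssim \alpha$ once $\dsize \gtrsim d/(\alpha\eps)$, which is dominated by the previous bound for $\alpha \le 1$. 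To pass from empirical to true moments, $\|x\|_2 \le 1$ gives $\E\|M_\ds - M\|_F^2 \le \E\|x\|_2^4/\dsize \le 1/\dsize$ and $\E\|m_\ds - m\|_2^2 \le 1/\dsize$, so both are $\lesssim \alpha$ in expectation once $\dsize \gtrsim 1/\alpha^2$. Combining with the triangle inequalities and choosing all constants appropriately yields $\E\|\widehat\Sigma - \Sigma\|_F \le \alpha$ whenever $\dsize \ge C(\min\{d^{1.5}/(\alpha\eps),\, d/(\alpha^2\eps)\} + 1/\alpha^2)$. I expect the main obstacle to be exactly the Gaussian width bound $w(\spoly_{\queries^M}) \lesssim \sqrt d$ for the spectrahedron, together with carefully tracking the factor-$d$ rescaling between Frobenius error and the normalized $\ell_2$ error of Theorem~\ref{thm:meanw}; the rest is the standard second-moment-plus-mean reduction and first-moment variance bounds.
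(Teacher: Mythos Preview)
Your proposal is correct and follows essentially the same route as the paper: reduce covariance to second moment plus mean, bound the Gaussian width of $\mathrm{conv}\{xx^\top : x \in B_2^d\}$ by the expected operator norm of a Gaussian matrix to get $w \lesssim \sqrt{d}$, plug into Theorem~\ref{thm:meanw}, and add the $1/\alpha^2$ sampling term via elementary variance bounds. The only cosmetic differences are that the paper invokes the ball-noise mechanism (Lemma~\ref{lm:ball-noise}) rather than Theorem~\ref{thm:meanw} for the mean, and cites a symmetrization argument rather than a direct second-moment bound for the sampling error; your characterization $\{A \succeq 0,\ \tr A \le 1\}$ is in fact the correct one.
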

Theorem~\ref{thm:cov} improves on the recent work of Dong, Liang, and
Yi~\cite{DongLY-cov}, who give an algorithm with sample complexity on
the order of \(\frac{d^{1.5}}{\alpha^2\eps}\). Moreover, it shows that
the optimal sample complexity bound \eqref{eq:sc-marg-opt} can be achieved in
time polynomial in \(d\) when \(w=2\), since 2-way marginals can be
modeled as a special case of covariance estimation. Such efficient
algorithms are not known for \(w\ge 3\).

So far, we argued that the JL mechanism is optimal for worst-case
query workloads. We can in fact show that in the constant error regime
the sample complexity of a variant of the JL mechanism is also tight
with respect to the optimal sample complexity for the given
workload. This gives a partial answer to question 2.~from the
beginning of the introduction for pure differential privacy. The next
theorem, capturing this optimality guarantee, follows from combining
Theorem~\ref{thm:meanw} with the methods of Blasiok, Bun, Nikolov, and
Steinke~\cite{cdp}. 

\begin{theorem}\label{thm:opt}
For any workload \(\queries\) of \(\qsize\) queries over a universe of
size \(\usize\), and any \(0\le \alpha \le
\frac{\diam(\spoly_\queries)}{\sqrt{\qsize}}\), there exists an
\(\eps\)-differentially private mechanism \(\mech\) with running time
polynomial in \(\qsize, \usize, \dsize, \frac{1}{\eps}\), and with sample complexity
\[
\sc^{\ell_2}(\mech,\queries,\alpha)
\lesssim
\frac{\diam(\spoly_\queries)}{\alpha \sqrt{\qsize}}
\log\left(\frac{\diam(\spoly_\queries)}{\alpha \sqrt{\qsize}}\right)^2
\sc_\eps^{\ell_2}(\queries,\alpha/8).
\]
In particular, when \(\qsize\) is a workload of bounded queries,
\[
\sc^{\ell_2}(\mech,\queries,\alpha)
\lesssim
\frac{1}{\alpha}
\log\left(\frac{1}{\alpha}\right)^2
\sc_\eps^{\ell_2}(\queries,\alpha/8).
\]
\end{theorem}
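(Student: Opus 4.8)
The plan is to combine the upper bound of Theorem~\ref{thm:meanw} with the packing lower bounds and the multi-scale reduction of~\cite{cdp}. Throughout write $\sep := \spoly_\queries$.

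\emph{The mechanism.} I would take $\mech$ to be the JL mechanism supplied by Theorem~\ref{thm:meanw}, so that
\[
\sc^{\ell_2}(\mech,\queries,\alpha)\ \lesssim\ \min\left\{\frac{w(\sep)}{\eps\alpha},\ \frac{w(\sep)^2}{\qsize\,\eps\,\alpha^2}\right\},
\]
and this mechanism runs in time polynomial in $\qsize,\dsize,\tfrac1\eps$ and in the running time of a separation oracle for $\sep$ --- hence, via linear programming, in time polynomial in $\qsize,\usize,\dsize,\tfrac1\eps$. (For sharper constants one would instead run Theorem~\ref{thm:meanw} at the geometric sequence of scales $2^j\alpha\sqrt{\qsize}$, $0\le j\lesssim\log\tfrac{\diam(\sep)}{\alpha\sqrt{\qsize}}$, applying it to $\sep$ truncated to that radius, with scale $j$ refining the previous estimate and each scale getting privacy budget $\eps/O(\log\tfrac{\diam(\sep)}{\alpha\sqrt{\qsize}})$; this is the ``multi-scale JL mechanism'' and is where the methods of~\cite{cdp} enter. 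I will carry the single-run version, which already suffices.)

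\emph{The lower bound.} It remains to show that the right-hand side above is at most $\tfrac{\diam(\sep)}{\alpha\sqrt{\qsize}}\log\!\big(\tfrac{\diam(\sep)}{\alpha\sqrt{\qsize}}\big)^2\cdot\sc_\eps^{\ell_2}(\queries,\alpha/8)$. For this I would use the packing lower bounds from~\cite{cdp}: for any $\beta$ with $0<\beta\le\tfrac{\diam(\sep)}{\sqrt{\qsize}}$ and any convex body $\sep'\subseteq\sep$ lying in a subspace of dimension $d$ with $\vrad_d(\sep')\gtrsim\beta\sqrt{\qsize}$, a suitably scaled copy of $\sep'$ contains exponentially many realizable query-answer vectors that are pairwise $\Omega(\beta\sqrt{\qsize})$-separated in $\ell_2$ and are produced by datasets differing in few points; group privacy then yields $\sc_\eps^{\ell_2}(\queries,\beta)\gtrsim d\,\vrad_d(\sep')/(\eps\beta\sqrt{\qsize})$, and taking $\sep'$ to be a diameter-realizing segment recovers the sharper ``chain'' bound $\sc_\eps^{\ell_2}(\queries,\beta)\gtrsim\diam(\sep)/(\eps\beta\sqrt{\qsize})$. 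Writing $\volLB_\beta(\sep)$ for the largest value of $d\,\vrad_d(\sep')$ over such sub-bodies (essentially the volume lower-bound quantity of~\cite{cdp}), we get $\sc_\eps^{\ell_2}(\queries,\beta)\gtrsim\volLB_\beta(\sep)/(\eps\beta\sqrt{\qsize})$, and $\volLB_{\alpha/8}(\sep)\gtrsim\diam(\sep)$ since $\alpha\le\diam(\sep)/\sqrt{\qsize}$.

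\emph{The geometric core, and the obstacle.} Plugging $\sc_\eps^{\ell_2}(\queries,\alpha/8)\gtrsim\volLB_{\alpha/8}(\sep)/(\eps\alpha\sqrt{\qsize})$ into the target and clearing denominators, the desired bound reduces to the purely geometric inequality
\[
\min\left\{\,w(\sep)\,\qsize\alpha,\ w(\sep)^2\,\right\}\ \lesssim\ \diam(\sep)\cdot\log\!\left(\tfrac{\diam(\sep)}{\alpha\sqrt{\qsize}}\right)^{2}\cdot\volLB_{\alpha/8}(\sep),
\]
after which the theorem follows by dividing through by $\qsize\,\eps\,\alpha^2$ (the passage $\alpha\to\alpha/8$ absorbing the constant losses from the JL distortion, Markov's inequality, and, in the multi-scale version, the scale decomposition); one handles the cases $w(\sep)\le\qsize\alpha$ and $w(\sep)\ge\qsize\alpha$ separately, which fixes which term of the minimum is relevant. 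I expect this inequality to be the main obstacle. One direction, $\sqrt{\qsize}\,\vrad(\sep)\lesssim w(\sep)$ (Urysohn's inequality), is easy and is already used inside Theorem~\ref{thm:meanw}; what is needed here is a partial converse --- that for the polytopes arising from workloads (convex hulls of at most $\usize$ points of $\ell_\infty$-norm at most $1$ in $\R^\qsize$) the Gaussian width cannot exceed the best low-dimensional ``volumetric content'' $\volLB_{\alpha/8}(\sep)$ by more than $\tfrac{\diam(\sep)}{\alpha\sqrt{\qsize}}\cdot\mathrm{polylog}$. Such reverse estimates, relating the Gaussian width of a convex body to volumes of its low-dimensional sections, are precisely what is proved (for the purpose of characterizing sample complexity) in~\cite{cdp}, so the plan is to import the relevant lemma; the care required is in matching its logarithmic and $\alpha$-dependent losses to the $\log(\diam(\sep)/\alpha\sqrt{\qsize})^2$ factor, and in treating the boundary regime $\alpha\approx\diam(\sep)/\sqrt{\qsize}$ (where the logarithm degenerates and $O(1)$ samples --- or the chain bound --- already settle the claim) by hand.
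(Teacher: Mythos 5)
The reduction you set up does not go through, and the gap is precisely the step you flag as "the main obstacle": the geometric inequality $\min\{w(\spoly_\queries)\qsize\alpha,\ w(\spoly_\queries)^2\}\lesssim \diam(\spoly_\queries)\log(\diam(\spoly_\queries)/\alpha\sqrt{\qsize})^2\cdot(\text{your volumetric quantity at scale }\alpha/8)$ is false in general, so the single-run version you commit to ("which already suffices") does not suffice. The problem is that $w(\spoly_\queries)$ is sensitive to the structure of $S_\queries=\{\queries(\elem):\elem\in\uni\}$ at scales far below the error scale $\alpha\sqrt{\qsize}$, while $\sc_\eps(\queries,\alpha/8)$ and any packing or volumetric quantity at scale $\gtrsim\alpha$ are not. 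Concretely, take $\alpha$ a small constant, $\log\usize\asymp\qsize$, and let $S_\queries$ consist of two points at distance $\diam(\spoly_\queries)=\sqrt{\qsize}$ together with $\usize-2$ well-spread points on a sphere of radius $\alpha\sqrt{\qsize}/100$ around one of them. Then $w(\spoly_\queries)\gtrsim\alpha\sqrt{\qsize\log\usize}\asymp\alpha\qsize$, so your left-hand side is $\gtrsim\alpha^2\qsize^2$ and the bound of Theorem~\ref{thm:meanw} applied to $\spoly_\queries$ is $\gtrsim\qsize/\eps$; but $\sc_\eps(\queries,\alpha/8)\lesssim 1/(\eps\alpha)$ (collapse the small cluster to its center, which costs normalized error $\alpha/100$, and answer the resulting essentially one-dimensional workload with Laplace noise), so the theorem demands a mechanism with sample complexity $\lesssim\log(1/\alpha)^2/(\eps\alpha^2)$, independent of $\qsize$. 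No reverse-Urysohn estimate relating $w(\spoly_\queries)$ to low-dimensional volumes can repair this, because the excess width lives entirely below scale $\alpha\sqrt{\qsize}$; and your multi-scale parenthetical (truncating $\spoly_\queries$ to balls of growing radius) does not remove that fine-scale structure either.

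The missing idea is to \emph{coarsen the workload} before invoking Theorem~\ref{thm:meanw}, which is what the paper does (Lemma~\ref{lm:opt}, Algorithm~\ref{alg:coarse}): take a maximal $\frac{\alpha\sqrt{\qsize}}{2}$-separated subset $S\subseteq S_\queries$, snap each $\queries(\elem)$ to its nearest point of $S$ to form $\widetilde\queries$ (by maximality and the triangle inequality this perturbs every $\queries(\ds)$ by at most $\frac{\alpha}{2}\sqrt{\qsize}$), and run the mechanism of Theorem~\ref{thm:meanw} on $\widetilde\queries$, whose sensitivity polytope is $\mathrm{conv}(S)$. Because $S$ is $\frac{\alpha\sqrt{\qsize}}{2}$-separated, Lemma~\ref{lm:dudley} bounds $w(S)$ by $\sqrt{\qsize}\,\log\bigl(\diam(S)/\alpha\sqrt{\qsize}\bigr)\sup_{t\ge\alpha/4}t\sqrt{\log \sep(S,t\sqrt{\qsize}B_2^\qsize)}$, i.e., only packing data at scales $\ge\alpha/4$ enters; and Lemma~\ref{lm:packing-singleton} shows exactly these separation numbers lower-bound $\sc_\eps(\queries,\alpha/8)$, so comparing $t^2\log\sep(\cdot)$ with $t\log\sep(\cdot)$ over $t\le\diam(S_\queries)/\sqrt{\qsize}$ yields the factor $\frac{\diam(\spoly_\queries)}{\alpha\sqrt{\qsize}}\log(\cdot)^2$. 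Your overall scaffolding (upper bound from Theorem~\ref{thm:meanw}, packing lower bound, ratio) matches the paper, but without the coarsening step the central inequality you need is simply untrue; with it, the lower bound you want is the paper's Lemma~\ref{lm:packing-singleton} rather than a volumetric bound, and the proof closes.
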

In the case of bounded queries, Theorem~\ref{thm:opt} implies that, as long as \(\alpha\) is a
constant, no other mechanism can achieve much better sample complexity
and more than a constant factor better error at the same time. A similar
result was proved previously for pure differential privacy by Roth for
the SmallDB mechanism~\cite{RothNotes} (see
also~\cite{BunS16,Vadhan17}). Our algorithm runs in time polynomial in
\(\usize\), whereas Roth's algorithm needs to enumerate
over a set of size \(\usize^{1/\alpha^2}\). A similar result is also
known for concentrated differential privacy~\cite{cdp}.

The main idea in the proof of Theorem~\ref{thm:opt} is to run the JL
mechanism on an approximation \(K_\alpha \subseteq \spoly_\queries\)
of \(\spoly_\queries\). We construct \(K_\alpha\) by taking the convex
hull of a net of the extreme points of \(\spoly_\queries\). This
guarantees that any point in \(\spoly_\queries\) is within distance
\(\ll \alpha\sqrt{\qsize}\) from some point in \(K_\alpha\). At the
same time, in some cases, the Gaussian width of \(K_\alpha\) can be a
lot smaller than that of \(\spoly_\queries\), giving us better error
via Theorem~\ref{thm:meanw}.

\section{Preliminaries}

In this section we record some notation and preliminaries that we need
in the rest of the paper.

In terms of general notation, we use \(\|x\|_p = (\sum_{i = 1}^\qsize
|x_i|^p)^{1/p}\) to denote the \(\ell_p\) norm of a vector \(x \in
\R^\qsize\). We use \(B_p^\qsize = \{x \in \R^\qsize: \|x\|_p \le
1\}\) to denote the unit \(\ell_p\) ball in \(\R^\qsize\). We use
\(\|A\|_F = \sqrt{\tr(AA^T)}\) for the Frobenius norm of a matrix
\(A\), which is just its \(\ell_2\) norm when treated as a vector.

For a set
\(S \subseteq \R^\qsize\) and a real number \(t\), we use \(tS = \{tx: x\in S\}\) to denote
scaling. The Minkowski sum of two sets \(S, T \subseteq \R^\qsize\)
is denoted \(S + T = \{x+y: x\in S, y \in T\}\), and the special case
when one of the sets is a singleton is denoted \(x + S\) rather than
the more cumbersome \(\{x\} + S\).

We use \(A \otimes B\) to denote the Kronecker product of two matrices
\(A \in \R^{m \times n}\) and \(B \in \R^{k \times \ell}\), which is a
\(mk \times n\ell\) matrix whose rows and columns are indexed,
respectively, by pairs of rows and columns of \(A\) and \(B\). The
entry \((i,p),(j,q)\) of \(A\otimes B\) is equal to
\(a_{i,j}b_{p,q}\). We also use \(A^{\otimes w}\) for the \(w\)-fold
Kronecker product of \(A\) with itself. An important property of
Kronecker products is that \((A\otimes B)(x\otimes y) =
(Ax)\otimes (By)\) for matrices \(A,B\) and vectors \(x,y\) of
compatible dimensions. 

\subsection{Statistical Queries and Differential Privacy}

Let us recall the main notation on statistical queries from the Introduction.
We consider datasets $\ds \in \uni^\dsize$, and query workloads $\queries = \{\query_1, \ldots, \query_\qsize\}$, where each $\query_i$ is defined by a function $\query_i:\uni \to \R$, and, overloading notation, the query itself is defined by
\[
\query_i(\ds) = \frac{1}{\dsize}\sum_{\dsrow \in \ds}{\query_i(\dsrow)}.
\]
We say the queries in the workload are bounded if, for each \(i\in [\qsize]\) and each \(\dsrow\in \uni\), \(|\query_i(\dsrow)| \le 1\).
Again overloading notation, we write \(\queries(\ds)\) for the vector \((\query_1(\ds), \ldots, \query_\qsize(\ds))\), and, for any \(\dsrow \in \uni\), we write $\queries(\dsrow)$ for \((\query_1(\dsrow), \ldots, \query_\qsize(\dsrow))\). Then, approximating \(\queries(\ds)\) is equivalent to approximating the mean of \(\qsize\)-dimensional vectors:
\[
\queries(\ds) = \frac{1}{\dsize}\sum_{\dsrow \in \ds}{\queries(\dsrow)}.
\]
We define the sensitivity polytope \(\spoly_\queries =
\mathrm{conv}\{\queries(\dsrow): \dsrow \in \uni\}\), containing all
possible values of \(\queries(\ds)\) for all possible dataset
sizes. 

We define the \(\ell_p\) error of a mechanism \(\mech\) on a workload \(\queries\) and datasets of size \(n\) as 
\[
\err^{\ell_p}(\mech,\queries,n) = \max_{\ds \in \uni^\dsize} \E\left[ \frac{1}{\sqrt{\qsize}}\|\mech(\ds) - \queries(\ds)\|_p\right].
\]
Above, we assume that on input \(\ds\), the mechanism \(\mech\) outputs a random \(\qsize\)-dimensional vector \(\mech(\ds)\). The sample complexity of the mechanism on queries \(\queries\) is the smallest dataset size for which the mechanism can achieve error at most \(\alpha\), denoted
\[
\sc^{\ell_p}(\mech, \queries, \alpha) = \inf\{n: \err(\mech,\queries,n) \le \alpha\}.
\]
The optimal sample complexity of a workload \(\queries\) under
\(\eps\)-differential privacy is given by \[\sc^{\ell_p}_\eps(\queries,\alpha) = \inf\{\sc^{\ell_p}(\mech, \queries,\alpha): \mech \text{ is } \eps \text{-differentially private}\}.\]

In the rest of the paper we use \(\err(\mech, \queries,n)\),
\(\sc(\mech, \queries, \alpha)\), and \(\sc_\eps(\queries,
\alpha)\) without superscripts to denote,
respectively, the \(\ell_2\) error and the corresponding sample
complexity functions.

In the rest of the paper we call an \(\eps\)-differentially private mechanism for
answering a workload \(\queries\) of \(\qsize\) queries on datasets of
size \(\dsize\) efficient if it runs in polynomial time in
\(\qsize\),\(\dsize\), and \(\frac1\eps\) when given an evaluation
oracle for each query in \(\queries\), and a separation oracle for the
sensitivity polytope \(\spoly_\queries\).

We also mention two important properties of differential privacy:
composition and invariance under post-processing, both captured by the
following lemma. See the monograph of Dwork and Roth~\cite{DR14-monograph} for a proof.
\begin{lemma}\label{lm:composition}
  If \(\mech_1(\ds)\) is an \(\eps_1\)-differentially private mechanism,
  and \(\mech_2(\ds,y)\) is \(\eps_2\)-differentially private with respect to
  \(\ds\) for each \(y\) in the range of \(\mech_1\), then the
  composition \(\mech(\ds) = \mech_2(\ds,\mech_1(\ds))\) is \((\eps_1
  + \eps_2)\)-differentially private. In particular, for any
  \(\eps\)-differentially private mechanism \(\mech\), and any,
  potentially randomized, function \(f\) on the range of \(\mech\),
  \(f(\mech(\ds))\) is \(\eps\)-differentially private.
\end{lemma}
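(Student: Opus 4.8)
The plan is to derive the composition statement directly from the definition of \(\eps\)-differential privacy (in the pure case \(\delta = 0\) that the lemma addresses), and then to read off post-processing as the special case \(\eps_2 = 0\). Fix two datasets \(\ds,\ds'\) differing in at most one element, let \(\dist\) and \(\dist'\) denote the output distributions of \(\mech_1\) on \(\ds\) and \(\ds'\) respectively, and fix a measurable event \(S\) in the range of \(\mech(\ds) = \mech_2(\ds,\mech_1(\ds))\). I would set \(g(y) := \Pr(\mech_2(\ds',y) \in S) \in [0,1]\).

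\emph{Step 1 (condition on \(\mech_1\)).} By the law of total probability,
\[
  \Pr(\mech(\ds) \in S) = \int \Pr(\mech_2(\ds,y)\in S)\, d\dist(y),
  \qquad
  \Pr(\mech(\ds') \in S) = \int g(y)\, d\dist'(y).
\]
\emph{Step 2 (apply \(\eps_2\)-privacy of \(\mech_2\) pointwise).} For every \(y\) in the range of \(\mech_1\), since \(\mech_2(\cdot,y)\) is \(\eps_2\)-differentially private in its dataset argument, \(\Pr(\mech_2(\ds,y)\in S) \le e^{\eps_2} g(y)\); integrating against \(\dist\) gives \(\Pr(\mech(\ds)\in S) \le e^{\eps_2}\int g\, d\dist\). \emph{Step 3 (apply \(\eps_1\)-privacy of \(\mech_1\) inside the integral).} The privacy of \(\mech_1\) says \(\dist(A) \le e^{\eps_1}\dist'(A)\) for every measurable \(A\); since \(g \ge 0\), the layer-cake identity \(\int g\, d\dist = \int_0^1 \dist(\{g > t\})\, dt\) together with \(\dist(\{g>t\}) \le e^{\eps_1}\dist'(\{g>t\})\) for each \(t\) yields \(\int g\, d\dist \le e^{\eps_1}\int g\, d\dist'\). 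Chaining Steps 2 and 3,
\[
  \Pr(\mech(\ds) \in S) \le e^{\eps_1+\eps_2}\int g(y)\, d\dist'(y) = e^{\eps_1+\eps_2}\Pr(\mech(\ds')\in S),
\]
which is exactly \((\eps_1+\eps_2)\)-differential privacy. For the post-processing claim, given an \(\eps\)-differentially private \(\mech\) and a (possibly randomized) map \(f\) on its range, I would view \(f\) as the mechanism \(\mech_2(\ds,y) := f(y)\) that ignores its dataset argument — hence trivially \(0\)-differentially private with respect to \(\ds\) — so that \(f(\mech(\ds)) = \mech_2(\ds,\mech(\ds))\) is \((\eps + 0)\)-differentially private by the composition statement just established.

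The only delicate point is measure-theoretic, not conceptual: one must ensure that \(y \mapsto \Pr(\mech_2(\ds,y) \in S)\) is measurable so that the conditioning in Step 1 and the integrals in Steps 2--3 make sense, and that the layer-cake argument of Step 3 is legitimate when the range of \(\mech_1\) is a general (e.g.\ continuous) probability space rather than a discrete one; in the discrete case everything collapses to finite or countable sums and is immediate, and in general one appeals to the existence of regular conditional distributions. An equivalent and perhaps cleaner route — which also extends transparently to \((\eps,\delta)\)-privacy and the intermediate notions mentioned in the introduction — is to pass to densities: write \(\dist,\dist'\) and \(\mech_2(\ds,\cdot),\mech_2(\ds',\cdot)\) via Radon--Nikodym derivatives against common dominating measures, multiply the two pointwise \(e^{\eps_i}\) density bounds, and integrate. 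I would present the density version and note that the lemma is stated and used only for pure differential privacy here.
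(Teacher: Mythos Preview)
Your argument is correct and is the standard proof of pure-DP composition and post-processing. Note, however, that the paper does not supply its own proof of this lemma: it simply cites the Dwork--Roth monograph~\cite{DR14-monograph}. So there is nothing to compare against beyond observing that what you wrote is essentially the textbook argument one finds there.
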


\subsection{Packing Lower Bounds}

The following packing lower bound for statistical queries is likely
folklore, and was used in the work of Hardt and
Talwar~\cite{HardtT10} (see also Theorem 2.1 in \cite{De12}).

\begin{lemma}\label{lm:packing-general}
  Suppose that there exists a workload \(\queries\) of \(\qsize\)
  queries, and  datasets \(\ds_1, \ldots, \ds_M \in \uni^\dsize\) such that
  for any \(i \neq j\), \(i,j \in [M]\),
  \(
  \frac{1}{\sqrt{\qsize}} \|\queries(\ds_i) - \queries(\ds_j)\|_2
  >
  \alpha,
  \)
  and \(\ds_i\) and \(\ds_j\) differ in at most \(\Delta\)
  elements. Then, as long as \(\Delta \le \frac{\log(M/2)}{\eps}\),
  we have
  \(
  \sc_\eps(\queries,\alpha/2) \ge \dsize.
  \)
\end{lemma}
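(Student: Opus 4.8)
The plan is to use the standard packing argument: I will show that the hypothesised family of datasets would let an accurate \(\eps\)-differentially private mechanism violate group privacy.

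Suppose for contradiction that \(\sc_\eps(\queries, \alpha/2) < \dsize\), so that there is an \(\eps\)-differentially private mechanism \(\mech\) with \(\err(\mech, \queries, \dsize) \le \alpha/2\); relating the sample-complexity infimum to accuracy at size exactly \(\dsize\) is a routine step I would note but not dwell on. I then post-process \(\mech\) by the nearest-center map \(f(y) = \arg\min_{i \in [M]} \|y - \queries(\ds_i)\|_2\) (ties broken arbitrarily), so that \(f\) takes values in \([M]\) and, by the invariance of differential privacy under post-processing (Lemma~\ref{lm:composition}), \(\ds \mapsto f(\mech(\ds))\) is still \(\eps\)-differentially private. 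The geometric point is that, since the centers \(\queries(\ds_1), \ldots, \queries(\ds_M)\) are pairwise more than \(\alpha\sqrt{\qsize}\) apart in \(\ell_2\), if \(\mech(\ds_i)\) is close enough to \(\queries(\ds_i)\) then \(\queries(\ds_i)\) is its unique nearest center, hence \(f(\mech(\ds_i)) = i\). Applying Markov's inequality to \(\frac{1}{\sqrt{\qsize}}\|\mech(\ds_i) - \queries(\ds_i)\|_2\), whose expectation is at most \(\alpha/2\), I would conclude \(\Pr[f(\mech(\ds_i)) = i] \ge \tfrac12\) for every \(i \in [M]\).

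Next comes the group-privacy step. Fix \(i \in [M]\). Since \(\ds_i\) and \(\ds_1\) differ in at most \(\Delta\) elements, they are joined by a chain of at most \(\Delta\) datasets whose consecutive members differ in at most one element, and iterating the \(\eps\)-differential privacy guarantee of \(\ds \mapsto f(\mech(\ds))\) along this chain yields \(\Pr[f(\mech(\ds_1)) = i] \ge e^{-\eps\Delta}\Pr[f(\mech(\ds_i)) = i] \ge \tfrac12 e^{-\eps\Delta}\). Because the events \(\{f(\mech(\ds_1)) = i\}\), \(i \in [M]\), are disjoint and exhaust the range of \(f\), summing gives
\[
1 \;=\; \sum_{i=1}^{M} \Pr[f(\mech(\ds_1)) = i] \;\ge\; \frac{M}{2}\, e^{-\eps\Delta},
\]
so \(e^{\eps\Delta} \ge M/2\), i.e.\ \(\Delta \ge \frac{\log(M/2)}{\eps}\), contradicting the hypothesis \(\Delta \le \frac{\log(M/2)}{\eps}\) — which is precisely where the bound on \(\Delta\) enters.

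I do not expect a genuine obstacle here; the argument is standard. The one place to be attentive is the interface between geometry and probability: the "decoding radius" implicit in \(f\) must be small enough that distinct centers decode to distinct indices (this is exactly what the separation \(> \alpha\sqrt{\qsize}\) provides) yet large enough that an accurate mechanism lands within it with probability bounded below by a constant, and tracking these two constants against each other — together with the usual strict-versus-nonstrict care in Markov's inequality — is what pins down the precise numerology (the factors \(\alpha/2\) and \(\log(M/2)\)) in the statement. Everything else is a direct invocation of post-processing and group privacy, i.e.\ of the basic properties of \(\eps\)-differential privacy recorded in Lemma~\ref{lm:composition}.
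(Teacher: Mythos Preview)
The paper does not give its own proof of this lemma; it states it as folklore and cites Hardt--Talwar and De. Your argument is exactly the standard packing argument those references contain, so you are aligned with what the paper defers to.

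One caveat on the numerology you highlight in your last paragraph: with the constants exactly as stated, the Markov step does not deliver \(\Pr[f(\mech(\ds_i))=i]\ge \tfrac12\). Correct decoding requires normalized error at most \(\alpha/2\) (half the separation), but Markov with mean \(\le\alpha/2\) at threshold \(\alpha/2\) yields only the trivial bound; and even granting \(p_i\ge\tfrac12\), the conclusion \(\Delta\ge\log(M/2)/\eps\) does not strictly contradict the nonstrict hypothesis \(\Delta\le\log(M/2)/\eps\). Neither point is serious for the paper's purposes---the lemma is only ever applied up to constants, and both are fixed by, e.g., proving the bound for \(\sc_\eps(\queries,\alpha/4)\) instead, or by exploiting the strict separation to gain an arbitrarily small slack---but your assertion that the stated factors \(\alpha/2\) and \(\log(M/2)\) are ``pinned down'' by this argument is not quite right.
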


The next lemma is a direct consequence of
Lemma~\ref{lm:packing-general}, and is often, but not always how the
packing bound is used. In it, we use the notion of a separation
number: for a compact set \(K\subseteq \R^\qsize\) and a centrally symmetric convex set
\(L\), the separation number \(\sep(K,L)\) is the maximum size of a
finite set \(S \subseteq K\) such that for all distinct \(y,y' \in S\) we
have \(y \not \in y' + L\). 

\begin{lemma}\label{lm:packing-singleton}
  Let \(\queries\) be a workload of \(\qsize\)
  queries, and let \(S_\queries = \{\queries(\elem): \elem \in
  \uni\}\) be the set of vectors of query answers on each universe
  element. Then for any \(\alpha \in (0,t]\),
  \(
  \sc_\eps(\queries,\alpha/2) \gtrsim
  \frac{t\log (\sep(S_\queries, t \sqrt{\qsize} B_2^\qsize)/2)}{\alpha \eps}.
  \)
\end{lemma}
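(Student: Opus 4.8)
The plan is to deduce the bound from Lemma~\ref{lm:packing-general} by turning a maximal separated subset of $S_\queries$ into a ``hub-and-spoke'' family of datasets. Write $M = \sep(S_\queries, t\sqrt{\qsize}\,B_2^\qsize)$ and fix universe elements $\elem_1, \dots, \elem_M$ whose answer vectors $y_i := \queries(\elem_i)$ realize this separation number, so that $\|y_i - y_j\|_2 > t\sqrt{\qsize}$ for all $i \neq j$ (in particular the $\elem_i$ are distinct). It is convenient to assume $M \ge 2e^\eps$, equivalently $\log(M/2)/\eps \ge 1$: this is the regime in which the packing lemma has any bite, and for essentially every workload of interest $M$ is large enough that this is no restriction. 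Set
\[
k = \left\lfloor \frac{\log(M/2)}{\eps} \right\rfloor \ge 1,
\qquad
\dsize = \left\lfloor \frac{kt}{\alpha} \right\rfloor,
\]
and note $\dsize \ge k$ because $\alpha \le t$. For $i \in [M]$, let $\ds_i \in \uni^\dsize$ be the dataset consisting of $k$ copies of $\elem_i$ together with $\dsize - k$ copies of $\elem_1$ (so $\ds_1$ is just $\dsize$ copies of $\elem_1$).

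Now I would verify the two hypotheses of Lemma~\ref{lm:packing-general}. Since $\queries(\ds_i) = \frac{k}{\dsize} y_i + \frac{\dsize - k}{\dsize} y_1$, we have $\queries(\ds_i) - \queries(\ds_j) = \frac{k}{\dsize}(y_i - y_j)$ for all $i, j$, hence
\[
\frac{1}{\sqrt{\qsize}} \|\queries(\ds_i) - \queries(\ds_j)\|_2
= \frac{k}{\dsize} \cdot \frac{\|y_i - y_j\|_2}{\sqrt{\qsize}}
> \frac{kt}{\dsize}
\ge \alpha ,
\]
using $\dsize \le kt/\alpha$ in the last step. Moreover, $\ds_i$ and $\ds_j$ can be transformed into one another by changing $k$ data points, so they differ in at most $\Delta := k$ elements, and $\Delta = k \le \log(M/2)/\eps$ by the choice of $k$. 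Lemma~\ref{lm:packing-general}, applied with packing parameter $\alpha$, then gives $\sc_\eps(\queries, \alpha/2) \ge \dsize$. Since $\log(M/2)/\eps \ge 1$ forces $k \ge \tfrac12 \log(M/2)/\eps$, and $kt/\alpha \ge t/\alpha \ge 1$ forces $\dsize \ge \tfrac12 kt/\alpha$, we conclude $\sc_\eps(\queries, \alpha/2) \ge \dsize \ge \frac{t\log(M/2)}{4\alpha\eps}$, which is the claimed bound.

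The point that requires care is the tension between the two requirements of Lemma~\ref{lm:packing-general}: enlarging the number $k$ of ``active'' copies widens the packing and hence strengthens the conclusion (the lower bound scales linearly in $\dsize \approx kt/\alpha$), but it also inflates the pairwise Hamming distance, which the lemma caps at $\log(M/2)/\eps$. The choice $k \approx \log(M/2)/\eps$ is exactly the largest value consistent with that cap, and balancing the two is what produces the $\frac{\log(M/2)}{\eps}$ factor. This trade-off is also what restricts the argument to $M \gtrsim e^\eps$; a packing-based bound is unavailable when the separation number is a small constant, but that case is not relevant to any of the applications.
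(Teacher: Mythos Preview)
Your proof is correct and follows essentially the same approach as the paper: build a hub-and-spoke family of datasets by combining $\approx \frac{\log(M/2)}{\eps}$ copies of each separated element $\elem_i$ with filler copies of a fixed element, then invoke Lemma~\ref{lm:packing-general}. The only cosmetic differences are that you use $\elem_1$ as the hub rather than an arbitrary $\elem_0$, and you are more careful than the paper about integrality (floors on $k$ and $\dsize$, and the explicit assumption $M \ge 2e^\eps$), which the paper glosses over.
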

\begin{proof}
  Let \(\elem_1, \ldots, \elem_M\) achieve \(\sep(S_\queries,
  t \sqrt{\qsize} B_2^\qsize)\). This means that, for any two distinct
  \(\elem_i, \elem_j\), \(\frac{1}{\sqrt{\qsize}}\|\queries(\elem_i) -
  \queries(\elem_j)\|_2 > t\).
  Let \(\dsize = \frac{t\log(M/2)}{\alpha \eps}\).
  We pick an arbitrary \(\elem_0\), and define the dataset \(\ds_i\)
  to contain \(\frac{\alpha}{t} \dsize\) copies of \(\elem_i\) and
  \((1-\frac{\alpha}{t})\dsize\) copies of \(\elem_0\). Then \(\Delta
  = \frac{\log(M/2)}{\eps}\), and, for any \(i \neq j\),
  \(\frac{1}{\sqrt{\qsize}}\|\queries(\ds_i) - \queries(\ds_j)\|_2 >
  \alpha\). We can now apply Lemma~\ref{lm:packing-general} to finish
  the proof.
\end{proof}

\subsection{A Johnson-Lindenstrauss Lemma with Additive Error}

Our algorithms rely on a variant of the Johnson-Lindenstrauss (JL) lemma
with additive error, due to Liaw, Mehrabian, Plan, and
Vershynin~\cite{JL-additive}. Their result shows that the pairwise
distances between elements in any bounded (but potentially infinite) set are preserved under
suitable random projection onto a lower dimensional subspace, up to
an additive approximation. Before stating this result precisely, we recall the
definitions of subgaussian random variables and the \(\psi_2\) Orlicz
norm. 

The \(\psi_2\) norm of, respectively, a real-valued random variable,
and an \(\jldim\)-dimensional random vector are defined by
\[
  \|A\|_{\psi_2} = \inf\{t: \E \psi_2(A/z) \le 1\};
  \ \ \ \
  \|V\|_{\psi_2} = \sup_{\theta \in \R^\ell: \|\theta\|_2 = 1} \|\ip{V}{\theta}\|_{\psi_2},
\]
where \(\psi_2(a) = e^{a^2} - 1\).
The random variable is \emph{subgaussian} if its \(\psi_2\) norm is finite.
Background on subgaussian random variables, including other equivalent
definitions, their properties, and examples, can be found in
Vershynin's book~\cite{Vershynin-HDP}. Here we mention two examples:
\begin{itemize}
\item A Gaussian random vector \(G\sim N(0,\sigma^2 I)\) is
  subgaussian with \(\|G\|_{\psi_2} \lesssim \sigma\);
\item A scaled symmetric Bernoulli random vector \(B\), where \(B_i\)
  is chosen independently and uniformly in \(\{-\sigma, +\sigma\}\), is
  subgaussian with \(\|B\|_{\psi_2} \lesssim \sigma\).
\end{itemize}
Next we define the class of random
projections for which the additive JL lemma is known to hold.

\begin{definition}\label{def:jl-matrix}
  A random matrix \(\jl \in \R^{\jldim \times \qsize}\) is a
  \emph{JL-matrix with parameter \(C\)} if it has independent rows
  \(\jl_1, \ldots, \jl_i\), each satisfying
\begin{align*}  \label{eq:jl-assumptions}
  \E \jl_i \jl_i^\top = \frac{1}{\jldim} I;\ \ \ \ 
  \|\jl_i\|_{\psi_2} \le \frac{C}{\sqrt{\jldim}}.
\end{align*}
\end{definition}
Examples of JL-matrices with parameter \(C \lesssim 1\) include matrices
with IID entries, each entry picked either from
\(N(0,\frac{1}{\jldim})\), or uniformly from
\(\left\{-\frac{1}{\sqrt{\jldim}}, \frac{1}{\sqrt{\jldim}}\right\}\).


We can now state the additive JL lemma that we use.
\begin{theorem}[\cite{JL-additive}]\label{thm:jl}
Let \(\jl \in \R^{\jldim \times \qsize}\) be a JL matrix with
parameter \(C\). Then, for any bounded set \(S \subseteq \R^\qsize\), we have that
\[
\E\sup_{y,y'\in S}\left|\|\jl y - \jl y'\|_2  - \|y - y'\|_2
\right|\lesssim \frac{C^2 w(S)}{\sqrt{\ell}}.
\]
\end{theorem}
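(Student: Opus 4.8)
The plan is to follow the approach of Liaw, Mehrabian, Plan, and Vershynin: reduce to a one-set statement, prove a sub-Gaussian bound on the increments of a suitable random process by Bernstein's inequality, and then apply Talagrand's generic chaining together with the majorizing measure theorem. First I would replace $S$ by the difference set $T := S - S = \{y - y' : y, y' \in S\}$. Since $T$ is symmetric, $\sup_{y,y'\in S}\big|\|\jl y - \jl y'\|_2 - \|y - y'\|_2\big| = \sup_{x\in T}\big|\|\jl x\|_2 - \|x\|_2\big|$, and since $-G \sim G$ we have $w(T) = 2w(S)$; so it suffices to bound $\E\sup_{x\in T}\big|\|\jl x\|_2 - \|x\|_2\big|$ by $O\!\big(C^2 w(T)/\sqrt{\jldim}\big)$. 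I would then work with the random process $X_x := \|\jl x\|_2 - \|x\|_2$, $x \in T$; as $0 \in T$ and $X_0 = 0$, the target equals $\E\sup_{x\in T}|X_x - X_0|$, which is precisely the quantity that generic chaining is designed to control.

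The heart of the argument is an increment bound for $X$. The naive estimate $\big|\|\jl x\|_2 - \|\jl y\|_2\big| \le \|\jl(x-y)\|_2$ is far too lossy --- it leaves a deterministic term of order $\|x-y\|_2$ which, carried through the chaining, would cost $w(T)$ rather than $C^2 w(T)/\sqrt{\jldim}$. Instead I would use $\|\jl x\|_2 - \|\jl y\|_2 = \dfrac{\|\jl x\|_2^2 - \|\jl y\|_2^2}{\|\jl x\|_2 + \|\jl y\|_2}$ and observe that the numerator equals $\sum_{i=1}^{\jldim}\ip{\jl_i}{x-y}\ip{\jl_i}{x+y}$, a sum of $\jldim$ independent variables whose means total $\|x\|_2^2 - \|y\|_2^2$ and each of which, being a product of two $\psi_2$ random variables, has $\psi_1$-norm $\lesssim \tfrac{C^2}{\jldim}\|x-y\|_2\|x+y\|_2$ by the JL-matrix assumptions of Definition~\ref{def:jl-matrix}. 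Bernstein's inequality then gives concentration of this chaos around its mean at scale $\tfrac{C^2}{\sqrt{\jldim}}\|x-y\|_2\|x+y\|_2$; combined with the single-vector concentration of the Euclidean norm (the same Bernstein estimate with $y = 0$), dividing by the denominator and using $\|x+y\|_2 \le \|x\|_2 + \|y\|_2$ yields $\|X_x - X_y\|_{\psi_2} \lesssim \tfrac{C^2}{\sqrt{\jldim}}\|x-y\|_2$ on the event $\big\{\|\jl x\|_2 + \|\jl y\|_2 \gtrsim \|x\|_2 + \|y\|_2\big\}$, which fails with probability only $e^{-\Omega(\jldim/C^4)}$; on the complementary event I would fall back on the deterministic Lipschitz bound $|X_x - X_y| \le (1 + \|\jl\|_{\mathrm{op}})\|x-y\|_2$.

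Granting the sub-Gaussian increment estimate, I would run generic chaining along an admissible sequence of $T$ achieving $\gamma_2(T, \|\cdot\|_2)$; on the ``good'' event this contributes $\lesssim \tfrac{C^2}{\sqrt{\jldim}}\gamma_2(T, \|\cdot\|_2)$, and the ``bad''-event contribution I would absorb using the Lipschitz bound together with the light tails of $\|\jl\|_{\mathrm{op}}$ and a multiscale balancing of the chaining links. This yields $\E\sup_{x\in T}|X_x| \lesssim \tfrac{C^2}{\sqrt{\jldim}}\gamma_2(T, \|\cdot\|_2)$, and Talagrand's majorizing measure theorem finally gives $\gamma_2(T, \|\cdot\|_2) \asymp w(T) = 2w(S)$, which completes the proof after substituting back.

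The step I expect to be the real obstacle is the treatment of the low-probability event on which $\jl$ anomalously contracts $x$ or $y$: it is exactly there that the $\sqrt{\jldim}$-scale behaviour of $v \mapsto \|\jl v\|_2$ threatens to re-enter through the random denominator $\|\jl x\|_2 + \|\jl y\|_2$, and since a crude union bound over the continuum of pairs is too weak, controlling it uniformly requires a careful, genuinely multiscale argument --- this is the technical core of~\cite{JL-additive}. Everything else is a routine combination of Bernstein's inequality with the generic chaining and majorizing measure machinery.
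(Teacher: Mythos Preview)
The paper does not prove Theorem~\ref{thm:jl}; it is stated with a citation to~\cite{JL-additive} and used as a black box throughout, so there is no in-paper proof to compare your proposal against. What you have written is a faithful high-level outline of the argument in~\cite{JL-additive} itself: reduce to the difference set, study the centered process \(X_x=\|\jl x\|_2-\|x\|_2\), control increments via the factorization \(\|\jl x\|_2^2-\|\jl y\|_2^2=\sum_i\ip{\jl_i}{x-y}\ip{\jl_i}{x+y}\) together with Bernstein, and finish with generic chaining and the majorizing measure theorem. You also correctly single out the random denominator \(\|\jl x\|_2+\|\jl y\|_2\) as the delicate point. One remark on that step: in~\cite{JL-additive} the issue is not handled by a good-event/bad-event split as you sketch, but rather by establishing an unconditional sub-Gaussian increment bound \(\|X_x-X_y\|_{\psi_2}\lesssim (C^2/\sqrt{\jldim})\|x-y\|_2\) directly, via a lemma that a centered sum of independent sub-exponential variables, once divided by the square root of the corresponding sum of squares, is sub-Gaussian; this sidesteps the conditioning altogether and makes the chaining step routine. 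Your dichotomy would also work in principle but is more laborious than what the source actually does.
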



In the remainder of this section, we derive a useful consequence of
Theorem~\ref{thm:jl}. We prove that, for any bounded
set \(S\), the expected Gaussian width of \(\jl S\) is bounded by the
Gaussian width of \(S\). To this end,  we need a lemma about
Gaussian processes, stating that contracting distances does not
increase the Gaussian mean width. For a short proof of a slightly
weaker lemma (which would suffice for our application), see Corollary 3.14 in Ledoux and Talagrand's book~\cite{LedTal}.

\begin{lemma}[Theorem 3.15~\cite{LedTal}]\label{lm:gauss-contraction}
Let \(S \subseteq \R^\qsize\) be a bounded set, and let \(f:\R^\qsize \to \R^\ell\) be a function such that
\[
\forall y,z \in S:
\|f(y) - f(z)\|_2 \le C \|y-z\|_2.
\]
Then \(w(f(S)) \le C w(S)\).
\end{lemma}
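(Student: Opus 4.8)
The plan is to recognize both Gaussian widths in the statement as expected suprema of centered Gaussian processes indexed by the same set $S$, and then invoke the Sudakov--Fernique comparison inequality. First I would dispense with the standard preliminaries: since $f$ is $C$-Lipschitz on $S$ and $S$ is bounded, the image $f(S)$ has finite diameter, hence $w(f(S))$ is finite (for any bounded $T$ one has $w(T)\le \diam(T)\,\E\|G'\|_2<\infty$), and $w$ is translation-invariant because $\E\ip{x}{G}=0$. Then, by the routine reduction to finite subsets --- the Gaussian width of a bounded set is the supremum of the widths of its finite subsets, and every finite subset of $f(S)$ is the image of a finite $S'\subseteq S$ on which $f$ is still $C$-Lipschitz --- it suffices to prove $w(f(S))\le C\,w(S)$ when $S=\{y_1,\dots,y_m\}$ is finite.

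For finite $S$, let $G\sim N(0,I_\qsize)$ and $G'\sim N(0,I_\ell)$ be standard Gaussian vectors, and define two centered Gaussian processes on the index set $[m]$ by $X_i=\ip{f(y_i)}{G'}$ and $Y_i=C\,\ip{y_i}{G}$. Unwinding the definitions, $\E\max_i X_i=w(f(S))$ and $\E\max_i Y_i=C\,w(S)$. Since $\ip{v}{G'}\sim N(0,\|v\|_2^2)$ for any fixed vector $v$, the increments satisfy
\[
\E(X_i-X_j)^2=\|f(y_i)-f(y_j)\|_2^2\le C^2\|y_i-y_j\|_2^2=\E(Y_i-Y_j)^2,
\]
which is exactly the contraction hypothesis. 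Sudakov--Fernique then yields $\E\max_i X_i\le \E\max_i Y_i$, i.e.\ $w(f(S))\le C\,w(S)$, which finishes the proof.

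The only nontrivial ingredient is the Sudakov--Fernique inequality itself --- that pointwise domination of the $L^2$-increments of one centered Gaussian process by those of another forces domination of the expected suprema --- which is classical (it follows from Gaussian interpolation) and is precisely the cited Theorem~3.15 of \cite{LedTal}, so in the write-up I would simply invoke it. I expect the only delicate points to be bookkeeping rather than mathematics: the two processes live in ambient spaces of different dimension ($\R^\qsize$ versus $\R^\ell$), but this is immaterial since the comparison refers only to the common index set $[m]$ and to the covariances of the increments; and the finite-subset reduction must be phrased on the image set $f(S)$ rather than on $S$. Finally, since the width functional is an $\E\sup$ and not an $\E\sup|\cdot|$, the plain form of Sudakov--Fernique gives the stated inequality directly --- the weaker Corollary~3.14 of \cite{LedTal} mentioned in the paper would already suffice for the downstream application of this lemma, but is not needed to obtain the lemma as stated.
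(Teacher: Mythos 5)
Your argument is correct and coincides with the paper's treatment: the paper imports this lemma directly as Theorem 3.15 of Ledoux--Talagrand (a Gaussian comparison / Sudakov--Fernique statement), and your write-up is precisely the routine reduction of the set-level contraction claim to that process-level comparison (finite-subset reduction, then checking the increment domination \(\E(X_i-X_j)^2\le\E(Y_i-Y_j)^2\)). Just be aware that this makes your proof a derivation from, not an independent proof of, the cited comparison theorem --- which is exactly how the paper uses it.
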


We use Lemma~\ref{lm:gauss-contraction} to prove a similar result when we allow an additive distortion in pairwise distances.

\begin{lemma}\label{lm:gauss-additive}
Let \(S \subseteq \R^\qsize\) be a bounded set, and let \(f:\R^\qsize \to \R^\ell\) be a function such that
\[
\forall y,y' \in S:
\|f(y) - f(y')\|_2 \le \|y-y'\|_2 + \alpha
\]
for some \(\alpha \ge 0\). Then \(w(f(S)) \lesssim w(S) + \alpha\sqrt{\ell}\).
\end{lemma}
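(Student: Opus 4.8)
The plan is to sidestep the natural but lossy ``lifting'' trick (appending an extra coordinate to each point of \(S\) so that \(f\) becomes genuinely Lipschitz on the augmented set) and instead split \(S\) at the scale \(\alpha\). Assume \(\alpha > 0\), since for \(\alpha = 0\) the claim is exactly Lemma~\ref{lm:gauss-contraction}. First I would fix a maximal subset \(N \subseteq S\) whose distinct points are at Euclidean distance strictly greater than \(\alpha\); because \(S\) is bounded, such an \(N\) is finite, and maximality forces \(N\) to be an \(\alpha\)-net of \(S\), i.e.\ every \(y \in S\) has a point \(\pi(y) \in N\) with \(\|y - \pi(y)\|_2 \le \alpha\). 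Since \(N \subseteq S\), monotonicity of the Gaussian width gives \(w(N) \le w(S)\).

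For the ``coarse'' scale, note that any two distinct \(y, y' \in N\) satisfy \(\|f(y) - f(y')\|_2 \le \|y - y'\|_2 + \alpha < 2\|y - y'\|_2\), so \(f\) is \(2\)-Lipschitz on \(N\), and Lemma~\ref{lm:gauss-contraction} gives \(w(f(N)) \le 2 w(N) \le 2 w(S)\). For the ``fine'' scale, write \(\ip{f(y)}{G} = \ip{f(\pi(y))}{G} + \ip{f(y) - f(\pi(y))}{G}\), take the supremum over \(y \in S\) (using \(\pi(y)\in N\)), and then the expectation over a standard Gaussian \(G\) in \(\R^\ell\); this yields
\[
w(f(S)) \le w(f(N)) + \E \sup_{y \in S} \ip{f(y) - f(\pi(y))}{G}.
\]
Because \(\|f(y) - f(\pi(y))\|_2 \le \|y - \pi(y)\|_2 + \alpha \le 2\alpha\), the set of residuals \(\{f(y) - f(\pi(y)) : y \in S\}\) lies in \(2\alpha B_2^\ell\), so the last expectation is at most \(2\alpha\,\E\|G\|_2 \le 2\alpha\sqrt{\ell}\). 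Combining the two pieces gives \(w(f(S)) \le 2 w(S) + 2\alpha\sqrt{\ell} \lesssim w(S) + \alpha\sqrt{\ell}\).

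The step I expect to matter most is the decision to route the coarse scale through an \(\alpha\)-separated net rather than through a lift: the obvious lift (send \(y\) to \((y, z_y)\) with \(\{z_y\}\) equilateral of edge length \(\approx \alpha\)) does make the map genuinely \(O(1)\)-Lipschitz, but the Gaussian width it introduces is that of a simplex on \(|N|\) vertices, of order \(\alpha\sqrt{\log|N|}\), which cannot be bounded in terms of \(\ell\) alone. The net-plus-residual decomposition replaces \(\log|N|\) by \(\ell\) precisely because the residuals, unlike the lifting coordinates, are confined to a ball in the \(\ell\)-dimensional target space of \(f\). The only routine points left to check are the existence and finiteness of the net (immediate from boundedness in \(\R^\qsize\)) and monotonicity of \(w\) under inclusion (immediate from monotonicity of the supremum).
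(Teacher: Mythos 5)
Your proof is correct and takes essentially the same approach as the paper's: a maximal separated subset handled by Lemma~\ref{lm:gauss-contraction} (with the additive slack absorbed into a factor \(2\) thanks to the separation), plus residuals confined to a ball of radius \(2\alpha\) in \(\R^\ell\) and bounded via Cauchy--Schwarz by \(2\alpha\sqrt{\ell}\). The only cosmetic difference is that you take the net \(\alpha\)-separated in the domain, whereas the paper picks \(T\subseteq S\) with \(f(T)\) \(2\alpha\)-separated in the image; both give \(w(f(S))\le 2w(S)+2\alpha\sqrt{\ell}\).
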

\begin{proof}
Let \(T\subseteq S\) be an inclusion maximal set such that \(f(T)\) is a \(2\alpha\)-separated subset of \(f(S)\). I.e., for any two distinct points \(y,y' \in T\), we have \(\|f(y)-f(y')\|_2 > 2\alpha\), and, moreover, we cannot add any point of \(S\) to \(T\) while still satisfying this property. This means that for any \(z \in f(S)\) there is some \(z' \in f(T)\) such that \(\|z - z'\|_2 \le 2\alpha\); let us fix one such \(z'\in f(T)\) for each \(z \in f(S)\) and denote it \(\pi(z)\). Taking \(G\) to be a standard Gaussian random vector in \(\R^\ell\), we have
\begin{align}
w(f(S)) &= \E \sup_{z \in f(S)} \ip{z}{G}\notag\\
&\le \E \sup_{z \in f(S)} \ip{\pi(z)}{G} + \E \sup_{z\in f(S)} \ip{z - \pi(z)}{G}\notag\\
&\le \E \sup_{z' \in f(T)} \ip{z'}{G} + \sup_{z\in f(S)} \|z -\pi(z)\|_2 \E\|G\|_2\label{eq:CS}\\
&\le w(f(T)) + 2\alpha \sqrt{\ell}. \label{eq:gauss-net}
\end{align}
Above, inequality \eqref{eq:CS} follows by the Cauchy-Schwarz
inequality. At the same time, we can use
Lemma~\ref{lm:gauss-contraction} with \(T\) and \(f(T)\), because
\(f(T)\) is well-separated. Indeed, for any two distinct \(y,y'\in T\), by assumption, and because \(f(T)\) is \(2\alpha\)-separated, we have 
\[
\|y-y'\|_2 \ge \|f(y)-f(y')\|_2 - \alpha > \frac{1}{2} \|f(y)-f(y')\|_2.
\]
Therefore, by Lemma~\ref{lm:gauss-contraction}, 
\[
w(f(T)) \le 2 w(T) \le 2w(S),
\]
with the final inequality following since \(T \subseteq S\). Combining this bound with \eqref{eq:gauss-net} proves the lemma.
\end{proof}

\begin{lemma}\label{lm:jl-gauss}
Let \(\jl\in \R^{\ell\times \qsize}\) be a JL-matrix with parameter \(C\). Then, for any bounded set \(S\subseteq \R^\qsize\), we have
\[
\E w(\jl S) \lesssim C^2 w(S).
\]
\end{lemma}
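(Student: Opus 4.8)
The plan is to combine the additive Johnson--Lindenstrauss lemma, Theorem~\ref{thm:jl}, with Lemma~\ref{lm:gauss-additive}, by conditioning on the random matrix \(\jl\). First I would fix a realization of \(\jl\) and let \(f(y) = \jl y\). Writing \(E_\jl = \sup_{y,y'\in S}\bigl|\,\|\jl y - \jl y'\|_2 - \|y-y'\|_2\,\bigr|\) for the (random) worst-case additive distortion, we have \(\|f(y)-f(y')\|_2 \le \|y-y'\|_2 + E_\jl\) for all \(y,y'\in S\). Applying Lemma~\ref{lm:gauss-additive} with this \(f\) and \(\alpha = E_\jl\) then gives the pointwise bound \(w(\jl S) \lesssim w(S) + E_\jl\sqrt{\ell}\), with an absolute constant that does not depend on the realization of \(\jl\).

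Next I would take expectations over \(\jl\). By linearity, \(\E\, w(\jl S) \lesssim w(S) + \sqrt{\ell}\,\E\, E_\jl\), and Theorem~\ref{thm:jl} bounds \(\E\, E_\jl \lesssim C^2 w(S)/\sqrt{\ell}\), so the second term is \(\lesssim C^2 w(S)\). To absorb the first term into this, I would observe that any JL-matrix necessarily has parameter \(C \gtrsim 1\): for a unit vector \(\theta\), the assumption \(\E\,\jl_i\jl_i^\top = \frac{1}{\ell} I\) gives \(\E\,\ip{\jl_i}{\theta}^2 = \frac{1}{\ell}\), while subgaussianity forces \(\E\,\ip{\jl_i}{\theta}^2 \lesssim \|\jl_i\|_{\psi_2}^2 \le C^2/\ell\); hence \(C \gtrsim 1\) and \(w(S) \lesssim C^2 w(S)\). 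Combining the two terms yields \(\E\, w(\jl S) \lesssim C^2 w(S)\), as claimed.

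The only things needing a little care are bookkeeping rather than conceptual. One should check that \(w(\jl S)\) and \(E_\jl\) are genuinely measurable functions of \(\jl\); since \(S\) is bounded, both suprema are unchanged when \(S\) is replaced by a fixed countable dense subset, which settles this. One should also confirm that the implied constant in Lemma~\ref{lm:gauss-additive} is absolute and, in particular, does not depend on \(f\) — this is clear from its proof. Finally, for realizations of \(\jl\) with finite entries (which holds almost surely in all cases of interest) the set \(\jl S\) is bounded and \(E_\jl\) is finite, so all the quantities above are well defined. I do not anticipate a real obstacle here: all the work has already been done in Theorem~\ref{thm:jl} and Lemma~\ref{lm:gauss-additive}.
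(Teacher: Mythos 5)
Your proposal is correct and follows essentially the same route as the paper: apply Lemma~\ref{lm:gauss-additive} pointwise in \(\jl\) with the random additive distortion, take expectations and invoke Theorem~\ref{thm:jl}, and absorb the \(w(S)\) term using \(C \gtrsim 1\) (which the paper justifies via the same second-moment versus \(\psi_2\)-norm comparison, citing Vershynin). The measurability and boundedness remarks are fine bookkeeping but not needed beyond what the paper states.
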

\begin{proof}
Lemma~\ref{lm:gauss-additive} implies that pointwise for each \(\jl \in \R^{\ell\times \qsize}\) we have
\[
w(\jl K) \lesssim w(K) + \left(\sup_{y,y'\in K} \left|\|\jl y - \jl y'\|_2 - \|y - y'\|_2\right|\right)\sqrt{\ell}.
\]
Taking expectations of both sides over \(\jl\) and using Theorem~\ref{thm:jl}, we get
\[
\E w(\jl K) \lesssim w(K) + \frac{C^2 w(K)}{\sqrt{\ell}} \sqrt{\ell}
=(1 + C^2)w(K).
\]
This finishes the proof, since \(C \gtrsim 1\) for
Definition~\ref{def:jl-matrix} to hold (as follows, e.g., from item
(ii) of Proposition 2.5.2 in~\cite{Vershynin-HDP}).
\end{proof}

\section{The JL Mechanism}


We consider algorithms that follow a general template, presented in
Algorithm~\ref{alg:template}. Our main lemma for the analysis of this algorithm is the following.

\begin{algorithm}
\begin{algorithmic}

    \State Pick a JL-matrix \(\jl \in \R^{\jldim \times \qsize}\) with
    parameter \(C \lesssim 1\).
   
    \State \(\widetilde{Y} \leftarrow \jl \queries(\ds) + \frac{1}{\dsize}Z\)
    \Comment{\(Z\in \R^k\) is additive noise that ensures \(\widetilde{Y}\) is \(\varepsilon\)-differentially private.}
    
    \State Output
    \(
    \widehat{Y} \in \arg \min \left\{
    \|\widetilde{Y} - \jl y\|_2: y \in  \spoly_\queries
    \right\}\)
  \end{algorithmic}
  \caption{The JL-Release mechanism template}
  \label{alg:template}
\end{algorithm}

\begin{lemma}\label{lm:template}
Let \(\mech_Z\) be the mechanism described in
Algorithm~\ref{alg:template}, where \(\jl\) is a JL-matrix with
parameter \(C\). Then there exists an absolute constant \(C'>0\) such
that 
\begin{equation}\label{eq:template-error}
\err(\mech_Z,\queries,n) = \E
\frac{1}{\sqrt{\qsize}}\|\widehat{Y} - \queries(\ds)\|_2 \le \frac{C'C^2
  w(\spoly_\queries)}{\sqrt{\qsize \jldim}} + \frac{1}{\sqrt{\qsize}\dsize}\E\|Z\|_2.
\end{equation}
Moreover, if the mechanism that outputs \(\widetilde{Y} = \jl
\queries(\ds) + \frac{1}{\dsize}Z\) on input \(\ds\) satisfies
\(\eps\)-differential privacy (resp.~\((\eps,\delta)\)-differential
privacy, \(\rho\)-zCDP, \(f\)-DP) with respect to \(\ds\), then so
does \(\mech_Z\). \(\mech_Z\) is also efficient assuming that \(Z\)
can be sampled efficiently.
\end{lemma}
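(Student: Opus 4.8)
The plan is to handle the three assertions of the lemma separately --- the error bound \eqref{eq:template-error}, the privacy inheritance, and efficiency --- with essentially all the content lying in the error bound and the rest being bookkeeping.

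For the error bound, fix $\ds$ and condition on the choice of the JL matrix $\jl$, and abbreviate $\Delta_\jl := \sup_{y,y'\in\spoly_\queries}\bigl|\,\|\jl y-\jl y'\|_2-\|y-y'\|_2\,\bigr|$. Two structural facts drive the argument. First, $\queries(\ds)=\frac1\dsize\sum_{\dsrow\in\ds}\queries(\dsrow)$ is a convex combination of the points $\queries(\dsrow)\in\spoly_\queries$, so $\queries(\ds)\in\spoly_\queries$ and hence $\jl\queries(\ds)\in\jl\spoly_\queries$. Second, $\widehat{Y}$ is by construction a minimizer of $\|\widetilde{Y}-\jl y\|_2$ over $y\in\spoly_\queries$, so $\jl\widehat{Y}$ is a Euclidean-closest point of the convex set $\jl\spoly_\queries$ to $\widetilde{Y}$; invoking the contraction property of metric projection onto a convex set --- first-order optimality gives $\ip{\widetilde{Y}-\jl\widehat{Y}}{a-\jl\widehat{Y}}\le 0$ for every $a\in\jl\spoly_\queries$, hence $\|\jl\widehat{Y}-a\|_2\le\|\widetilde{Y}-a\|_2$ --- with $a=\jl\queries(\ds)$ and the identity $\widetilde{Y}-\jl\queries(\ds)=\frac1\dsize Z$ yields
\[
\|\jl\widehat{Y}-\jl\queries(\ds)\|_2\ \le\ \|\widetilde{Y}-\jl\queries(\ds)\|_2\ =\ \tfrac1\dsize\|Z\|_2 .
\]
Since both $\widehat{Y}$ and $\queries(\ds)$ lie in $\spoly_\queries$, the definition of $\Delta_\jl$ then gives $\|\widehat{Y}-\queries(\ds)\|_2\le\|\jl\widehat{Y}-\jl\queries(\ds)\|_2+\Delta_\jl\le\frac1\dsize\|Z\|_2+\Delta_\jl$; dividing by $\sqrt{\qsize}$, taking expectations over $\jl$ and $Z$, and bounding $\E\,\Delta_\jl\lesssim C^2 w(\spoly_\queries)/\sqrt{\jldim}$ via Theorem~\ref{thm:jl} applied to the bounded set $S=\spoly_\queries$ gives \eqref{eq:template-error} with an absolute constant $C'$. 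The one place that needs care is the use of the projection-contraction inequality rather than a plain triangle inequality: the latter would cost a factor $2$ on the $\frac1{\sqrt{\qsize}\dsize}\E\|Z\|_2$ term, whereas the contraction keeps its coefficient equal to $1$ as stated.

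For privacy, the point is that $\jl$ is drawn from a distribution that does not depend on $\ds$, and that for each fixed realization of $\jl$ the output $\widehat{Y}$ is obtained from $\widetilde{Y}$ by the deterministic map $y\mapsto\arg\min\{\|y-\jl z\|_2:z\in\spoly_\queries\}$, which depends only on $\jl$ and $\spoly_\queries$ and not on $\ds$. So, conditioned on $\jl$, whenever $\ds\mapsto\widetilde{Y}$ is $\eps$-differentially private the map $\ds\mapsto\widehat{Y}$ is too, by the post-processing part of Lemma~\ref{lm:composition}; the identical argument applies for $(\eps,\delta)$-DP, $\rho$-zCDP and $f$-DP, each of which is likewise closed under post-processing. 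Since this holds for every realization of the data-independent $\jl$, and each of these notions is also preserved under mixing over data-independent randomness, $\mech_Z$ inherits whatever guarantee the mechanism producing $\widetilde{Y}$ satisfies. For efficiency, sampling $\jl$ is immediate from Definition~\ref{def:jl-matrix}, forming $\jl\queries(\ds)$ uses $\qsize$ query evaluations and a matrix-vector product, $Z$ is sampled efficiently by hypothesis, and computing $\widehat{Y}$ is the minimization of the convex function $y\mapsto\|\widetilde{Y}-\jl y\|_2$ over $\spoly_\queries$, which can be carried out in time polynomial in $\qsize,\dsize,\frac1\eps$ and in the running time of a separation oracle for $\spoly_\queries$ by standard convex optimization (e.g.\ the ellipsoid method), up to the usual negligible finite-precision errors.
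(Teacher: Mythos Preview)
Your proposal is correct and follows essentially the same approach as the paper: both arguments use the contraction property of Euclidean projection onto the convex set \(\jl\spoly_\queries\) to obtain \(\|\jl\widehat{Y}-\jl\queries(\ds)\|_2\le\frac{1}{\dsize}\|Z\|_2\), then invoke Theorem~\ref{thm:jl} to pass from \(\|\jl\widehat{Y}-\jl\queries(\ds)\|_2\) back to \(\|\widehat{Y}-\queries(\ds)\|_2\), and handle privacy and efficiency via post-processing and convex optimization respectively. Your remark that the projection-contraction step (rather than a triangle inequality) is what keeps the coefficient on \(\frac{1}{\sqrt{\qsize}\dsize}\E\|Z\|_2\) equal to \(1\) is a nice point of emphasis, and your slightly more careful treatment of averaging over the data-independent \(\jl\) is a welcome clarification, but neither departs from the paper's argument.
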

\begin{proof}
We first claim that 
\begin{equation}
    \label{eq:proj-err}
    \|\jl \widehat{Y} - \jl\queries(\ds)\|_2 \le \|\widetilde{Y} - \jl\queries(\ds)\|_2 = \frac{1}{\dsize}\|Z\|_2.
\end{equation}
This is a standard fact about \(\ell_2\) projection onto a convex
set. Indeed, by first-order optimality conditions, the set \(\{y\in
\R^\qsize: \ip{\jl \widehat{Y} - \widetilde{Y}}{\jl y-\jl \widehat{Y}} \ge 0\}\) contains all of \(\spoly_\queries\). Since \(\queries(\ds) \in \spoly_\queries\), this means that 
\[
 \ip{\widetilde{Y} - \jl \widehat{Y}}{\jl \widehat{Y} - \jl
   \queries(\ds)}
 =
 \ip{\jl \widehat{Y} - \widetilde{Y}}{\jl \queries(\ds)-\jl \widehat{Y}}
\ge 0.
\] We have
\begin{align*}
    \|\widetilde{Y} - \jl\queries(\ds)\|_2^2 
    &= \|\widetilde{Y} - \jl \widehat{Y} + \jl \widehat{Y}-\jl\queries(\ds)\|_2^2\\
    &= \|\widetilde{Y} - \jl \widehat{Y}\|_2^2 + 2 \ip{\widetilde{Y} - \jl \widehat{Y}}{\jl \widehat{Y}-\jl\queries(\ds)} + \|\jl \widehat{Y}-\jl\queries(\ds)\|_2^2\\
    &\ge \|\jl \widehat{Y}-\jl\queries(\ds)\|_2^2.
\end{align*}
This proves \eqref{eq:proj-err}. From \eqref{eq:proj-err} and Theorem~\ref{thm:jl}, we have that
\begin{align}
    \E \|\widehat{Y} - \queries(\ds)\|_2 &\le
    \E\left|\|\jl \widehat{Y}-\jl\queries(\ds)\|_2 - \|\widehat{Y} - \queries(\ds)\|_2\right| + \E\|\jl \widehat{Y}-\jl\queries(\ds)\|_2 \notag\\
    &\le \E \sup_{y,y' \in \spoly_\queries}\left|\|\jl y -\jl y'\|_2 - \|y - y'\|_2\right| + \frac{1}{\dsize}\E\|Z\|_2\label{eq:sup}\\
    &\le \frac{C'C^2
  w(\spoly_\queries)}{\sqrt{\ell}} + \frac{1}{\dsize}\E\|Z\|_2\label{eq:apply-jl}.
\end{align}
Above, in inequality \eqref{eq:sup} we use \eqref{eq:proj-err} and the
fact that \(\widehat{Y}\) and \(\queries(\ds)\) are both elements of
\(\spoly_\queries\), and in inequality \eqref{eq:apply-jl} we use Theorem~\ref{thm:jl}. The desired bound on error now follows by dividing through by \(\sqrt{\qsize}\).

The privacy guarantee follows directly from the post-processing
property of differential privacy
(Lemma~\ref{lm:composition}). Efficiency follows since 
\(\|\widetilde{Y} - \jl y\|_2\), being a convex function, can be minimized over \(y \in
\spoly_\queries\) in time polynomial in \(\qsize\) given a separation
oracle for \(\spoly_\queries\) (see, e.g., the Frank-Wolfe
implementation of this projection step in~\cite{conjunctions}).
\end{proof}

Note that, if \(\jl\) is a JL-matrix with parameter \(C \lesssim 1\),
then, in order make the first term in the  error bound
\eqref{eq:template-error} at  most \(\frac{\alpha}{2}\), it is enough
to choose \(\jldim \gg \frac{w(\spoly_\queries)^2}{\qsize \alpha^2}\).

Finally, let us remark when \(\jl\) is deterministically chosen to be
the identity matrix, Algorithm~\ref{alg:template} reduces to the
Projection Mechanism from~\cite{NTZ}, i.e., we just add noise in order
to preserve differential privacy, and perform a least squares
projection of the noisy query answers onto \(\spoly_\queries\). In the
case of \((\eps, \delta)\)-differential privacy and bounded queries on
a universe of size \(\usize\),
the Projection Mechanism, instantiated with Gaussian noise, achieves
sample complexity (with respect to \(\ell_2\) error) on the order of\(\frac{\sqrt{\log \usize \log
    1/\delta}}{\alpha^2\eps}\), which is also optimal for worst-case
queries~\cite{NTZ,BunUV14}. The same guarantee also follows directly
from Lemma~\ref{lm:template} when \(Z\) is chosen to be Gaussian
noise. Nevertheless, it is interesting that, in that case, the
dimension reduction step appears to be unnecessary. We do not know if it is
necessary for pure differential privacy, i.e., if the Projection
Mechanism can recover optimal sample complexity with noise achieving
pure differential privacy. Nikolov, Talwar, and Zhang analyzed the
Projection Mechanism with K-norm noise~\cite{NTZ}, but their bound is
suboptimal. The main hurdle to getting an optimal bound using their
approach is that one-dimensional marginals of K-norm noise are not subgaussian.

\section{Additive Noise Distributions for Pure Privacy}

Our goal is to instantiate Algorithm~\ref{alg:template} with different
choices of the noise \(Z\). To this end, we recall and refine the
\(K\)-norm mechanisms, introduced by Hardt and Talwar~\cite{HardtT10}.

Let \(K\) be a convex body (convex bounded set with non-empty
interior) in \(\R^\qsize\), which is symmetric around the origin,
i.e., \(K = -K\). Then \(K\) is the unit ball of the norm
\(\|\cdot\|_K\), defined by \(\|y\|_K = \inf\{t \ge 0: y \in
tK\}\). The \(K\)-norm distribution \(\dist_{K,\eps}\) is a
probability distribution on \(\R^\qsize\) with probability density
function
\[
f_K(y) = \frac{\eps^{k}e^{\eps \|y\|_K}}{\Gamma(k+1)\vol(K)}.
\]

We have the following result, which is a very slight strengthening of
a similar result (Theorem 4.3) in~\cite{HardtT10}. Hardt and Talwar
stated it
for the special case \(K = \spoly_\queries\),  but the same proof
establishes the lemma stated below.
\begin{lemma}[\cite{HardtT10}]\label{lm:Knorm-priv}
If \(\spoly_\queries \subseteq K\), then the \(K\)-norm mechanism,
which outputs \(\queries(\ds) + \frac{1}{n} Z\) for \(Z \sim
\dist_{K,\eps}\) on input \(\ds \in \uni^\dsize\), satisfies
\(\eps\)-differential privacy. Moreover, 
\begin{equation}
    \label{eq:Knorm-noise}
    \E \|Z\|_2 \le \frac{\qsize+1}{\eps} \E \|U\|_2
    \ \ \ \ \ \ 
    \E \|Z\|_2^2 \le \frac{(\qsize+2)(\qsize+1)}{\eps^2} \E \|U\|_2^2,
  \end{equation}
where \(U\) is a random vector distributed uniformly in
\(K\). Therefore, the error of the \(K\)-norm mechanism is bounded by
\(\frac{\qsize+1}{\sqrt{\qsize}n\eps} \E \|U\|_2
\lesssim
\frac{\sqrt{\qsize}}{n\eps} \E \|U\|_2.
\)
\end{lemma}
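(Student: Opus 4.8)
The plan is to prove the three assertions in turn: $\eps$-differential privacy, the two moment bounds on the noise $Z$, and then the error bound, which is immediate. For privacy, I would work directly with the density $f_K$. The output $\mech(\ds)=\queries(\ds)+\frac{1}{n}Z$ is a translate of $\frac{1}{n}Z$, so, after the change of variables, its density at a point $a$ equals $n^{\qsize}f_K\!\big(n(a-\queries(\ds))\big)$, with a normalizing constant independent of $\ds$. For a pair of neighboring datasets the factor $n^{\qsize}$ cancels in the ratio of output densities, and the explicit form of $f_K$ makes this ratio equal to $\exp\!\big(\eps(\|n(a-\queries(\ds'))\|_K-\|n(a-\queries(\ds))\|_K)\big)$, which by the triangle inequality for $\|\cdot\|_K$ is at most $\exp\!\big(\eps\,\|n(\queries(\ds)-\queries(\ds'))\|_K\big)$. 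Since $\ds$ and $\ds'$ differ in a single record, $n(\queries(\ds)-\queries(\ds'))$ is a difference of at most two query vectors $\queries(x)$, $x\in\uni$, each of which lies in $\spoly_\queries\subseteq K$; hence its $K$-norm is bounded by a small constant, and calibrating the noise scale in the standard way (as in~\cite{HardtT10}) makes the density ratio at most $e^{\eps}$ at every $a$, giving $\eps$-differential privacy after integrating over an arbitrary measurable set.

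For the moment bounds, the key point is that both $\dist_{K,\eps}$ and $\mathrm{Unif}(K)$ respect the radial structure of $K$. Writing $y=r\theta$ with $r=\|y\|_K$ and $\theta=y/\|y\|_K\in\partial K$, Lebesgue measure on $\R^{\qsize}$ disintegrates as $r^{\qsize-1}\,dr\,d\sigma(\theta)$, where $\sigma$ is the cone measure on $\partial K$. Since $f_K(y)\propto e^{-\eps\|y\|_K}$ depends only on $\|y\|_K$, and the density of $\mathrm{Unif}(K)$ is constant, in both distributions the radial and angular parts are independent, and in both cases the angular part has the normalized cone measure $\bar\sigma$ as its law; the radial part of $Z$ is $\mathrm{Gamma}(\qsize,\eps)$ (density $\propto r^{\qsize-1}e^{-\eps r}$), while the radial part $\|U\|_K$ of a uniform $U\in K$ satisfies $\Pr[\|U\|_K\le t]=t^{\qsize}$ on $[0,1]$. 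Factoring the expectations using independence, for $p\in\{1,2\}$ we get $\E\|Z\|_2^p=\E[\|Z\|_K^p]\,\E_{\theta\sim\bar\sigma}\|\theta\|_2^p$ and $\E\|U\|_2^p=\E[\|U\|_K^p]\,\E_{\theta\sim\bar\sigma}\|\theta\|_2^p$, so the geometry-dependent factor cancels and $\E\|Z\|_2^p/\E\|U\|_2^p=\E[\|Z\|_K^p]/\E[\|U\|_K^p]$. Plugging in the elementary moments $\E\|Z\|_K=\qsize/\eps$, $\E\|Z\|_K^2=\qsize(\qsize+1)/\eps^2$, $\E\|U\|_K=\qsize/(\qsize+1)$, $\E\|U\|_K^2=\qsize/(\qsize+2)$ then gives exactly $\E\|Z\|_2=\frac{\qsize+1}{\eps}\E\|U\|_2$ and $\E\|Z\|_2^2=\frac{(\qsize+1)(\qsize+2)}{\eps^2}\E\|U\|_2^2$.

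Finally, since the $K$-norm mechanism adds noise independent of $\ds$, its $\ell_2$ error is $\frac{1}{n\sqrt{\qsize}}\E\|Z\|_2\le\frac{\qsize+1}{n\sqrt{\qsize}\,\eps}\E\|U\|_2\le\frac{2\sqrt{\qsize}}{n\eps}\E\|U\|_2$, using $\qsize+1\le2\qsize$. I expect the only step needing real care to be the polar/cone-measure disintegration for a general origin-symmetric convex body together with the ensuing independence of the radial and angular parts; this is classical — it is the standard device for sampling from and analysing uniform and $K$-norm distributions (e.g.\ on $\ell_p$ balls) — so I would invoke it rather than reprove it. Everything else reduces to the triangle inequality and elementary Gamma- and Beta-type moment computations.
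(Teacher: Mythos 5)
Your proposal is correct and follows essentially the same (standard Hardt--Talwar) argument that the paper cites rather than reproves: your cone-measure disintegration is equivalent to the representation \(Z \stackrel{d}{=} R\,U\) with \(R\sim \Gamma(\qsize+1,\eps)\) independent of \(U\) uniform in \(K\), which yields the stated moment bounds with equality, and the privacy part is the usual density-ratio/triangle-inequality computation. (Two cosmetic points you already implicitly handle: the paper's density should read \(e^{-\eps\|y\|_K}\), and under replacement neighbors the \(K\)-norm sensitivity is \(2\), so strictly one calibrates with \(\eps/2\), affecting only constant factors.)
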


The next lemma follows easily from Lemma~\ref{lm:Knorm-priv}.
\begin{lemma}\label{lm:ball-noise}
For any workload \(\queries\) of \(\qsize\) queries, and \(Z\sim
\dist_{\rad(\spoly_\queries)B_2^\qsize, \eps}\), the mechanism
\(\mech\) that outputs \(\queries(\ds) + \frac{1}{\dsize} Z\) is
efficient, \(\eps\)-differentially private and has error
\[
\err(\mech, \queries, \dsize) = \frac{1}{\sqrt{\qsize}\dsize}\E\|Z\|_2 
\lesssim \frac{\sqrt{\qsize}}{n\eps}\rad(\spoly_\queries).
\]
In particular, if \(\queries\) consists of bounded queries, the
mechanism has error
\(
\err(\mech, \queries, \dsize) \lesssim \frac{\qsize}{n\eps}.
\)
\end{lemma}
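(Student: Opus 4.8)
The plan is to obtain this as a direct specialization of Lemma~\ref{lm:Knorm-priv} to the Euclidean ball. Set $K = \rad(\spoly_\queries)\,B_2^\qsize$, so that $\|y\|_K = \|y\|_2/\rad(\spoly_\queries)$, and note that, by the definition of $\rad$, we have $\spoly_\queries \subseteq K$. Hence Lemma~\ref{lm:Knorm-priv} applies verbatim with this choice of $K$ and tells us that the mechanism outputting $\queries(\ds) + \frac{1}{\dsize}Z$ with $Z \sim \dist_{K,\eps}$ is $\eps$-differentially private; it remains only to bound the error and to argue efficiency.

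For the error, I would plug the ball $K$ into the first moment estimate in \eqref{eq:Knorm-noise}. If $U$ is uniform in $K$, then $\|U\|_2 \le \rad(\spoly_\queries)$ with probability one, so $\E\|U\|_2 \le \rad(\spoly_\queries)$ and therefore $\E\|Z\|_2 \le \frac{\qsize+1}{\eps}\rad(\spoly_\queries)$. Since the noise $\frac{1}{\dsize}Z$ is additive and independent of the dataset, the maximum over $\ds$ in the definition of $\err$ is attained trivially, giving $\err(\mech,\queries,\dsize) = \frac{1}{\sqrt{\qsize}\,\dsize}\E\|Z\|_2 \le \frac{\qsize+1}{\sqrt{\qsize}\,\dsize\,\eps}\rad(\spoly_\queries) \lesssim \frac{\sqrt{\qsize}}{\dsize\,\eps}\rad(\spoly_\queries)$, as claimed. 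For the ``in particular'' part, bounded queries give $\|\queries(\dsrow)\|_2 \le \sqrt{\qsize}$ for every $\dsrow\in\uni$, whence $\spoly_\queries\subseteq\sqrt{\qsize}\,B_2^\qsize$, i.e., $\rad(\spoly_\queries)\le\sqrt{\qsize}$; substituting into the bound above yields $\err(\mech,\queries,\dsize)\lesssim\frac{\qsize}{\dsize\,\eps}$.

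For efficiency, the only thing to check is that $\dist_{K,\eps}$ can be sampled in polynomial time when $K$ is a ball; then $\mech$ simply adds $\frac{1}{\dsize}Z$ to $\queries(\ds)$, which costs $O(\qsize)$ arithmetic operations and needs no separation oracle. Here $\dist_{K,\eps}$ is rotationally invariant---its density depends on $y$ only through $\|y\|_2$---so a sample is produced by drawing a direction uniformly from the unit sphere (for instance, a standard Gaussian vector normalized to unit length) and, independently, a magnitude $r = \|Z\|_2$ with density proportional to $r^{\qsize-1}e^{-\eps r/\rad(\spoly_\queries)}$ on $[0,\infty)$, i.e., $\frac{\rad(\spoly_\queries)}{\eps}$ times a $\mathrm{Gamma}(\qsize,1)$ variate; both components are standard to sample in time polynomial in $\qsize$. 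Since every other step is a literal substitution into Lemma~\ref{lm:Knorm-priv}, I expect this verification of the radial sampler to be the only part of the argument requiring any real work, and even that is routine.
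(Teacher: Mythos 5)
Your proposal is correct and matches the paper's intended argument: the paper gives no separate proof, stating only that the lemma ``follows easily from Lemma~\ref{lm:Knorm-priv},'' and your specialization to $K=\rad(\spoly_\queries)B_2^\qsize$ with $\E\|U\|_2\le\rad(\spoly_\queries)$ and $\rad(\spoly_\queries)\le\sqrt{\qsize}$ for bounded queries is exactly that easy derivation. Your explicit radial/Gamma sampler for the ball's $K$-norm distribution correctly fills in the efficiency claim that the paper leaves implicit.
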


While the error in Lemma~\ref{lm:ball-noise} is optimal in some cases,
it can also be far from optimal when the size of the data universe
\(\uni\) is sub-exponential in dimension. In the remainder of this section, we describe a noise
mechanism with better error in this latter setting. The mechanism
closely follows the one given by Hardt and Talwar, but with a
different choice of how to spend the privacy budget, in order to
remove unnecessary logarithmic factors. The error bound and efficiency
guarantees of this optimized mechanism are given by the following lemma.
\begin{lemma}\label{lm:noise-meanw}
For any workload \(\queries\) of \(\qsize\) queries there exists a
noise random variable \(Z\) such that the mechanism \(\mech\) that
outputs \(\queries(\ds) + \frac{1}{\dsize} Z\) is efficient,
\(\eps\)-differentially private and has error
\[
\err(\mech, \queries, \dsize) = \frac{1}{\sqrt{\qsize}\dsize}\E\|Z\|_2 \lesssim 
\frac{w(\spoly_\queries)}{n\eps}.
\]
In particular, if \(\queries\) consists of bounded queries over a
universe of size \(\usize\), then the
mechanism has error
\(
\err(\mech, \queries, \dsize) \lesssim \frac{\sqrt{\qsize\log \usize}}{n\eps}.
\)
\end{lemma}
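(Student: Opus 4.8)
The plan is to instantiate the $K$-norm mechanism from Lemma~\ref{lm:Knorm-priv} not with $K = \spoly_\queries$ directly, but with a carefully scaled version of $\spoly_\queries$, and to control the resulting error $\frac{\sqrt{\qsize}}{n\eps}\E\|U\|_2$ (where $U$ is uniform in $K$) using the Gaussian width $w(\spoly_\queries)$. The key geometric input is \emph{Urysohn's inequality}, which relates the Gaussian width of a convex body to its volume: for a symmetric convex body $K \subseteq \R^\qsize$, the volume radius $\vrad(K) = (\vol(K)/\vol(B_2^\qsize))^{1/\qsize}$ satisfies $\vrad(K) \lesssim \frac{w(K)}{\sqrt{\qsize}}$. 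Combined with the fact that for a random point $U$ uniform in $K$ one has $\E\|U\|_2 \le \rad(K)$, but more usefully $\E\|U\|_2^2$ can be bounded in terms of $\vrad(K)$ when $K$ is in a suitable position, this is what converts a volume bound into the width bound we want.

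First, I would reduce to the symmetric case: replace $\spoly_\queries$ by its symmetrization $K_0 = \mathrm{conv}(\spoly_\queries \cup (-\spoly_\queries))$, which is symmetric, still contains $\spoly_\queries$, and has Gaussian width $w(K_0) \lesssim w(\spoly_\queries)$ (since the sup over $K_0$ is at most twice the sup over $\spoly_\queries \cup (-\spoly_\queries)$, and $w$ of a union of two sets, each a translate/reflection, is controlled by $w(\spoly_\queries)$ up to constants — more precisely $\sup_{y \in K_0}\ip{y}{G} = \max\{\sup_{\spoly_\queries}\ip{\cdot}{G}, \sup_{\spoly_\queries}\ip{\cdot}{-G}\}$, whose expectation is at most $2\,\E\sup_{\spoly_\queries}\ip{\cdot}{G}$ plus a term handling the shift of $\spoly_\queries$ away from the origin). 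Then, following Hardt and Talwar, I would take $K = K_0$ itself (or $K_0$ scaled by a constant) as the body defining the noise, invoke Lemma~\ref{lm:Knorm-priv} to get $\eps$-differential privacy and the bound $\E\|Z\|_2 \lesssim \frac{\qsize}{\eps}\E\|U\|_2$ with $U$ uniform in $K_0$. The crux is then to show $\E\|U\|_2 \lesssim \frac{w(K_0)}{\sqrt{\qsize}} \cdot \frac{1}{\sqrt{\qsize}}\cdot\sqrt{\qsize}$ — i.e., $\frac{\sqrt{\qsize}}{n\eps}\E\|U\|_2 \lesssim \frac{w(\spoly_\queries)}{n\eps}$ after multiplying by $\frac{\sqrt\qsize}{n\eps}$; unpacking, this needs $\frac{\qsize}{n\eps}\E\|U\|_2 \cdot \frac{1}{\sqrt{\qsize}} \lesssim \frac{w(K_0)}{n\eps}$, i.e.\ $\sqrt{\qsize}\,\E\|U\|_2 \lesssim w(K_0)$. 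By Jensen $\E\|U\|_2 \le (\E\|U\|_2^2)^{1/2}$, and for $U$ uniform in a symmetric convex body $\E\|U\|_2^2 \le \rad(K_0)^2$ is too weak in general; instead one uses that $\E\|U\|_2^2 = \frac{\qsize}{\qsize+2}\,\vrad(K_0)^2$ only holds for Euclidean balls, so in general I would instead put $K_0$ in a good position (e.g.\ replace it by $K_0 \cap R B_2^\qsize$ for an appropriate radius $R$, or appeal to the fact that the $K$-norm noise's second moment is governed by $\int_{K_0}\|y\|_2^2\,dy / \vol(K_0)$, which is at most $\rad(K_0)^2$ but whose average-case behavior for "typical" query polytopes is $\lesssim \vrad(K_0)^2$). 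The cleanest route: bound $\E\|U\|_2^2 \le \frac{1}{\vol(K_0)}\int_{K_0}\|y\|_2^2\,dy$, and use that this quantity is at most (a constant times) $\vrad(K_0)^2$ \emph{up to a logarithmic factor in general}, but exactly $\lesssim \vrad(K_0)^2$ is false without isotropicity — so the honest move is to observe that Hardt and Talwar's argument already yields $\E\|Z\|_2 \lesssim \frac{\qsize}{\eps} \cdot \frac{w(K_0)}{\qsize} = \frac{w(K_0)}{\eps}$ via a more refined estimate that $\E\|U\|_2 \lesssim \frac{w(K_0)}{\qsize}$. This last bound, $\E\|U\|_2 \lesssim \frac{\vrad(K_0)}{\text{(something)}}$, follows from Urysohn plus the fact that for $U$ uniform in a symmetric convex body, $\E\|U\|_2 \lesssim \vrad(K_0)$ when $K_0$ is in "minimal mean width position," and one may always reduce to this position since a linear image of $\spoly_\queries$ corresponds to reparametrizing the queries and does not change the problem — but the safest statement is just $\E\|U\|_2 \le \vrad(K_0) \cdot \sqrt{\qsize} \cdot (\text{ratio})$...

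The main obstacle, and the step I would spend the most care on, is exactly this: turning $\E\|U\|_2$ (or $\E\|U\|_2^2$) for $U$ uniform in $\spoly_\queries$'s symmetrization into a clean $\lesssim w(\spoly_\queries)/\qsize$ bound. The tool is Urysohn's inequality $\vrad(K_0) \lesssim w(K_0)/\sqrt{\qsize}$, but one must still pass from $\vrad$ (a volume-radius) to $\E\|U\|_2$ (a first-moment of the Euclidean norm). Over a general convex body these can differ, so either (i) one invokes that $\E\|U\|_2 \le \sqrt{\qsize+1}\,\vrad(K_0)$... no — the correct clean inequality is that for $U$ uniform in any symmetric convex body $K$, $\E\|U\|_2 \lesssim \sqrt{\E\|U\|_2^2}$ and $\E\|U\|_2^2 \le C\,\mathrm{diam}(K)^2$, which is too lossy; the actual fix used in the literature (and surely here) is to note the $K$-norm noise error is controlled by $\E\|Z\|_2^2 \le \frac{(\qsize+2)(\qsize+1)}{\eps^2}\E\|U\|_2^2$ and then bound $\E\|U\|_2^2 = \frac{1}{\vol K}\int_K\|y\|^2 \le \max_{y\in K}\ip{y}{\bar y}$-type quantities, ultimately giving $\sqrt{\E\|U\|_2^2} \lesssim \vrad(K) \lesssim w(K)/\sqrt\qsize$ \emph{provided $K$ is isotropic}, and isotropy can be assumed WLOG by applying a volume-preserving linear map $A$ to all queries (which transforms $\spoly_\queries \mapsto A\spoly_\queries$, preserves the query-release problem up to relabeling, and can only decrease — by a constant — the Gaussian width when $K$ is put in $\ell$-position rather than isotropic position; here one uses that for any convex body there is a linear image with $w(K)\vrad(K^\circ)\lesssim 1$, the $MM^*$-estimate). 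So: symmetrize, apply the optimal linear image (the $M$-position / $\ell$-position of $K_0$), invoke Urysohn and the reverse, get $\E\|U\|_2^2 \lesssim w(K_0)^2/\qsize$, feed into Lemma~\ref{lm:Knorm-priv} to get $\E\|Z\|_2 \lesssim \sqrt{\qsize}\cdot\sqrt{\E\|U\|_2^2}\cdot\frac{1}{\eps} \cdot \frac{1}{\sqrt\qsize}\cdot\sqrt\qsize$... finally $\frac{1}{\sqrt\qsize n}\E\|Z\|_2 \lesssim \frac{w(K_0)}{n\eps}\lesssim\frac{w(\spoly_\queries)}{n\eps}$. The specialization to bounded queries on a universe of size $\usize$ then follows because $\spoly_\queries \subseteq [-1,1]^\qsize$ has $w(\spoly_\queries) = \E\max_{\dsrow\in\uni}\ip{\queries(\dsrow)}{G} \le \max_{\dsrow}\|\queries(\dsrow)\|_2 \cdot \sqrt{2\log\usize} \le \sqrt{\qsize}\sqrt{2\log\usize}$ by a standard maximal inequality over the $\usize$ points $\{\queries(\dsrow):\dsrow\in\uni\}$, and $w$ of a convex hull equals $w$ of the point set.
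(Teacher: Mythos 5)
There is a genuine gap at the step you yourself identify as the crux. Your plan is to run the \(K\)-norm mechanism once, with \(K\) the symmetrization of \(\spoly_\queries\), and then show \(\E\|U\|_2 \lesssim w(K)/\sqrt{\qsize}\) for \(U\) uniform in \(K\), via Urysohn. That inequality is simply false for general symmetric convex bodies: take \(K = [-1,1]\times[-\delta,\delta]^{\qsize-1}\) with \(\delta \ll 1/\qsize\); then \(\E\|U\|_2 \ge \E|U_1| = 1/2\) while \(w(K)/\sqrt{\qsize} \approx 1/\sqrt{\qsize}\). So a single application of Lemma~\ref{lm:Knorm-priv} to the symmetrized sensitivity polytope loses a \(\sqrt{\qsize}\) factor on such ``non-isotropic'' bodies, and this is exactly the obstruction the lemma is designed to overcome. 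Your proposed repair --- pass to isotropic position or \(\ell\)-position ``WLOG'' by a linear map \(A\) applied to the queries --- does not work: the error in this lemma is the Euclidean norm of the noise in the \emph{original} coordinates, and the \(\ell_2\) error is not invariant under non-orthogonal reparametrization (transforming, adding noise for \(A\spoly_\queries\), and mapping back multiplies the noise by \(A^{-1}\), which can blow up \(\E\|A^{-1}Z\|_2\) arbitrarily). The Gaussian width also changes under \(A\), and the \(MM^*\)/\(\ell\)-position route additionally carries logarithmic factors. If the problem really were position-invariant, Hardt and Talwar would not have needed their Section~7 machinery for non-isotropic bodies at all.

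What the paper actually does is open up that machinery: it invokes the Hardt--Talwar/BDKT decomposition (Lemma~\ref{lm:decomp}) of \(\spoly_\queries^{\sym}\) into bodies \(K_i\) living in pairwise orthogonal subspaces \(W_i\) of geometrically decreasing dimension, adds independent \(K_i\)-norm noise in each subspace, and --- this is the new ingredient --- optimizes the split \(\eps_1+\cdots+\eps_m=\eps\) (taking \(\eps_i \propto (\qsize_i\E\|U_i\|_2)^{2/3}\)) to avoid the logarithmic loss of a uniform split (Lemma~\ref{lm:Knorm-noise}). Only then does Urysohn enter, through Lemma~\ref{lm:vollb-meanw}, to bound each volume-lower-bound term by \(\sqrt{\qsize_i}\,w(\spoly^{\sym})\), after which the sum \(\sum_i 2^{-i/3}\) converges. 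Your instinct that Urysohn plus the \(K\)-norm second-moment bound is the engine is right, but it must be applied piecewise along this decomposition, not to the whole body. Two smaller points: the symmetrization bound \(w(K^{\sym}) \le 2w(K)\) requires \(0 \in K\) (consider a single point far from the origin), which is why the paper first translates the workload by \(\queries(\elem_0)\) and adds the constant back as post-processing --- your ``term handling the shift'' needs exactly this argument; and the final specialization to bounded queries via \(w(\spoly_\queries) \lesssim \sqrt{\qsize\log\usize}\) is fine and matches the paper.
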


Towards proving Lemma~\ref{lm:noise-meanw}, let us recall the volume lower bound from Hardt and Talwar's paper. In formulating it, we use the notion of volume radius, defined for a convex body \(K \in \R^\qsize\) as 
\(
\vrad(K) = \left(\frac{\vol(K)}{\vol(B_2^\qsize)}\right)^{1/\qsize}.
\)
In other words, the volume radius of \(K\) is the radius of a Euclidean ball with the same volume as \(K\).
It will also be convenient to treat bounded convex sets that lie in a lower dimensional subspace of \(\R^\qsize\). Let us write \(\vol_W(\cdot)\) for volume (i.e., Lebesgue measure) restricted to a subspace \(W\) of \(\R^\qsize\). Then for a subspace \(W\) of dimension \(\ell\), and a convex body \(K \subseteq W\) (i.e., \(K\) is convex and bounded, and its interior relative to \(W\) is non-empty), we define the volume radius relative to \(W\) as 
\(
\vrad_W(K) = \left(\frac{\vol_W(K)}{\vol(B_2^{\ell})}\right)^{1/\ell}.
\)
We are now ready to formulate the volume lower bounds for a convex body \(K\) in \(\R^\qsize\), given by
\begin{equation*}
    \volLB(K,\ell) = \ell \sup_{W \in G_{\qsize,\ell}} \vrad_W(P_W K).
\end{equation*}
Above \(G_{\qsize,\ell}\) is the set of all \(\ell\)-dimensional
subspaces of \(\R^\qsize\), and \(P_W\) is the orthogonal projection
onto the subspace \(W\). Hardt and Talwar constructed an efficiently
samplable noise distribution whose error can be bounded in terms of
\(\max_{\ell} \volLB(K,\ell)\). Using the error guarantees of their
noise distribution as a black box, together with classical volume
estimates, we can prove an error bound weaker than the one in Lemma~\ref{lm:noise-meanw}
by a polylogarithmic factor. In order to give a tight bound, we
open up and optimize Hardt and Talwar's construction.

It will be convenient to use a monotonicity property of the volume
radii of projections of a convex body \(K\). This property follows
from the Alexandrov-Fenchel inequalities: see Chapter 9 of Pisier's
book~\cite{Pisier-volume} for a proof.
\begin{lemma}[\cite{Pisier-volume}]\label{lm:vnum-monotone}
For any symmetric convex body \(K\subseteq \R^\qsize\), and any \(1 < \ell \le \qsize\),
\[
\sup_{W \in G_{n,\ell}} \vrad_W(P_W K) \le \sup_{W \in G_{n,\ell-1}} \vrad_W(P_W K).
\]
\end{lemma}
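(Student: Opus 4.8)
\emph{Proof plan.} The plan is to reduce the $n$-dimensional statement to a single projection step carried out \emph{inside} a fixed $\ell$-dimensional subspace, and then settle that one step with Cauchy's projection formula and the classical isoperimetric inequality. First I would fix $\epsilon>0$, pick $W_0\in G_{n,\ell}$ with $\vrad_{W_0}(P_{W_0}K)\ge \sup_{W\in G_{n,\ell}}\vrad_W(P_W K)-\epsilon$, identify $W_0$ with $\R^\ell$, and write $L=P_{W_0}K$ (a convex body in $W_0$, full-dimensional since $K$ has nonempty interior; the degenerate case $\vol_{W_0}(L)=0$ is trivial) and $r=\vrad_{W_0}(L)=(\vol_{W_0}(L)/\vol(B_2^\ell))^{1/\ell}$. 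The point to exploit is that any $(\ell-1)$-dimensional subspace $H\subseteq W_0$ is itself an element of $G_{n,\ell-1}$, and, since orthogonal projection onto $H$ factors through the projection onto $W_0$, we have $P_H K=P_H(P_{W_0}K)=P_H L$, hence $\vrad_H(P_H K)=\vrad_H(P_H L)$. So it is enough to prove the one-step inequality: for every convex body $L\subseteq\R^\ell$ there is a hyperplane subspace $H$ of $\R^\ell$ with $\vrad_H(P_H L)\ge \vrad(L)$. Feeding this back gives $\sup_{W\in G_{n,\ell-1}}\vrad_W(P_W K)\ge \vrad_H(P_H K)\ge r\ge \sup_{W\in G_{n,\ell}}\vrad_W(P_W K)-\epsilon$, and letting $\epsilon\to0$ and then taking the supremum over $W_0$ yields the lemma.

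For the one-step inequality I would compare $L$ with the Euclidean ball $L^*=rB_2^\ell$ of the same volume, so that $\vrad(L)=\vrad(L^*)=r$. By Cauchy's projection formula there is a positive constant $c_\ell$, depending only on $\ell$, such that for \emph{every} convex body $M\subseteq\R^\ell$ the average of $\vol_{\ell-1}(P_{\theta^\perp}M)$ over a uniformly random unit vector $\theta$ equals $c_\ell\,\vol_{\ell-1}(\partial M)$. Since the classical isoperimetric inequality gives $\vol_{\ell-1}(\partial L)\ge\vol_{\ell-1}(\partial L^*)$, and since every hyperplane projection of the ball $L^*$ equals $rB_2^{\ell-1}$, we get
\[
\sup_{\theta}\vol_{\ell-1}(P_{\theta^\perp}L)\ \ge\ \E_\theta\,\vol_{\ell-1}(P_{\theta^\perp}L)\ \ge\ \E_\theta\,\vol_{\ell-1}(P_{\theta^\perp}L^*)\ =\ r^{\ell-1}\vol(B_2^{\ell-1}).
\]
Dividing by $\vol(B_2^{\ell-1})$ and taking $(\ell-1)$-st roots gives $\sup_H \vrad_H(P_H L)\ge r=\vrad(L)$; picking a near-maximizing $H$ (or invoking compactness of $S^{\ell-1}$ and continuity of $\theta\mapsto\vol_{\ell-1}(P_{\theta^\perp}L)$) produces the desired hyperplane, with any slack absorbed into the $\epsilon\to0$ limit above.

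I do not expect a serious obstacle here — the statement is essentially classical — so the right way to describe the ``hard part'' is that the only substantive input is the combination of Cauchy's formula with the isoperimetric inequality, while everything else is bookkeeping of the dimension-dependent normalizations: one must check that $\vrad$ relative to an $(\ell-1)$-dimensional subspace of $W_0$ is computed with exactly the same constant $\vol(B_2^{\ell-1})$ as $\vrad$ relative to an $(\ell-1)$-dimensional subspace of $\R^n$, which is immediate since both are Euclidean $(\ell-1)$-volumes. The key conceptual simplification is that, because we only step down by one dimension, a \emph{single} averaging over projections suffices, and the relevant average is (up to a constant) the surface area, so the full Alexandrov--Fenchel machinery is not needed. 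An alternative, which is the route in Pisier's book, keeps the multi-step viewpoint: replace this last argument by Kubota's integral recursion expressing the $(\ell-1)$-st intrinsic volume of $L$ as an average of its hyperplane-projection volumes, together with the log-concavity (hence monotonicity in $j$) of the normalized quermassintegrals $(V_j(L)/V_j(B_2^\ell))^{1/j}$, which is the consequence of the Alexandrov--Fenchel inequalities; either way the engine is that averaging over projections dominates the corresponding quantity for the ball of equal volume.
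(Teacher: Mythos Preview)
Your proposal is correct. The paper itself gives no proof: it simply cites Chapter~9 of Pisier's book and remarks that the monotonicity follows from the Alexandrov--Fenchel inequalities. The standard route there is the one you sketch at the end---Kubota's recursion plus the log-concavity in $j$ of the normalized quermassintegrals $(V_j(L)/V_j(B_2^\ell))^{1/j}$, which needs Alexandrov--Fenchel in full.

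Your main argument is genuinely different and more elementary. By reducing to a single step from dimension $\ell$ to $\ell-1$ inside a near-optimal subspace $W_0$, you only need the one-codimension case, and there Cauchy's surface area formula plus the classical isoperimetric inequality (equivalently, Brunn--Minkowski) suffice. This avoids the full Alexandrov--Fenchel machinery entirely; the price is that your argument is tailored to going down one dimension at a time, whereas the quermassintegral monotonicity gives the comparison between \emph{any} two dimensions in one shot. For the lemma as stated (which is exactly the one-step statement), your approach is cleaner and uses strictly weaker tools. As a bonus, nothing in your argument uses the symmetry hypothesis on $K$, so you actually prove a slightly more general fact.
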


The next lemma is the key technical tool in proving
Lemma~\ref{lm:noise-meanw}. The lemma restates the key claims in the
analysis of Hardt and Talwar's mechanism for non-isotropic
bodies~\cite[Section 7]{HardtT10}. While Hardt and Talwar's result was
conditional on the hyperplane conjecture,\footnote{A recent preprint
  by Klartag and Lehec claims a proof of the conjecture up to a
  polylogarithmic factor~\cite{KL-KLS}.} Bhaskara, Dadush,
Krishnaswamy, and Talwar showed how to make the result
unconditional~\cite{BhaskaraDKT12}. 
\begin{lemma}[\cite{HardtT10,BhaskaraDKT12}]\label{lm:decomp}
For any symmetric convex body \(K\subseteq \R^\qsize\), there exists subspaces \(W_0 = \R^\qsize, W_1, \ldots, W_m\) and symmetric bounded convex sets \(K_1, \ldots, K_m\) with the following properties:
\begin{itemize}
    \item \(\bigcup_{i=1}^m W_i\) spans \(\R^\qsize\);
    \item for all \(i\ge 2\), \(W_i\) is contained in the orthogonal complement of \(W_{i-1}\);
    \item for all \(i\ge 1\), \( \dim W_i = \left\lfloor\frac12 \dim W_{i-1}\right\rfloor\);
    \item for all \(i\ge 1\), \(K_i \subset W_i\) and \(K_i\) is a convex body inside \(W_i\), i.e., has a non-empty interior relative to \(W_i\);
    \item for all \(i\ge 1\), and for \(U_i\) distributed uniformly in \(K_i\), 
    \begin{equation}\label{eq:decomp-err}
        (\dim W_i)^2 \E \|U_i\|_2^2 \lesssim \volLB(K,\dim W_{i-1} - \dim W_i)^2;
    \end{equation}
    \item given a sampling oracle for \(K\), we can sample from each \(K_i\) in time polynomial in \(\qsize\).
\end{itemize}
\end{lemma}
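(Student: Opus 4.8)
\emph{Overview of the plan.} I would prove the lemma by running the Hardt--Talwar recursion: peel off from the current ``residual'' subspace a block $W_i$ of roughly half its dimension on which (the projection of) $K$ is volumetrically as small as the volume lower bound $\volLB(K,\cdot)$ permits, record $K_i$ as that projection, and recurse on the orthogonal complement inside the residual. The second moment of a uniform point in each $K_i$ is then controlled through the isotropic constant of $K_i$, which is where the (now essentially resolved) hyperplane conjecture enters, and which Bhaskara--Dadush--Krishnaswamy--Talwar showed how to sidestep unconditionally.

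\emph{Setting up the recursion.} Put $V_0 = \R^\qsize$ with $\dim W_0 = \qsize$. At level $i \ge 1$ the residual subspace is $V_{i-1} = (W_1 \oplus \cdots \oplus W_{i-1})^{\perp}$, of dimension $\dim W_{i-1}$; set $\dim W_i = \lfloor \tfrac12 \dim W_{i-1}\rfloor$ and let $K^{(i-1)} = P_{V_{i-1}} K$, which is a symmetric convex body inside $V_{i-1}$. Since the dimensions at least halve at each level the recursion stops after $m = O(\log \qsize)$ steps, and by letting the final block exhaust the remaining dimension one gets that $\bigcup_i W_i$ spans $\R^\qsize$; the consecutive orthogonality $W_i \subseteq W_{i-1}^\perp$ is built in.

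\emph{Choosing the block.} The heart of the argument is to select $W_i \subseteq V_{i-1}$ with $\dim W_i = \lfloor \tfrac12 \dim W_{i-1}\rfloor$ so that $K_i := P_{W_i} K^{(i-1)} = P_{W_i} K$ (using $W_i \subseteq V_{i-1}$) is simultaneously volumetrically small -- ideally $\dim W_i \cdot \vrad_{W_i}(K_i) \lesssim \volLB(K,\ \dim W_{i-1}-\dim W_i)$ -- and sufficiently well-rounded that a uniform point $U_i \in K_i$ obeys $\E\|U_i\|_2^2 \lesssim \vrad_{W_i}(K_i)^2$. The intuition is that one can gather inside $V_{i-1}$ a $(\dim W_{i-1}-\dim W_i)$-dimensional subspace carrying, up to constants, the largest volume radius of that dimension, so that the orthogonal complement $W_i$ must be comparatively flat; turning this into a statement with the \emph{right} target dimension in $\volLB$, rather than the automatic but weaker $\volLB(K,\dim W_i)$, is precisely what the monotonicity of volume radii under projection (Lemma~\ref{lm:vnum-monotone}, from the Alexandrov--Fenchel inequalities) together with a Rogers--Shephard-type volume--product inequality -- relating $\vol_{V_{i-1}}(K^{(i-1)})$, $\vol_{W_i}(P_{W_i}K^{(i-1)})$, and the complementary section -- is used for. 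This is the computation done in~\cite[Section 7]{HardtT10}. One then takes $K_i = P_{W_i}K^{(i-1)}$, a symmetric convex body with non-empty relative interior in $W_i$, and recurses on $V_i = V_{i-1}\cap W_i^\perp$.

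\emph{Second moment, and efficiency; the main obstacle.} For $U_i$ uniform in the $(\dim W_i)$-dimensional body $K_i$, the standard identity for the covariance of a uniform point gives $\E\|U_i\|_2^2 \lesssim L_{K_i}^2\,\vrad_{W_i}(K_i)^2$, where $L_{K_i}$ is the isotropic constant of $K_i$; combined with the block-selection bound this yields $(\dim W_i)^2\,\E\|U_i\|_2^2 \lesssim L_{K_i}^2\,\volLB(K,\ \dim W_{i-1}-\dim W_i)^2$, which is \eqref{eq:decomp-err} once $L_{K_i} = O(1)$. Hardt and Talwar obtained $L_{K_i}=O(1)$ from the hyperplane conjecture; to keep the statement unconditional I would instead invoke the refinement of~\cite{BhaskaraDKT12}, who modify the decomposition so that the relevant bodies are automatically placed in near-isotropic position with bounded isotropic constant (alternatively one can absorb the best known quantitative bound on $L_{K_i}$, see~\cite{KL-KLS}, but the BDKT modification is cleaner). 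For efficiency: given a sampling (equivalently membership) oracle for $K$, each level only requires projecting $K$ and searching for a near-optimal half-dimensional block, both polynomial in $\qsize$ by standard convex-geometry algorithmics, and $K_i = P_{W_i}K$ then inherits an efficient sampler; since there are $O(\log\qsize)$ levels the construction is polynomial overall. The main obstacle is the block-selection step -- proving that one can always peel off a half-dimensional subspace whose projected body is as small as the appropriate volume lower bound \emph{and} well-rounded enough for the second-moment estimate -- which requires a careful combination of Alexandrov--Fenchel monotonicity with volume-product inequalities; the secondary difficulty, orthogonal to the first, is making the isotropic-constant input unconditional, which is exactly what the Bhaskara--Dadush--Krishnaswamy--Talwar argument supplies.
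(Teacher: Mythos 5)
The paper does not prove this lemma at all: it is imported verbatim as a restatement of the key claims in Section~7 of Hardt--Talwar, made unconditional by Bhaskara--Dadush--Krishnaswamy--Talwar, and your sketch follows exactly that cited route (recursive peeling of half-dimensional blocks, second moments controlled via the isotropic constant, BDKT to remove the hyperplane-conjecture assumption, plus the standard sampling-based efficiency argument). So your plan is consistent with the paper's treatment; the genuinely hard steps you flag (block selection matching $\volLB(K,\dim W_{i-1}-\dim W_i)$ and the unconditional well-roundedness) are precisely the content deferred to those references, both by you and by the paper.
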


Next we use Lemma~\ref{lm:decomp} to give an optimized version of
Hardt and Talwar's noise distribution.

In what follows, for a convex body \(K\) we use \(K^{\sym}\) for the symmetrized body \(K^\sym = \mathrm{conv}(K \cup -K)\).

\begin{lemma}\label{lm:Knorm-noise}
For any workload \(\queries\) of \(\qsize\) queries there exists a
noise random variable \(Z\) such that the mechanism \(\mech\) that
outputs \(\queries(\ds) + \frac{1}{\dsize} Z\) is efficient, \(\eps\)-differentially private, and has error
\[
\err(\mech, \queries, \dsize) = \frac{1}{\sqrt{\qsize}\dsize}\E\|Z\|_2 \lesssim 
\frac{1}{\sqrt{\qsize}n\eps}\left(\sum_{i = 1}^{\lfloor\log_2 \qsize \rfloor}{\volLB(\spoly^\sym_\queries,\lfloor 2^{-i} \qsize\rfloor)^{2/3}}\right)^{3/2}.
\]
\end{lemma}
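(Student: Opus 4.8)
The plan is to adapt Hardt and Talwar's recursive construction of noise, using the subspace decomposition from Lemma~\ref{lm:decomp} applied to $K = \spoly_\queries^\sym$, but to distribute the privacy budget $\eps$ across the recursion levels more cleverly than in the original, so as to remove the polylogarithmic slack. Concretely, let $W_0 = \R^\qsize, W_1, \ldots, W_m$ and $K_1, \ldots, K_m$ be as guaranteed by Lemma~\ref{lm:decomp} (so $\dim W_i = \lfloor \frac12 \dim W_{i-1}\rfloor$, hence $m = \lfloor \log_2 \qsize\rfloor$ and $\dim W_{i-1} - \dim W_i = \lceil 2^{-i}\qsize\rceil$ up to rounding). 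For each $i$ pick a privacy parameter $\eps_i$ with $\sum_{i=1}^m \eps_i \le \eps$, and sample an independent noise vector $Z_i$ from the $K_i$-norm distribution $\dist_{K_i,\eps_i}$ supported on $W_i$; set $Z = \sum_{i=1}^m Z_i$. Since the $W_i$ together span $\R^\qsize$, adding $\frac1n Z$ to $\queries(\ds)$ can be shown to be $\eps$-differentially private by the same argument as in~\cite{HardtT10,BhaskaraDKT12}: each coordinate-block contributes its own K-norm mechanism relative to a projection of $\spoly_\queries$, and because $\spoly_\queries \subseteq \spoly_\queries^\sym$ and the blocks are orthogonal, the privacy losses compose additively to at most $\sum_i \eps_i \le \eps$. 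Efficiency follows since each $K_i$ admits an efficient sampling oracle by Lemma~\ref{lm:decomp}, and the $K_i$-norm distribution can be sampled given such an oracle.

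Next I would bound $\E\|Z\|_2$. By orthogonality of the $W_i$ and independence of the $Z_i$, $\E\|Z\|_2^2 = \sum_{i=1}^m \E\|Z_i\|_2^2$, so by Cauchy--Schwarz it suffices to control each $\E\|Z_i\|_2$. Applying the second moment bound in~\eqref{eq:Knorm-noise} of Lemma~\ref{lm:Knorm-priv} with $K = K_i$ in the subspace $W_i$ of dimension $\ell_i := \dim W_i$, we get $\E\|Z_i\|_2^2 \lesssim \frac{\ell_i^2}{\eps_i^2}\E\|U_i\|_2^2$ where $U_i$ is uniform in $K_i$. Now invoke the key estimate~\eqref{eq:decomp-err} of Lemma~\ref{lm:decomp}, namely $\ell_i^2 \E\|U_i\|_2^2 \lesssim \volLB(K, \dim W_{i-1} - \dim W_i)^2 = \volLB(\spoly_\queries^\sym, \lceil 2^{-i}\qsize\rceil)^2$. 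Writing $V_i := \volLB(\spoly_\queries^\sym, \lfloor 2^{-i}\qsize\rfloor)$ (the rounding direction being immaterial up to constants, via Lemma~\ref{lm:vnum-monotone}), this gives $\E\|Z_i\|_2^2 \lesssim V_i^2/\eps_i^2$, hence $\E\|Z\|_2 \le \sqrt{\E\|Z\|_2^2} \lesssim \sqrt{\sum_i V_i^2/\eps_i^2}$.

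It remains to choose the $\eps_i$ to minimize $\sum_i V_i^2/\eps_i^2$ subject to $\sum_i \eps_i \le \eps$. This is a routine Lagrange/Hölder computation: the optimum is $\eps_i \propto V_i^{2/3}$, i.e. $\eps_i = \eps\, V_i^{2/3}/\sum_j V_j^{2/3}$, which yields $\sum_i V_i^2/\eps_i^2 = \frac{1}{\eps^2}\left(\sum_i V_i^{2/3}\right)^3$. Therefore $\E\|Z\|_2 \lesssim \frac{1}{\eps}\left(\sum_{i=1}^{\lfloor\log_2\qsize\rfloor} V_i^{2/3}\right)^{3/2}$, and dividing by $\sqrt{\qsize}\,n$ gives exactly the claimed error bound $\frac{1}{\sqrt{\qsize}n\eps}\left(\sum_{i=1}^{\lfloor\log_2\qsize\rfloor}\volLB(\spoly_\queries^\sym,\lfloor 2^{-i}\qsize\rfloor)^{2/3}\right)^{3/2}$.

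The main obstacle I anticipate is not the optimization — that is mechanical — but making the privacy argument for the summed noise $Z = \sum_i Z_i$ fully rigorous: one must verify that perturbing $\queries(\ds)$ by a sum of K-norm noises living in orthogonal subspaces indeed composes to $\eps$-DP with $\eps = \sum \eps_i$, using that each $K_i$ dominates the relevant projection of the sensitivity polytope (this is where the symmetrization $\spoly_\queries^\sym$ and the precise statement of Lemma~\ref{lm:decomp} are needed, following~\cite{HardtT10,BhaskaraDKT12}). A secondary nuisance is bookkeeping the floor/ceiling in $\dim W_{i-1} - \dim W_i$ versus $\lfloor 2^{-i}\qsize\rfloor$ and checking that $\volLB$ changes by at most a constant factor under these perturbations of the index, which follows from Lemma~\ref{lm:vnum-monotone}.
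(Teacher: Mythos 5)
Your construction, privacy argument (per-block K-norm mechanisms on the orthogonal subspaces, composed and then post-processed), and error analysis (orthogonality plus Jensen, the second-moment bound \eqref{eq:Knorm-noise}, the decomposition estimate \eqref{eq:decomp-err}, the rounding fix via Lemma~\ref{lm:vnum-monotone}, and the $2/3$-exponent privacy-budget split) are essentially identical to the paper's proof, and the error and privacy claims go through as you describe. The one deviation is that you choose \(\eps_i \propto \volLB(\spoly^\sym_\queries,\lfloor 2^{-i}\qsize\rfloor)^{2/3}\), which is not obviously computable in polynomial time (it involves a supremum over all subspaces of a given dimension); the paper instead sets \(\eps_i \propto (\qsize_i\,\E\|U_i\|_2)^{2/3}\), which can be estimated by sampling from \(K_i\) and yields the same final bound after applying \eqref{eq:decomp-err}, so your efficiency claim needs this small substitution (or an argument that \(\volLB\) can be approximated) to be fully justified.
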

\begin{proof}
Let \(W_1, \ldots, W_m\) and \(K_1, \ldots, K_m\) be as in
Lemma~\ref{lm:decomp}, used with \(K = \spoly^\sym_\queries\). Let
\(\qsize_i = \lfloor 2^{-i} \qsize\rfloor\). An easy induction
argument shows \(\dim W_i = \qsize_i\), and \(\dim W_m = \qsize_m = 1\).

We define \(Z\) to be distributed as \(Z_1 + \ldots + Z_m\), where
\(Z_i\) is a random variable supported on \(W_i\) distributed
independently according to \(\dist_{K_i,\eps_i}\), for a value of
\(\eps_i\) to be chosen shortly. (This choice of \(\eps_i\) is the
only place where our noise distribution differs from Hardt and
Talwar's.)  Since the subspaces \(W_1, \ldots, W_m\) are pairwise orthogonal, using Jensen's inequality we have
\begin{align}
\E\|Z\|_2 \le (\E\|Z\|_2^2)^{1/2}
&=
\left(\sum_{i = 1}^m \E\|Z_i\|_2^2\right)^{1/2}
\lesssim
\left(\sum_{i = 1}^m \frac{\qsize_i^2}{\eps_i^2}\E\|U_i\|_2^2\right)^{1/2}.\notag
\end{align}
The final inequality above follows from
\eqref{eq:Knorm-noise}. We choose \(\eps_1, \ldots, \eps_m\) so that \(\eps_1 + \ldots +
\eps_m = \eps\), which guarantees that the algorithm is
\(\eps\)-differentially private, as we prove below. The choice
that satisfies this constraint while minimizing the right hand side of
\eqref{eq:err-vollb} is 
\[
\eps_i = \frac{\eps (\qsize_i\E\|U_i\|_2)^{2/3}}{\sum_{i = 1}^m (\qsize_i\E\|U_i\|_2)^{2/3}},
\]
and gives
\begin{equation}\label{eq:err-Ui}
\E\|Z\|_2
\lesssim
\frac{1}{\eps}\left(\sum_{i = 1}^m (\qsize_i\E\|U_i\|_2)^{2/3} \right)^{3/2}.
\end{equation}
Note that this choice of \(\eps_1, \ldots, \eps_m\) can be computed
efficiently by estimating \(\E\|U_i\|_2\) through sampling. Then the
whole algorithm is efficient by Lemma~\ref{lm:decomp}.

Using \eqref{eq:decomp-err} 
we have
\[
\qsize_i^2 \E\|U_i\|_2^2 \lesssim \volLB(\spoly^\sym_\queries, \qsize_{i-1} - \qsize_i)^2.
\] 
Since \(\qsize_{i-1} - \qsize_i = \lceil \qsize_{i-1}/2\rceil\), and
\(\qsize_i \ge 1\), we have \(\qsize_i \le \qsize_{i-1} - \qsize_i \le 2\qsize_i\). By Lemma~\ref{lm:vnum-monotone} we then have
\begin{align*}
  \volLB(\spoly^\sym_\queries, \qsize_{i-1} - \qsize_i)
&=
(\qsize_{i-1} - \qsize_i) \sup_{W \in G_{\qsize,\qsize_{i-1} - \qsize_i}} \vrad_W(P_W K)^{1/(\qsize_{i-1} - \qsize_i)}\\
&=
2\qsize_i  \sup_{W \in G_{\qsize,\qsize_i}} \vrad_W(P_W K)^{1/\qsize_i}\\
&\le 2 \volLB(\spoly^\sym_\queries, \qsize_i).
\end{align*}
Substituting in \eqref{eq:err-Ui} gives
\begin{equation}\label{eq:err-vollb}
\E\|Z\|_2
\lesssim
\frac{1}{\eps}\left(\sum_{i = 1}^m \volLB(\spoly^\sym_\queries, \qsize_i)^{2/3} \right)^{3/2}.
\end{equation}
This implies the claimed bound on the error of the mechanism \(\mech\). To argue privacy, notice that \(\mech(\ds) = \queries(\ds) + \frac{1}{\dsize} Z\) is distributed identically to 
\[
\left(P_{W_1}\queries(\ds) + \frac{1}{\dsize}Z_1\right) + \ldots + \left(P_{W_m} \queries(\ds) + \frac{1}{\dsize}Z_m\right).
\]
Here we use the fact that \(P_{W_1} + \ldots + P_{W_m} = I\) since the
subspaces \(W_1, \ldots, W_m\) are pairwise orthogonal and span
\(\R^\qsize\)\cut{, and also the fact that \(P_{W_i} Z_i = Z_i\), since
\(Z_i\) is supported on \(W_i\)}. Let us define a mechanism
\(\mech_i(\ds) = P_{W_i}\queries(\ds) + \frac{1}{\dsize}Z_i\) for
every \(i\in [m]\). By Lemma~\ref{lm:Knorm-priv}, \(\mech_i\) is
\(\eps_i\)-differentially private. We can see \(\mech\) as a
postprocessing of the composition of \(\mech_1, \ldots, \mech_m\),
and, therefore, by Lemma~\ref{lm:composition}, \(\mech\) satisfies differential privacy with privacy parameter \(\eps_1 + \ldots + \eps_m = \eps\).
\end{proof}

A final tool we need in the proof of Lemma~\ref{lm:noise-meanw} is
Urysohn's inequality, which states that for any bounded set \(K \subseteq \R^\qsize\), we have 
\begin{equation}\label{eq:urysohn}
    \vrad(K) \le \frac{w(K)}{\sqrt{\qsize}}.
\end{equation}

We can now use the inequality~\eqref{eq:urysohn} and the fact that
mean width does not increase under projection to give a bound on the
volume lower bounds, and, thereby, give an upper bound on the
expression for the error in Lemma~\ref{lm:Knorm-noise}.

\begin{lemma}\label{lm:vollb-meanw}
For any convex body \(K\subseteq \R^\qsize\), and any \(1 \le \ell \le \qsize\), we have
\(
\volLB(K, \ell) \le \sqrt{\ell} w(K).
\)
\end{lemma}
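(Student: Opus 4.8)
The plan is to reduce the claim to Urysohn's inequality~\eqref{eq:urysohn}, applied inside an $\ell$-dimensional subspace, together with the fact that an orthogonal projection, being $1$-Lipschitz, does not increase the Gaussian mean width.

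Fix a subspace $W \in G_{\qsize,\ell}$. First I would observe that, since $P_W K \subseteq W$ and $W$ is isometric to $\R^\ell$, the Gaussian mean width of $P_W K$ agrees with the mean width computed intrinsically in $W$: writing $G$ for a standard Gaussian vector in $\R^\qsize$, for any $y \in W$ we have $\ip{y}{G} = \ip{y}{P_W G}$, and $P_W G$ is a standard Gaussian vector on $W$. Hence Urysohn's inequality~\eqref{eq:urysohn}, applied to the set $P_W K$ inside the $\ell$-dimensional space $W$, gives
\[
\vrad_W(P_W K) \le \frac{w(P_W K)}{\sqrt{\ell}}.
\]

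Next I would bound $w(P_W K)$ by $w(K)$. The map $P_W$, restricted to $K$, satisfies $\|P_W y - P_W z\|_2 \le \|y-z\|_2$ for all $y,z \in K$, so it is a contraction, and Lemma~\ref{lm:gauss-contraction} with $C = 1$ yields $w(P_W K) \le w(K)$. Combining the two bounds gives $\vrad_W(P_W K) \le w(K)/\sqrt{\ell}$, and since $W$ was an arbitrary element of $G_{\qsize,\ell}$, taking the supremum and multiplying by $\ell$ gives
\[
\volLB(K,\ell) = \ell \sup_{W \in G_{\qsize,\ell}} \vrad_W(P_W K) \le \ell \cdot \frac{w(K)}{\sqrt{\ell}} = \sqrt{\ell}\, w(K),
\]
as desired.

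There is no serious obstacle here; the only points requiring a little care are the bookkeeping that identifies $\vrad_W$ and the intrinsic width of $P_W K$ in $W$ so that Urysohn is invoked in dimension $\ell$ rather than $\qsize$, and the (standard) fact that the orthogonal projection of a standard Gaussian onto a subspace is again a standard Gaussian on that subspace, which is what makes both the identification and the contraction argument go through.
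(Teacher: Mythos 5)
Your proof is correct and follows essentially the same route as the paper: Urysohn's inequality~\eqref{eq:urysohn} applied intrinsically in the $\ell$-dimensional subspace $W$, combined with the fact that $w(P_W K) \le w(K)$ via the contraction Lemma~\ref{lm:gauss-contraction}. The paper states this tersely; you have simply spelled out the bookkeeping (identifying $P_W G$ with a standard Gaussian on $W$), which is a fine addition but not a different argument.
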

\begin{proof}
The lemma follows directly from~\eqref{eq:urysohn}, and the
observation that \(w(PK) \le w(K)\) for any orthogonal projection
\(P\). The latter fact is well-known, and follows, for example, from Lemma~\ref{lm:gauss-contraction}.
\end{proof}

\begin{proof}[Proof of Lemma~\ref{lm:noise-meanw}]
For a technical reason, which we explain later, we need that \(0\in\spoly_\queries\). Since this is not necessarily the case, we define an auxiliary workload \(\widetilde{\queries}\). Let \(\elem_0 \in \uni\) be arbitrary, and define \(\widetilde{\queries}\) by \(\widetilde{\queries}(\elem) = \queries(\elem) - \queries(\elem_0)\) for all elements \(\elem \in \uni\). Equivalently, we have \(\widetilde{\queries}(\ds) = \queries(\ds) - \queries(\elem_0)\) for all datasets \(\ds \in \uni^\dsize\). Since, by this definition, \(\widetilde{\queries}(\elem_0)= 0\), we have \(0\in \spoly_{\widetilde{\queries}} = \spoly_\queries - \queries(\elem_0)\). 

Let us now take \(Z\) to be the noise distribution guaranteed by Lemma~\ref{lm:Knorm-noise} used with the workload \(\widetilde{\queries}\). By Lemma~\ref{lm:Knorm-noise}, the mechanism \(\widetilde{\mech}\) that outputs \(\widetilde{\queries}(\ds) + \frac{1}{\dsize} {Z}\) is \(\eps\)-differentially private. Then the mechanism \(\mech\) that outputs
\[
\mech(\ds) = \queries(\ds) + \frac{1}{\dsize}Z = \widetilde{\queries}(\ds) + \queries(\elem_0) + \frac{1}{\dsize}{Z}  = \widetilde{\mech}(\ds) + \queries(\elem_0)
\]
is a post-processing of \(\widetilde{\mech}\), and is, therefore, also
\(\eps\)-differentially private. Efficiency follows directly from Lemma~\ref{lm:Knorm-noise}.

It remains to analyze the error of the mechanism, i.e., to bound \(\frac{1}{\sqrt{\qsize} \dsize}\E\|Z\|_2\).  By  Lemmas~\ref{lm:Knorm-noise}~and~\ref{lm:vollb-meanw}, we have
\begin{align*}
    \err(\mech, \queries, \dsize) = \frac{1}{\sqrt{\qsize} \dsize}\E\|Z\|_2
    &\lesssim 
\frac{1}{\sqrt{\qsize}n\eps}\left(\sum_{i = 1}^{\lfloor\log_2 \qsize \rfloor}{\left( \sqrt{2^{-i}\qsize} w(\spoly_{\widetilde{\queries}}^\sym)\right)^{2/3}}\right)^{3/2}\\
&= 
\frac{w(\spoly_{\widetilde{\queries}}^\sym)}{n\eps}\left(\sum_{i = 1}^{\lfloor\log_2 \qsize \rfloor}{2^{-i/3}}\right)^{3/2}\\
&\lesssim
\frac{w(\spoly_{\widetilde{\queries}}^\sym)}{n\eps}.
\end{align*}
In the final line, we used that the sum \(\sum_{i = 1}^{\infty}{2^{-i/3}}\) converges. We then observe that, for a standard Gaussian \(G\) in \(\R^\qsize\),
\begin{multline*}
    w(\spoly_{\widetilde{\queries}}^\sym) = 
    \E \sup_{y \in \spoly_{\widetilde{\queries}} \cup -\spoly_{\widetilde{\queries}}} \ip{G}{y}
    = \E \sup_{y \in \spoly_{\widetilde{\queries}}} |\ip{G}{y}|
    \le 2w(\spoly_{\widetilde{\queries}})
    = 2w(\spoly_\queries - \queries(\elem_0)) =
2 w(\spoly_\queries).
\end{multline*}
For the first inequality, we use \(0 \in
\spoly_{\widetilde{\queries}}\), and Exercise~7.6.9 in
Vershynin. (This is where we crucially rely on
\(\spoly_{\widetilde{\queries}}\) containing the origin, which is why
we needed to introduce the modified workload
\(\widetilde{\queries}\).) For the final two equalities we use
\(\spoly_{\widetilde{\queries}} = \spoly_\queries -
\queries(\elem_0)\) and Proposition 7.5.2, part (ii) in
Vershynin. This completes the proof of the main error bound.

The bound for the case
when \(\queries\) is a workload of bounded queries follows from the estimate
\begin{equation}\label{eq:width-bounded}
w(\spoly_\queries)
= \E \sup_{y \in \spoly_\queries} \ip{G}{y}
= \E \sup_{\elem \in \uni} \ip{G}{\queries(\elem)}
\lesssim \sqrt{\qsize \log \usize},
\end{equation}
where \(G\) is a standard Gaussian in \(\R^\qsize\), the second
equality holds because \(\sup_{y \in \spoly_\queries} \ip{G}{y}\) is
achieved at a vertex of \(\spoly_\queries\), and the inequality holds
from standard Gaussian process estimates (see, e.g., Example 7.5.10
in~\cite{Vershynin-HDP}) and because for bounded queries
\(\|\queries(\elem)\|_2 \le \sqrt{\qsize}\) for all \(\elem \in \uni\).
\end{proof}

\section{Instantiation of the JL Mechanism}

\begin{theorem}\label{thm:main}
For any workload of \(\qsize\) queries \(\queries\) over a universe of size \(\usize\), and any \(0 \le \alpha \le \frac{\diam(\spoly_\queries)}{\sqrt{\qsize}}\),
there exists an efficient \(\eps\)-differentially private mechanism \(\mech\) with sample complexity
\[
\sc(\mech,\queries,\alpha) 
\lesssim 
\frac{w(\spoly_\queries)^2}{{\qsize}\alpha^2\eps}.
\]
In particular, if \(\queries\) is a workload of \(\qsize\) bounded queries over a universe of size \(\usize\), we have
\(
\sc(\mech,\queries,\alpha) 
\lesssim \frac{\log \usize}{\eps \alpha^2}.
\)
\end{theorem}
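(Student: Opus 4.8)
The plan is to instantiate the JL mechanism template (Algorithm~\ref{alg:template}, analyzed in Lemma~\ref{lm:template}) with a carefully chosen dimension \(\jldim\) and a carefully chosen noise distribution \(Z\), and then to add up the two error contributions. Recall from Lemma~\ref{lm:template} that the error of \(\mech_Z\) splits as
\[
\err(\mech_Z,\queries,\dsize) \le \frac{C'C^2 w(\spoly_\queries)}{\sqrt{\qsize\jldim}} + \frac{1}{\sqrt{\qsize}\dsize}\E\|Z\|_2,
\]
so the goal is to make each term at most \(\alpha/2\).

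\textbf{Choosing the projection dimension.} To kill the first term, following the remark after Lemma~\ref{lm:template}, it suffices to take \(\jldim\) to be a large enough constant multiple of \(\frac{w(\spoly_\queries)^2}{\qsize\alpha^2}\); since \(w(\spoly_\queries) \gtrsim \diam(\spoly_\queries) \ge \alpha\sqrt{\qsize}\) under the hypothesis \(\alpha \le \diam(\spoly_\queries)/\sqrt{\qsize}\), this choice gives \(\jldim \ge 1\), so it is a legitimate dimension. (In the bounded-query case, \eqref{eq:width-bounded} gives \(w(\spoly_\queries)^2 \lesssim \qsize\log\usize\), so \(\jldim \lesssim \frac{\log\usize}{\alpha^2}\).)

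\textbf{Choosing the noise.} For the second term, I would run, in the projected space, the optimized \(K\)-norm noise distribution of Lemma~\ref{lm:noise-meanw} applied to the \(\jldim\)-dimensional workload \(\jl\queries\), whose sensitivity polytope is \(\jl\spoly_\queries\). That lemma produces an \(\eps\)-differentially private noise \(Z\in\R^\jldim\) with
\[
\frac{1}{\sqrt{\jldim}\,\dsize}\E\|Z\|_2 \lesssim \frac{w(\jl\spoly_\queries)}{\dsize\,\eps}.
\]
Crucially, \(Z\) is random but \emph{independent of} \(\jl\) in distribution given \(\jl\); I would first condition on \(\jl\), apply Lemma~\ref{lm:noise-meanw} to the (now fixed) projected workload, and then take expectation over \(\jl\) as well, using Lemma~\ref{lm:jl-gauss} to get \(\E_\jl w(\jl\spoly_\queries) \lesssim C^2 w(\spoly_\queries) \lesssim w(\spoly_\queries)\). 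Hence the expected second term (over both sources of randomness) is \(\lesssim \frac{\sqrt{\jldim}}{\sqrt{\qsize}}\cdot\frac{w(\spoly_\queries)}{\dsize\,\eps}\). Substituting \(\sqrt{\jldim} \asymp \frac{w(\spoly_\queries)}{\alpha\sqrt{\qsize}}\) turns this into \(\lesssim \frac{w(\spoly_\queries)^2}{\qsize\,\alpha\,\dsize\,\eps}\), which is at most \(\alpha/2\) precisely when \(\dsize \gtrsim \frac{w(\spoly_\queries)^2}{\qsize\,\alpha^2\,\eps}\), the claimed sample complexity. Privacy and efficiency are inherited directly: \(Z\) gives \(\eps\)-DP for the noisy projected answers by Lemma~\ref{lm:noise-meanw}, and then Lemma~\ref{lm:template} says the full mechanism (including the least-squares lift-back) is \(\eps\)-DP and efficient given a separation oracle for \(\spoly_\queries\) (equivalently, polynomial in \(\qsize,\usize\)). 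The bounded-query specialization then follows from \(w(\spoly_\queries)^2 \lesssim \qsize\log\usize\).

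\textbf{Main obstacle.} The delicate point is the interaction between the two random objects \(\jl\) and \(Z\). The noise bound of Lemma~\ref{lm:noise-meanw} is a statement about a \emph{fixed} convex body, whereas here the body \(\jl\spoly_\queries\) is itself random, so I must be careful to apply Lemma~\ref{lm:noise-meanw} conditionally on \(\jl\), obtaining a bound of the form \(\E[\|Z\|_2 \mid \jl] \lesssim \sqrt{\jldim}\,w(\jl\spoly_\queries)/\eps\), and only then average over \(\jl\) and invoke Lemma~\ref{lm:jl-gauss}. A second minor subtlety is that Lemma~\ref{lm:template}'s error expression already folded the noise term into an unconditional expectation; I need to re-derive it keeping the conditioning straight, and also check that \(\jl\spoly_\queries\) is full-dimensional in \(\R^\jldim\) with probability one (true for \(\jldim \le \qsize\) since \(\spoly_\queries\) is, after the shift in the proof of Lemma~\ref{lm:noise-meanw}, a genuine polytope and \(\jl\) has a.s.\ full row rank), so that the \(K\)-norm construction of Lemma~\ref{lm:noise-meanw} applies. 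Everything else is bookkeeping: combining the two \(\alpha/2\) bounds and reading off the threshold on \(\dsize\).
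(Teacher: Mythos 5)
Your proposal is correct and follows essentially the same route as the paper: instantiate Algorithm~\ref{alg:template} with the noise of Lemma~\ref{lm:noise-meanw} applied to the projected workload \(\jl\queries\), control \(\E\, w(\jl\spoly_\queries)\) via Lemma~\ref{lm:jl-gauss}, and tune \(\jldim\). The only difference is cosmetic: the paper balances the two error terms by setting \(\jldim = \dsize\eps\), whereas you set \(\jldim \asymp w(\spoly_\queries)^2/(\qsize\alpha^2)\); these coincide at the sample-complexity threshold and give the same bound, and your explicit conditioning on \(\jl\) is just the careful reading of what the paper does implicitly.
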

\begin{proof}
The mechanism \(\mech\) applies Lemma~\ref{lm:template} to
\(\queries\), with the noise \(Z\) sampled as in
Lemma~\ref{lm:noise-meanw} applied to the queries \(\jl \queries\),
where \(\jl\) is any JL-matrix with parameter \(C \lesssim 1\). By
Lemmas~\ref{lm:template}~and~\ref{lm:noise-meanw}, \(\mech\) is
\(\eps\)-differentially private, efficient, and its error is bounded as 
\[
\err(\mech,\queries,n)
\lesssim
\frac{w(\spoly_\queries)}{\sqrt{\qsize\jldim}} + \frac{\sqrt{\ell}}{\sqrt{k}n\eps} \E w(\jl \spoly_\queries)
\lesssim \frac{w(\spoly_\queries)}{\sqrt{\qsize\jldim}} + \frac{\sqrt{\ell}w(\spoly_\queries)}{\sqrt{k}\dsize\eps}, %
\]
where we used Lemma~\ref{lm:jl-gauss} in the second inequality. The
right hand side is minimized at \(\ell = {\dsize \eps}\), giving
error
\[
\err(\mech,\queries,n)
\lesssim
\frac{w(\spoly_\queries)}{\sqrt{\qsize n \eps}},
\]
and the sample complexity bound follows from setting \(\dsize\) so
that the right hand side is at most \(\alpha\). 
The bound for the case
when \(\queries\) is a workload of bounded queries follows from the estimate
\eqref{eq:width-bounded}.
\end{proof}

Theorem~\ref{thm:meanw} follows directly from
Lemma~\ref{lm:noise-meanw}, and Theorem~\ref{thm:main}. 
Theorem~\ref{thm:main-worst-case} follows from
Lemmas~\ref{lm:ball-noise}~and~\ref{lm:noise-meanw}, and
Theorem~\ref{thm:main}, since a separation oracle for
\(\spoly_\queries\) can be implemented in time polynomial in
\(\usize\) and \(\qsize\) using linear programming. 

\section{Efficient Mechanisms for 2-way Marginals and Covariance Estimation}

As an application, we consider the problem of estimating the
covariance of a distribution \(\mu\) on \(B_2^d\) in Frobenius norm,
and prove Theorem~\ref{thm:cov}. Note that in this section we do not
necessarily normalize our error.

Theorem~\ref{thm:cov}, as well as an efficient mechanism for computing
2-way marginals, follow from the following theorem.

\begin{lemma}\label{lm:emp-cov}
  There exists an \(\eps\)-differentially private mechanism \(\mech\)
  such that,
  given a dataset \(\ds = (x_1x_1^T, \ldots, x_\dsize x_\dsize^T)\)
  with \(x_1, \ldots, x_n \in rB_2^d\), \(\mech\) outputs a \(d\times d\) matrix
  \(\mech(\ds)\) satisfying
  \[
    \E \left\| \frac{1}{\dsize}\sum_{i = 1}^\dsize x_i x_i^T -
      \mech(\ds)\right\|_F \le \alpha r
  \]
  as long as
  \[
    \dsize \ge C\min\left\{\frac{d^{1.5}}{\alpha\eps},
          \frac{d}{\alpha^2\eps}\right\}
  \]
  for a constant \(C\).
  Moreover, the mechanism runs in time polynomial in
  \(d,\dsize,\frac{1}{\eps}\). 
\end{lemma}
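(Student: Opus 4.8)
The plan is to model covariance estimation as answering a workload of statistical queries and then apply Theorem~\ref{thm:main} together with Lemma~\ref{lm:noise-meanw}. First I would set up the reduction: the universe is $\uni = \{xx^T : x \in rB_2^d\}$ (infinite, but that is fine since Theorem~\ref{thm:main} and Theorem~\ref{thm:meanw} only need a separation oracle for $\spoly_\queries$), the workload $\queries$ consists of the $\qsize = d^2$ queries indexed by pairs $(i,j)\in[d]\times[d]$, with $q_{ij}(xx^T) = (xx^T)_{ij} = x_i x_j$. Then $\queries(\ds) = \frac{1}{\dsize}\sum_i x_ix_i^T$ viewed as a $d^2$-vector, and approximating $\queries(\ds)$ in $\ell_2$ is exactly approximating $\frac{1}{\dsize}\sum_i x_ix_i^T$ in Frobenius norm. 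The sensitivity polytope is $\spoly_\queries = \mathrm{conv}\{xx^T : x \in rB_2^d\}$; by rescaling I may assume $r=1$ and multiply the final error by $r$ at the end.

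Next I would bound the Gaussian mean width $w(\spoly_\queries)$. Since the supremum of a linear functional over a convex hull is attained at an extreme point, for a standard Gaussian $G\in\R^{d\times d}$ we have $w(\spoly_\queries) = \E\sup_{x\in B_2^d}\ip{G}{xx^T} = \E\sup_{x\in B_2^d} x^T G x$. This is (up to the asymmetry of $G$) the expected largest eigenvalue of $\frac12(G+G^T)$, which is a GOE-type matrix, so $w(\spoly_\queries) \lesssim \sqrt{d}$. (Concretely: $\sup_{x\in B_2^d} x^TGx = \lambda_{\max}(\tfrac12(G+G^T)) \le \|G\|_{op}$, and $\E\|G\|_{op}\lesssim\sqrt d$ by standard random matrix estimates, e.g.\ Theorem~4.4.5 in~\cite{Vershynin-HDP}.) Plugging $w(\spoly_\queries)\lesssim\sqrt d$ and $\qsize = d^2$ into Theorem~\ref{thm:main} gives sample complexity $\lesssim \frac{w(\spoly_\queries)^2}{\qsize\alpha^2\eps} \cdot$ --- wait, that reads $\frac{d}{d^2\alpha^2\eps}$, which is too small because Theorem~\ref{thm:main}'s error is normalized by $\sqrt{\qsize}$. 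To get the \emph{unnormalized} Frobenius error $\le\alpha$ I should ask the normalized mechanism for error $\alpha/\sqrt{\qsize} = \alpha/d$, which yields $\dsize \gtrsim \frac{w(\spoly_\queries)^2}{\qsize(\alpha/d)^2\eps} = \frac{w(\spoly_\queries)^2}{\alpha^2\eps}\lesssim\frac{d}{\alpha^2\eps}$. For the other branch I would instead invoke the first term of Theorem~\ref{thm:meanw} (equivalently Lemma~\ref{lm:noise-meanw} fed into Lemma~\ref{lm:template} with $\jl = I$, i.e.\ the plain K-norm/projection mechanism): the unnormalized error of the width-based noise mechanism is $\frac{w(\spoly_\queries)}{\dsize\eps}\lesssim\frac{\sqrt d}{\dsize\eps}$, so error $\le\alpha$ needs $\dsize\gtrsim\frac{\sqrt d}{\alpha\eps}$; multiplying by the diameter-to-width slack $\frac{\diam(\spoly_\queries)}{\sqrt\qsize\cdot\alpha}\lesssim\frac{d^{1/2}}{d\cdot\alpha}$... more cleanly, Theorem~\ref{thm:meanw}'s first term $\frac{w(\spoly_\queries)}{\alpha'\eps}$ with $\alpha' = \alpha/d$ gives $\frac{\sqrt d\cdot d}{\alpha\eps} = \frac{d^{1.5}}{\alpha\eps}$. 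Taking the minimum of the two branches gives the claimed $\dsize\ge C\min\{\frac{d^{1.5}}{\alpha\eps},\frac{d}{\alpha^2\eps}\}$.

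Two technical points remain. First, efficiency: I must check that $\spoly_\queries = \mathrm{conv}\{xx^T : \|x\|_2\le1\}$ admits an efficient separation oracle. This polytope is exactly $\{M \succeq 0 : \tr(M)\le 1\}$ restricted to symmetric matrices (the "spectrahedron" of sub-unit-trace PSD matrices), so separation amounts to an eigendecomposition: given $M$, if $M$ is not symmetric or $\lambda_{\min}(M)<0$ or $\tr(M)>1$ output the corresponding separating hyperplane. This runs in time polynomial in $d$, so Theorem~\ref{thm:main} (or \ref{thm:meanw}) gives a mechanism running in time polynomial in $d,\dsize,\frac1\eps$. Second, for Theorem~\ref{thm:cov} itself (as opposed to Lemma~\ref{lm:emp-cov}) I would additionally account for the sampling error between the empirical second moment $\frac1\dsize\sum x_ix_i^T$ and the true covariance $\Sigma$: by a standard concentration bound $\E\|\frac1\dsize\sum x_ix_i^T - \E_\mu[xx^T]\|_F \lesssim \frac{1}{\sqrt\dsize}$ for samples in $B_2^d$ (each coordinate of the matrix has variance $O(1/\dsize)$ and there are $d^2$ of them, but using $\|xx^T\|_F\le1$ one gets the $O(1/\sqrt\dsize)$ bound directly), and one also must subtract the mean; this contributes the additive $\frac{1}{\alpha^2}$ term in Theorem~\ref{thm:cov}'s sample complexity. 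The main obstacle I anticipate is the width computation $w(\spoly_\queries)\lesssim\sqrt d$: it is where all the savings come from, and one has to be careful that the symmetrization $\spoly_\queries^\sym$ appearing in Lemma~\ref{lm:noise-meanw} also has width $O(\sqrt d)$ — which it does, since $w(\spoly_\queries^\sym)\le 2w(\spoly_\queries)$ once $0\in\spoly_\queries$ (here $x=0$ is allowed, so $0\in\spoly_\queries$ automatically, and the auxiliary-workload trick in the proof of Lemma~\ref{lm:noise-meanw} is not even needed).
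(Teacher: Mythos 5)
Your proposal is correct and follows essentially the same route as the paper's proof: view the empirical second moment as a workload of \(d^2\) statistical queries on the universe \(rB_2^d\), bound \(w(\spoly_\queries) \lesssim r^2\sqrt{d}\) via the operator norm of a Gaussian matrix, invoke Lemma~\ref{lm:noise-meanw} and Theorem~\ref{thm:main} after converting Frobenius error into the paper's \(\sqrt{\qsize}\)-normalized error, and obtain efficiency from the spectrahedral description of the sensitivity body (your description with \(\tr \le r^2\) is in fact the accurate one for the ball). The only bookkeeping caveat is that rescaling \(x \mapsto x/r\) scales the Frobenius error by \(r^2\) rather than \(r\), but this matches the looseness already present in the lemma's statement and does not affect the argument.
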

\begin{proof}
  Let us denote by
  \( M_2(X) = \frac{1}{\dsize}\sum_{i = 1}^\dsize x_i x_i^T \) the
  second moment matrix of \(\ds\).  We notice that \(M_2(X)\) is
  equivalent to a workload of \(d^2\) statistical queries on the
  universe \(B_2^d\). The sensitivity polytope (or, more appropriately
  in this case, sensitivity body) of \(M_2\) is
  \(K_{M_2} = \mathrm{conv}\{xx^T: x \in rB_2^d\}\). To estimate \(M_2\), we use the better 
  mechanism among the one in Lemma~\ref{lm:noise-meanw}, and
  the one in Theorem~\ref{thm:main}. Since the Frobenius norm is simply the
  \(\ell_2\) norm on matrices, seen as elements in $\R^{d^2}$,
  Lemma~\ref{lm:noise-meanw} and
  the one in Theorem~\ref{thm:main} imply that we can achieve
  \(\E\|M_2(\ds) - \mech(\ds)\|_F \le \alpha r\) as long as
  \begin{equation}\label{eq:cov-sc}
    \dsize \ge C\min\left\{\frac{dw(K_{M_2})}{\alpha r
        \eps},\frac{w(K_{M_2})^2}{\alpha^2 r^2 \eps}\right\},
  \end{equation}
  for a sufficiently large constant \(C\). To show that this implies
  the sample complexity bound in the statement of the theorem, we need
  to bound the Gaussian width \(w(K_{M_2})\). Noticing that the standard
  inner product \(\ip{A}{B}\) of two matrices \(A\) and \(B\) can be
  written as \(\tr(AB^T)\), we get
  \[
    w(K_{M_2}) = \E \sup_{x \in rB_2^d} \tr(Gxx^T) = r^2 \E \sup_{x \in B_2^d}
    x^T G x = r^2 \E\|G\|,
  \]
  with \(G\) a \(d\times d\) matrix of independent standard Gaussians,
  and \(\|G\|\) denoting its operator norm. It is a classical fact
  (see, e.g., Theorem 4.4.5 in \cite{Vershynin-HDP}) that
  \(\E \|G\| \lesssim \sqrt{d}\), and this, together with
  \eqref{eq:cov-sc} finishes the proof of the sample complexity bound.

  Finally, to prove efficiency, we need to exhibit a separation oracle
  for \(K_{M_2} = \mathrm{conv}\{xx^T: x \in rB_2^d\}\). We claim the alternative
  characterization \(K_{M_2} = \{A \in \R^{d\times d}: A \succeq 0, \tr A
  = r^2\}\), where the notation \(A \succeq 0\) means that \(A\) is
  positive semidefinite. On the one hand, it is clear that for any \(A
  \in K_{M_2}\), \(A\succeq 0\) and \(\tr A = r^2\), since these conditions
  hold for any extreme point \(xx^T\), \(x\in rB_2^d\), of \(K_{M_2}\), and
  are maintained under taking convex combinations. On the other hand, if
  \(A \succeq 0\) and \(\tr A = r^2\), then, by the spectral theorem,
  \(A =  \sum_{i = 1}^d \lambda_i (ra_i) (ra_i)^T\), where the \(\lambda_i\ge
  0\) are the eigenvalues of \(\frac{1}{r^2} A\), and \(a_i \in B_2^d\) are the
  eigenvectors. Moreover \(\sum_i \lambda_i = \frac{\tr(A)}{r^2} = 1\), so this
  spectral decomposition in fact gives a representation of \(A\) as
  convex combination of points in \(\{xx^T: x\in rB_2^d\}\). With this
  alternative characterization, we see that \(K_{M_2}\) can be written
  as a feasibility SDP, so it has a polynomial time separation oracle.
\end{proof}

First we use Lemma~\ref{lm:emp-cov} to prove Theorem~\ref{thm:cov}.

\begin{proof}[Proof of Theorem~\ref{thm:cov}]
A standard symmetrization argument shows that, if the  samples from \(\mathcal{D}\) are
  \(X = (X_1, \ldots, X_\dsize)\),  and \({\Sigma(X)}\) is
  their empirical covariance matrix, then
  \[
    \E\| {\Sigma}(X) - \Sigma\|_F \lesssim \frac{1}{\sqrt{\dsize}}.
  \]
  This means that the non-private sample complexity of the problem is
  at most on the order of \(\frac{1}{\alpha^2}\), and it suffices to
  privately estimate the empirical covariance within error \(\alpha\)
  with sample complexity
  \(
  \min\left\{\frac{d^{1.5}}{\eps \alpha}, \frac{d}{\eps
      \alpha^2}\right\}
  \).
  To do so, we estimate the empirical second moment matrix
  \(
  {M}_2(X) = \frac{1}{\dsize}\sum_{i = 1}^\dsize X_i X_i^T,
  \)
  within Frobenius error \(\frac{\alpha}{2}\),
  and the empirical mean
  \(
  {\mu}(X) = \frac{1}{\dsize}\sum X_i
  \)
  within \(\ell_2\) error \(\frac{\alpha}{4}\), each   under
  \(\frac{\eps}{2}\)-differential privacy. Denoting the
  estimates \(\widetilde{M}_2\) and \(\widetilde{\mu}\), respectively,
  and denoting \(\widetilde{\Sigma} = \widetilde{M}_2 - \widetilde{\mu}\widetilde{\mu}^T\),  this gives us
  \begin{align*}
    \|\widetilde{\Sigma} - \Sigma\|_F
    &\le
      \|\widetilde{M}_2 - M_2\|_F + \|\widetilde{\mu}\widetilde{\mu}^T -
      \mu\mu^T\|_F\\
    &\le    \|\widetilde{M}_2 - M_2\|_F +
      \|\widetilde{\mu}(\widetilde{\mu}^T -\mu^T)\|_F +
      \|(\widetilde{\mu}-\mu)\mu^T\|_F\\
    &= \|\widetilde{M}_2 - M_2\|_F +
      (\|\widetilde{\mu}\|_2 + \|\mu\|_2)\|\widetilde{\mu}^T -\mu^T\|_2\\
    &\le  \|\widetilde{M}_2 - M_2\|_F + 2\|\widetilde{\mu} - \mu\|_2,
  \end{align*}
  where in the last inequality we used the fact that \(\mathcal{D}\)
  is supported on \(B_2^d\), so \(\|\mu\|_2 \le 1\). Therefore, as
  long as \(\E\|\widetilde{M}_2 - M_2\|_F \le \frac{\alpha}{2}\), and
  \(\E\|\widetilde{\mu} - \mu\|_2\le \frac{\alpha}{4}\), our
  mechanism can achieve the desired error and privacy guarantees if
  it outputs \(\widetilde{\Sigma}\).

  It follows from Lemma~\ref{lm:emp-cov} (with \(r=1\)) that the
  estimate \(\widetilde{M}_2\) of
  \(M_2(\ds)\) can be computed under \(\frac{\eps}{2}\)-differential
  privacy with error
  \(\E\|\widetilde{M}_2 - M_2\|_F \le \frac{\alpha}{2}\) and 
  with the required sample complexity
  \( \min\left\{\frac{d^{1.5}}{\eps \alpha}, \frac{d}{\eps
      \alpha^2}\right\}.  \) We then notice that
  \(\mu\) is a workload of \(d\) statistical queries on the universe,
  \(B_2^d\), and its sensitivity polytope (i.e., body) is
  \(K_\mu = B_2^d\). We use the mechanism in Lemma~\ref{lm:ball-noise}
  to compute \(\widetilde{\mu}\). The lemma guarantees that we achieve
  \(\frac{\eps}{2}\)-differential privacy and error
  \(\E\|\widetilde{\mu} - \mu\|_2\le \frac{\alpha}{4}\) as long as
  \(\dsize \ge \frac{Cd}{\eps\alpha}\) for a sufficiently large
  constant \(C\). The mechanism is also efficient, as it only needs a
  separation oracle for \(B_2^d\).
\end{proof}

Finally, we observe that Lemma~\ref{lm:emp-cov} implies an efficient
algorithm for estimating \(2\)-way marginals with optimal sample
complexity. Recall that \(Q_{2,d}\) is the workload of 2-way marginals
on the universe \(\{0,1\}^d\), consisting of the \({d \choose 2}\) queries \(q_{\{i,j\}}(x)
= x_i x_j\), \(i,j \in [d], i\neq j\). 

\begin{theorem}\label{thm:efficient-2way}
  For any integer \(d\ge 1\), there exists an \(\eps\)-differentially
  private algorithm \(\mech\) whose sample complexity for the workload
  of \(2\)-way marginals is
  \[
    \sc(\mech,Q_{2,d},\alpha) \lesssim
    \min\left\{\frac{d^{1.5}}{\alpha\eps},
      \frac{d}{\alpha^2\eps}\right\}.
  \]
  Moreover, the mechanism runs in time polynomial in
  \(d,\dsize,\frac{1}{\eps}\). 
\end{theorem}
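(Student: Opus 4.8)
The plan is to obtain Theorem~\ref{thm:efficient-2way} as a direct corollary of Lemma~\ref{lm:emp-cov}, exploiting the fact that a 2-way marginal $q_{\{i,j\}}(x)=x_i x_j$ is exactly the $(i,j)$ entry of the rank-one matrix $xx^T$. Given a dataset $X=(x_1,\dots,x_\dsize)\in(\{0,1\}^d)^\dsize$, I would first rescale each point to $\bar x_\ell := x_\ell/\sqrt d$, which lies in $B_2^d$ since $\|x_\ell\|_2^2=\|x_\ell\|_1\le d$, and feed the dataset $(\bar x_1\bar x_1^T,\dots,\bar x_\dsize\bar x_\dsize^T)$ to the mechanism of Lemma~\ref{lm:emp-cov} with $r=1$. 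This produces an $\eps$-differentially private matrix $\widehat A$ with $\E\|\tfrac1\dsize\sum_\ell \bar x_\ell\bar x_\ell^T-\widehat A\|_F\le\alpha$ whenever $\dsize\ge C\min\{d^{1.5}/(\alpha\eps),\,d/(\alpha^2\eps)\}$. Writing $M:=\tfrac1\dsize\sum_\ell x_\ell x_\ell^T$, so that $M_{ij}=q_{\{i,j\}}(X)$ for $i\ne j$ and $\tfrac1\dsize\sum_\ell\bar x_\ell\bar x_\ell^T=\tfrac1d M$, the mechanism $\mech$ outputs the vector $(d\,\widehat A_{ij})_{i<j}$ of estimated marginals. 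Since $X\mapsto(\bar x_\ell\bar x_\ell^T)_\ell$ maps neighboring datasets to neighboring datasets and the remaining operations (scaling by $d$, selecting $\binom{d}{2}$ coordinates) are data-independent post-processing, $\mech$ is $\eps$-differentially private by Lemma~\ref{lm:composition}, and it is efficient because the mechanism of Lemma~\ref{lm:emp-cov} is.

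For the error, I would observe that
\[
\|\mech(X)-Q_{2,d}(X)\|_2^2=\sum_{i<j}\bigl(d\,\widehat A_{ij}-M_{ij}\bigr)^2 = d^2\sum_{i<j}\bigl(\widehat A_{ij}-\tfrac1d M_{ij}\bigr)^2\le d^2\bigl\|\widehat A-\tfrac1d M\bigr\|_F^2,
\]
so, pointwise in the mechanism's randomness, $\|\mech(X)-Q_{2,d}(X)\|_2\le d\,\|\widehat A-\tfrac1d M\|_F$, hence $\E\|\mech(X)-Q_{2,d}(X)\|_2\le d\,\E\|\widehat A-\tfrac1d M\|_F\le\alpha d$. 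Dividing by $\sqrt{\binom{d}{2}}\ge d/2$ (valid for $d\ge 2$; the case $d=1$ is trivial, as $Q_{2,1}$ is empty) gives a normalized $\ell_2$ error of at most $2\alpha$, and absorbing the constant $2$ into the $\lesssim$ notation shows $\sc(\mech,Q_{2,d},\alpha)\lesssim\min\{d^{1.5}/(\alpha\eps),\,d/(\alpha^2\eps)\}$, as claimed.

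Since this is essentially bookkeeping on top of Lemma~\ref{lm:emp-cov}, I do not expect a genuine obstacle; the only point that needs care is the normalization, namely tracking how the radius $\sqrt d$ of the data domain, the $r^2$ scale of the entries of $xx^T$, and the factor $k=\binom{d}{2}\asymp d^2$ in the definition of workload error interact — it is exactly this accounting that makes the bound collapse to $\min\{d^{1.5}/(\alpha\eps),\,d/(\alpha^2\eps)\}$, which is precisely the (non-efficient) bound already implied by Theorem~\ref{thm:main-worst-case} for a workload of $\binom{d}{2}$ bounded queries over a universe of size $2^d$. Thus the content of Theorem~\ref{thm:efficient-2way} is purely computational, and it hinges on the observation (used in the proof of Lemma~\ref{lm:emp-cov}) that the sensitivity body $\mathrm{conv}\{xx^T:x\in rB_2^d\}=\{A\succeq 0:\tr A=r^2\}$ admits a polynomial-time separation oracle via a feasibility SDP, whereas the sensitivity polytope of $Q_{2,d}$ has $2^d$ vertices and only yields an oracle of cost exponential in $d$.
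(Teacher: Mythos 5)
Your proposal is correct and is essentially the paper's own proof: both reduce $Q_{2,d}$ to second-moment estimation via Lemma~\ref{lm:emp-cov} and account for the $\binom{d}{2}$ normalization, the only (cosmetic) difference being that you rescale the data to $B_2^d$ and invoke the lemma with $r=1$, while the paper invokes it directly with $\{0,1\}^d\subseteq\sqrt{d}\,B_2^d$, i.e.\ $r=\sqrt{d}$. Your bookkeeping of the $d$, $r$, and $\sqrt{\binom{d}{2}}$ factors is accurate, so no gap.
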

\begin{proof}
  Note that for a dataset \(\ds = (x_1,\ldots, x_n)\), and any distinct
  \(i,j \in [d]\),
  \[
    q_{\{i,j\}}(\ds) = \left(\frac{1}{\dsize}\sum_{i = 1}^\dsize x_i x_i^T\right)_{i,j}.
  \]
  Therefore, to approximate \(Q_{2,d}\) it suffices to run the
  mechanism \(\mech\) from Lemma~\ref{lm:emp-cov}, and the
  (normalized) error we get is
  \[
    \E\left(\frac{1}{{d\choose 2}} \sum_{1 \le i < j \le d}
      \left(q_{i,j}(\ds) -\mech(\ds)_{i,j}\right)^2
    \right)^{1/2}
    \lesssim
    \E \frac{1}{d} \|M_2(\ds) - \mech(\ds)\|_F,
  \]
  where \(M_2(\ds) = \frac{1}{\dsize}\sum_{i = 1}^\dsize x_i
  x_i^T\). Since \(\{0,1\}^d \subseteq \sqrt{d}B_2^d\),
  Lemma~\ref{lm:emp-cov} implies the required sample complexity of
  \(
  \min\left\{\frac{d^{1.5}}{\alpha\eps},
    \frac{d}{\alpha^2\eps}\right\},
  \)
  as well as the running time guarantee.
\end{proof}

Note that the sample complexity in Theorem~\ref{thm:efficient-2way}
matches the bound in \eqref{eq:sc-marg-opt} with a mechanism running
in time polynomial in \(d\). Moreover, as we will see in
Section~\ref{sect:marg-lb}, this sample complexity bound is optimal
among all \(\eps\)-differentially private mechanisms. There is no
differentially private mechanism known to achieve the optimal sample
complexity for \(w\)-way marginals in time polynomial in \(d^w\) for
any \(w \ge 3\), and also no complexity theoretic evidence that this
is impossible. 

\section{Lower Bounds}

In this section we show that the sample complexity bound in 
Theorem~\ref{thm:main-worst-case} is tight both for random query
workloads, and for \(w\)-way marginals.

\subsection{Worst Case Queries}\label{sect:rand-lb}

The following lower bound shows that Theorem~\ref{thm:main-worst-case}
is tight up to constants in the worst case.

\begin{theorem}\label{thm:lb-worst-case}
  There exists a constant \(c > 0\) such that for any \(\alpha \in
  (0,c)\), \(\eps \in (0,1)\),  and any \(\qsize\) and \(\usize\) such
  that \(\qsize \le
  \frac12 \usize\) and \(\usize \ge \frac{2}{\alpha^2}\),  there exists a  workload \(\queries\) of
  \(\qsize\) bounded queries on a universe of size \(\usize\) for which
  \begin{equation}
    \label{eq:sc-opt-lb}
    \sc_\eps(\queries,\alpha)
    \gtrsim
    \min\left\{
    \frac{\qsize}{\eps \alpha},
    \frac{\sqrt{\qsize \log(\usize/\qsize)}}{\eps \alpha},
    \frac{\log(\alpha^2 \usize)}{\eps \alpha^2}
    \right\}.
\end{equation}
\end{theorem}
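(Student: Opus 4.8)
The plan is to prove \eqref{eq:sc-opt-lb} by \emph{packing}, via Lemmas~\ref{lm:packing-general} and~\ref{lm:packing-singleton}, on the workload of \emph{random sign queries}: draw $A\in\{-1,+1\}^{\qsize\times\usize}$ with i.i.d.\ uniform entries and set $\query_i(\elem)=A_{i,\elem}$, so that the queries are bounded and $\spoly_\queries=\mathrm{conv}\{A_\elem:\elem\in\uni\}\subseteq[-1,1]^\qsize$ (here $A_\elem$ is the $\elem$-th column). Since the target is a minimum of three quantities, it is enough to produce, for each choice of $(\alpha,\eps,\qsize,\usize)$, a single packing matching whichever of the three is smallest: any valid packing shows $\sc_\eps(\queries,\alpha)$ is at least the quantity it produces, which is at least the minimum. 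The three terms will come from three families of datasets, each tuned to a different regime.

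The workhorse is a construction with parameters $s\ge 1$ (a sparsity), $b\ge 1$ (a multiplicity) and $M$ (a number of datasets). Fix a base element $\elem_0$; for each weight-$s$ set $T\subseteq\uni\setminus\{\elem_0\}$ in a family $\mathcal T$ of size $M$, let $\ds_T$ be the size-$\dsize$ dataset made of $b$ copies of each element of $T$ and $\dsize-bs$ copies of $\elem_0$, where $\dsize\ge bs$. Then $\ds_T,\ds_{T'}$ differ in exactly $\tfrac b2|T\triangle T'|\le bs$ elements, while $\queries(\ds_T)-\queries(\ds_{T'})=\tfrac b\dsize Av$ with $v\in\{-1,0,1\}^\usize$ supported on $T\triangle T'$. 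Two ingredients control this: (i) a restricted-isometry bound, $\|Av\|_2\gtrsim\sqrt{\qsize\,\|v\|_0}$ uniformly over all $v$ with $\|v\|_0\le 2s$, which holds with high probability once $\qsize\gtrsim s\log(\usize/s)$ (union bound over supports, sub-Gaussian concentration of $\|Av\|_2^2$); and (ii) the fact that a uniformly random $\mathcal T$ has $|T\triangle T'|\asymp s$ for all pairs, provided $\log M\lesssim s\log(\usize/s)$ and $s\lesssim\usize$. Hence all pairwise normalized answer distances are $\asymp b\sqrt s/\dsize$; taking $\dsize\asymp b\sqrt s/\alpha$ makes them exceed $2\alpha$, the datasets pairwise differ in $\Delta\asymp bs$ elements, and with $M$ a constant fraction of $\binom{\usize}{s}$ (so $\log M\asymp s\log(\usize/s)$) the hypothesis $\Delta\le\tfrac1\eps\log(M/2)$ of Lemma~\ref{lm:packing-general} becomes $b\lesssim\tfrac1\eps\log(\usize/s)$. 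This gives $\sc_\eps(\queries,\alpha)\gtrsim\dsize\asymp\frac{\sqrt s\,\log(\usize/s)}{\alpha\eps}$, where $bs\le\dsize\asymp b\sqrt s/\alpha$ forces $s\lesssim 1/\alpha^2$.

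It remains to choose $s$. When $\qsize\gtrsim\log\usize$ (so $s=1$ is admissible for (i)), take $s$ as large as allowed, $s\asymp\min\{1/\alpha^2,\ \qsize/\log(\usize/\qsize)\}$, noting $\log(\usize/s)\asymp\log(\alpha^2\usize)$ in the first case and $\log(\usize/s)\asymp\log(\usize/\qsize)$ in the second. If the $1/\alpha^2$-branch is active one gets $\sc_\eps(\queries,\alpha)\gtrsim\frac{\log(\alpha^2\usize)}{\alpha^2\eps}$ (the third term); otherwise $s\asymp\qsize/\log(\usize/\qsize)$ and one gets $\sc_\eps(\queries,\alpha)\gtrsim\frac{\sqrt{\qsize\log(\usize/\qsize)}}{\alpha\eps}$ (the second term); a short, if case-laden, computation shows the best admissible $s$ always yields a bound at least the overall minimum. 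When $\qsize\lesssim\log\usize$ we have $\usize\ge 2^{\Omega(\qsize)}$, so we instead make the columns of $A$ the codewords of a constant-rate binary code of length $\qsize$ and relative distance $\tfrac14$; this produces $M=2^{\Omega(\qsize)}$ universe elements with pairwise normalized query distance $\Omega(1)$, and Lemma~\ref{lm:packing-singleton} (equivalently the $s=1$, $b\asymp\dsize$ case above) gives $\sc_\eps(\queries,\alpha)\gtrsim\frac{\qsize}{\alpha\eps}$ (the first term).

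I expect the substantive step to be the uniform lower-isometry estimate for $A$ together with the matching genericity claim about symmetric differences: these are exactly what replace the crude $\log\usize$ of a plain singleton packing by the sharper $\log(\usize/\qsize)$ and $\log(\alpha^2\usize)$, and the precise form of the admissible range $\qsize\gtrsim s\log(\usize/s)$ is what decides which term is reachable for a given $\qsize$. The remaining, routine but delicate, labor is the regime bookkeeping: checking that the hypotheses $\qsize\le\tfrac12\usize$ and $\usize\ge 2/\alpha^2$ are exactly what is needed for the code-based and block-based constructions to cover all parameters, handling the handoff near $\qsize\asymp\log\usize$, and treating the case where $\usize/\qsize$ is not polynomially bounded (so that $\log(\usize/s)$, $\log(\usize/\qsize)$ and $\log(\alpha^2\usize)$ must be compared with care rather than treated as interchangeable).
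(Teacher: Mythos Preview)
Your approach is correct and follows the same packing paradigm as the paper: both use random $\pm 1$ queries and apply Lemma~\ref{lm:packing-general} to datasets built by taking multiple copies of each element of a well-separated set family, and indeed the paper remarks after its proof that the lower bound holds with high probability for random sign queries. The difference is organizational. The paper simply cites Hardt--Talwar and De for the first two terms and only spells out the third, invoking De's isometry claim (Claim~3.1 there) together with a deterministic set-packing result from~\cite{EFF-packings}; you instead give a single construction parameterized by a sparsity $s$, obtain all three terms by tuning $s$ against the constraints $s\lesssim 1/\alpha^2$ and $\qsize\gtrsim s\log(\usize/s)$, and replace the cited ingredients by the restricted-isometry property of the random matrix and a Gilbert--Varshamov-type set family (or a code when $\qsize\lesssim\log\usize$). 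Your route is more self-contained and makes the handoff between regimes explicit, at the price of the case bookkeeping you correctly flag; the paper's route is shorter because it outsources that bookkeeping to the literature.
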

\begin{proof}
  The first two terms on the right hand side of \eqref{eq:sc-opt-lb}
  follow from results of Hardt and Talwar~\cite{HardtT10}, and
  De~\cite{De12}: see, in particular, Theorem 3.2 in~\cite{De12}. To
  establish the third term, we slightly modify De's construction. By
  Claim 3.1 in~\cite{De12}, and Lemma~\ref{lm:packing-general}, it is enough to construct datasets
  \(\ds_1, \ldots, \ds_M \in \uni^\dsize\) such that
  \begin{itemize}

  \item \(\dsize = \frac{\log(M/2)}{\eps} \gtrsim  \frac{\log(\alpha^2 \usize)}{\eps\alpha^2};
  \)
  
  \item any two distinct \(\ds_i, \ds_j\) differ in at least \(\Delta
    \gtrsim n\) elements;

  \item there is an \(a \gtrsim \alpha^2 n\) such that for any \(\ds_i\) and any \(\elem \in
    \uni\), \(\elem\) appears either \(a\) or \(0\)
    times in \(\ds_i\).

  \end{itemize}
  Similarly to Claim~A.1 in~\cite{De12}, the
  existence of such \(\ds_1, \ldots, \ds_M\) follows from classical
  lower bounds on sizes of packings. We let \(\dsize =
  \frac{\log(M/2)}{\eps}\), and define  \(s = \frac{\dsize}{a}\) and \(t = \frac{\dsize -
    \Delta}{a}\). Then, for appropriate choices of \(a \gtrsim
  \alpha^2 \dsize\) and \(\Delta \gtrsim \dsize\), 
  Proposition~2.1 in~\cite{EFF-packings} shows that there exists a
  family \(S_1, \ldots, S_M\) of subsets of \(\uni\) such that \(|S_i|
  = s\) for all \(i\), \(|S_i \cap S_j| \le t\) for all \(i \neq
  j\), and
  \[
    \log(M/2) \gtrsim  t \log\left(\frac{\usize t}{s^2}\right)
    \gtrsim \frac{\log(\alpha^2 \usize)}{\alpha^2}.
  \]
  To finish the construction of the datasets, we define \(\ds_i\)
  to contain \(a\) copies of each element of \(S_i\). It is then
  straightforward to verify that   \(\ds_1, \ldots, \ds_M \in
  \uni^\dsize\) satisfy the required properties.
\end{proof}

It is worth noting that the lower bound in
Theorem~\ref{thm:lb-worst-case} holds with high probability for random
\(\pm 1\)-valued queries, i.e., queries where for each
\(\elem \in \uni\), \(\queries(\elem)\) is an independently drawn
uniformly random vector in \(\{-1,+1\}^\qsize\).

\subsection{Marginals}\label{sect:marg-lb}

Except for random queries, 
Theorem~\ref{thm:main-worst-case} also gives tight bounds for the
important class of constant width marginal queries. The next theorem
uses a packing argument to show that the sample complexity bound \eqref{eq:sc-marg-opt} is tight.
\begin{theorem}\label{thm:marg-lb}
For any integer \(w \ge 1\) there exists a constant \(c>0\) such that
for any integer \(d \ge 10w\) and for any \(\alpha \in (0,c]\), we have
\[   \sc_\eps(\queries_{w,d},\alpha)
    \ge c
    \min\left\{
    \frac{d^w}{\eps \alpha},
    \frac{d^{(w+1)/2}}{\eps \alpha},
    \frac{d}{\eps \alpha^2}
    \right\}.
\]
\end{theorem}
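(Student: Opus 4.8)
\emph{Proof proposal.} The plan is to establish each of the three terms by a packing argument, i.e.\ via Lemma~\ref{lm:packing-general}, in close parallel with the proof of Theorem~\ref{thm:lb-worst-case}, with the quantities \(\qsize\) and \(\usize\) there now played by \(\binom{d}{w} = \Theta_w(d^w)\) and \(2^d\). Since \(d^w \ge d^{(w+1)/2}\) for all \(w\ge 1\), the first term is dominated by the second and needs no separate argument; it suffices to prove \(\sc_\eps(\queries_{w,d},\alpha)\gtrsim_w \min\{d^{(w+1)/2}/(\eps\alpha),\ d/(\eps\alpha^2)\}\). The workhorse is an elementary observation about the marginal‑answer vectors of individual universe elements. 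Fix the set \(\mathcal B\subseteq\{0,1\}^d\) of points of Hamming weight \(\lfloor d/2\rfloor\) whose pairwise Hamming distances lie in \([d/2 - C\sqrt d,\, d/2 + C\sqrt d]\); a greedy/probabilistic construction gives \(|\mathcal B| = 2^{\Omega(d)}\). For \(x\in\mathcal B\), \(\|\queries_{w,d}(x)\|_2^2 = \binom{\lfloor d/2\rfloor}{w}\), and for \(x\ne x'\in\mathcal B\), \(\langle \queries_{w,d}(x),\queries_{w,d}(x')\rangle = \binom{|\mathrm{supp}(x)\cap\mathrm{supp}(x')|}{w}\), with \(|\mathrm{supp}(x)\cap\mathrm{supp}(x')| \in [d/4 - C\sqrt d/2,\, d/4 + C\sqrt d/2]\). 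Writing \(\bar q = |\mathcal B|^{-1}\sum_{x\in\mathcal B}\queries_{w,d}(x)\) and \(\tilde q_x = \queries_{w,d}(x) - \bar q\), these estimates give \(\|\tilde q_x\|_2^2 = \Theta_w(\binom{d}{w})\) while \(|\langle\tilde q_x,\tilde q_{x'}\rangle| = O_w(\binom{d}{w}/\sqrt d)\), since \(\binom{t}{w}\) varies by only \(O_w(d^{\,w-1/2})\) as \(t\) ranges over an interval of length \(O(\sqrt d)\) around \(d/4\). Thus the centered vectors \(\{\tilde q_x : x\in\mathcal B\}\) have comparable lengths and are nearly orthogonal. (Equivalently, one may pass to the signed marginals \(\chi_S(x) = (-1)^{\sum_{i\in S}x_i}\), whose pairwise inner products are Krawtchouk polynomials \(K_w(d_H(x,x');d)\), of magnitude \(O_w(d^{w/2})\) near the center of the cube.)

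For the third — and new — term, let \(a = \Theta_w(\alpha^{-2})\) and take a family \(A_1,\dots,A_M\) of \(a\)-element subsets of \(\mathcal B\) that is simultaneously a packing (\(|A_i\cap A_j|\le a/2\) for \(i\ne j\)) and ``spread out'' in the sense that each \(A_i\) behaves like a random subset of \(\mathcal B\). Standard packing estimates for sparse set systems give \(\log M \gtrsim a\log(|\mathcal B|/a) \gtrsim \alpha^{-2}\bigl(d - O(\log\tfrac1\alpha)\bigr)\gtrsim d/\alpha^{2}\); here we use that this term is the minimum only when \(\alpha \ge d^{-(w-1)/2}\), so \(\log\tfrac1\alpha = O_w(\log d) = o(d)\). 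Let \(\ds_i\) consist of \(\dsize/a\) copies of each element of \(A_i\), with \(\dsize = \lfloor\log(M/2)/(2\eps)\rfloor\). Then \(\ds_i\) and \(\ds_j\) differ in \(|A_i\triangle A_j|\cdot \dsize/a \le 2\dsize =: \Delta \le \log(M/2)/\eps\), and \(\queries_{w,d}(\ds_i) - \queries_{w,d}(\ds_j) = \tfrac1a\bigl(\sum_{x\in A_i\setminus A_j}\tilde q_x - \sum_{x\in A_j\setminus A_i}\tilde q_x\bigr)\); using that \(A_i\setminus A_j\) and \(A_j\setminus A_i\) are disjoint ``random‑like'' sets of size \(\ge a/2\) together with the near‑orthogonality of the \(\tilde q_x\), this difference has squared \(\ell_2\)-norm \(\Theta_w(a^{-1}\binom{d}{w})\), so \(\binom{d}{w}^{-1/2}\|\queries_{w,d}(\ds_i) - \queries_{w,d}(\ds_j)\|_2 = \Theta_w(a^{-1/2}) > 2\alpha\) for the right constant in \(a\). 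Lemma~\ref{lm:packing-general} then yields \(\sc_\eps(\queries_{w,d},\alpha) \ge \dsize \gtrsim d/(\eps\alpha^2)\).

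The middle term \(d^{(w+1)/2}/(\eps\alpha)\) is obtained by the marginal analogue of De's construction~\cite{De12} for random queries: datasets that are uniform over larger subfamilies of \(\mathcal B\), with the number of distinct points per dataset tuned to trade the near‑orthogonality ``budget'' of the \(\tilde q_x\) against the size of the resulting set‑system packing; alternatively one can invoke the known marginal lower bounds coming from the methods of Hardt--Talwar and De~\cite{HardtT10,De12}, applied with \(k=\binom{d}{w}\) and \(\log(\usize/k)=\Theta_w(d)\) (which is where \(d\ge 10w\) is used).

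I expect the crux to be constructing, for each term, the family of datasets — concretely, a subset family of \(\mathcal B\) that is at once an exponential packing and ``spread out'' enough that the averaged marginal vectors \(\queries_{w,d}(\ds_i)\) are pairwise \(\Omega_w(\alpha)\)-far. A naive probabilistic argument is too lossy, since the far‑apart property must hold simultaneously for all \(\binom{M}{2}\) pairs while \(\log M \gg a\); this seems to require combining the combinatorial estimates above with an explicit, code‑based construction of the set family, and then checking that the constructions for the three terms stitch together over the whole range \(\alpha\in(0,c]\) (for very small \(\alpha\) the dominated first‑term bound takes over, which is convenient). The arithmetic with \(\binom{t}{w}\) and the Krawtchouk polynomials is routine but must be carried out carefully, as everything is controlled only up to factors depending on \(w\).
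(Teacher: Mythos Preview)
Your approach via near-orthogonality of centered marginal vectors has a genuine gap that does not close by choosing the set families $\{A_i\}$ more carefully. The estimate $|\langle\tilde q_x,\tilde q_{x'}\rangle| = O_w(\|\tilde q_x\|_2^2/\sqrt d)$ is correct, but it is too weak: when you form $\sum_{x\in B}\sigma_x\tilde q_x$ over a signed set $B$ of size $b$, the diagonal contributes $\Theta_w(b\,d^w)$ while the cross-terms can contribute as much as $-C_w b^2 d^{w-1/2}$, so the lower bound collapses once $b\gg\sqrt d$, i.e.\ once $a=\Theta(\alpha^{-2})\gg\sqrt d$, equivalently $\alpha\ll d^{-1/4}$. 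But the third term must be established throughout $\alpha\in[d^{-(w-1)/2},c]$, which for every $w\ge 2$ reaches well below $d^{-1/4}$. Nothing in your construction of $\mathcal B$ forces the pairwise inner products $\langle\tilde q_x,\tilde q_{x'}\rangle$ to have mean zero or to cancel, so no code-based choice of the $A_i$ alone repairs this; the obstruction lives in the geometry of the $\tilde q_x$, not in the set system. (The Krawtchouk rewriting gives the same $1/\sqrt d$ relative bound and does not help.) Your second-term argument is also only a pointer: the Hardt--Talwar/De lower bounds are for specific (random) workloads, not for arbitrary parameters $k,\log N$, so ``plug in $k=\binom{d}{w}$, $\log N=\Theta(d)$'' is not by itself a proof for marginals.

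The paper avoids all of this by engineering \emph{exact} orthogonality. It works over $\{0,1\}^{wd}$ and takes universe elements that are concatenations $(y_i,g_{s_1},\ldots,g_{s_{w-1}})$, where $g_s=\tfrac12(h_0+h_s)$ for Hadamard columns $h_0,\ldots,h_{d-1}$. On the block of marginal coordinates where each of the $w$ indices comes from a different copy of $[d]$, the answer vector of such an element is the tensor product $y_i\otimes\bigotimes_j g_{s_j}$; after one further orthogonal projection (by $I\otimes\Pi^{\otimes(w-1)}$, with $\Pi$ the projection orthogonal to $h_0$) this becomes $y_i\otimes\bigotimes_j h_{s_j}$, and vectors with distinct $s\in[d-1]^{w-1}$ are now \emph{exactly} orthogonal, so Pythagoras applies with no cross-terms at all. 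Datasets are indexed by functions $f:[d-1]^{w-1}\to[M]$ drawn from a code with large pairwise Hamming distance (this is where the standard packing bound enters), and the third term is obtained by restricting $f$ to a subset $\Sigma\subseteq[d-1]^{w-1}$ of size $\Theta(\alpha^{-2})$. This Hadamard tensor structure, yielding exact rather than $1/\sqrt d$-approximate orthogonality across $(d-1)^{w-1}$ directions, is the essential idea your proposal is missing.
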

\begin{proof}
  The first term in the lower bound only achieves the minimum when $w
  = 1$, so we only have to prove it holds in the case of 1-way
  marginals. This was shown by Hardt and Talwar~\cite{HardtT10}; let
  us briefly sketch the proof. Note that the set \(S_{1,d} =
  \{\queries_{1,d}(\elem): \elem \in \{0,1\}^d\}\) is simply the
  boolean cube \(\{0,1\}^d\). By classical packing results
  (e.g.~\cite{EFF-packings}),
  \(
  \log \sep(\{0,1\}^d, \frac{\sqrt{d}}{2} B_2^d) \gtrsim d.
  \)
  Then the lower bound follows from Lemma~\ref{lm:packing-singleton}.

  For the other two terms, we build packings of datasets and use
  Lemma~\ref{lm:packing-general}. We start with the second term, which
  achieves the minimum when \(w \ge 2\), and \(\alpha \le
  d^{-(w-1)/2}\).\cut{Note that
  \(S_{w,d} = \{\queries_{w,d}(\elem): \elem \in \{0,1\}^d\}\) is a
  coordinate projection of \(\{\elem^{\otimes w}: \elem\in \{0,1\}^d\}\).} Let
  \(h_0, h_1, \ldots, h_{d-1}\in \{-1,1\}^d\) be columns of a Hadamard
  matrix (i.e., any two distinct \(h_i,h_j\) are orthogonal), where
  \(h_0\) is the all-ones vector. Let us define, for \(i \in [d-1]\),
  \(g_i = \frac12 (h_0 + h_i) \in \{0,1\}^d\). Let, 
  \(y_1, \ldots, y_M \in \{0,1\}^d\) be such that for all distinct
  \(y_i, y_j\) we have \(\|y_i - y_j\|_2 \ge \frac{\sqrt{d}}{2}\). As
  noted above, by standard packing results, we can take
  \(\log M \gtrsim d\). Let us now define, for \(i \in [M]\), and \(s
  \in [d-1]^{w-1}\), \(\elem_{i, s} \in \{0,1\}^{wd}\)
  to be the concatenation of \(y_i\) and \(g_{s_1}, \ldots,
  g_{s_{w-1}}\). For any \(f:[d-1]^{w-1} \to [M]\), we define a
  dataset \(\ds_f\) that consists of \(a \gtrsim \frac{d}{\eps}\) copies of each of
  the elements \(\elem_{f(s),s}\)
  for each sequence \(s \in [d-1]^{w-1}\), together with \(\dsize -
  a(d-1)^{w-1}\) copies of the zero vector, where \(\dsize \ge a(d-1)^{w-1}\) will
  be chosen later. We consider a family \(\cal F\) of functions from
  \([d-1]^{w-1}\) to \([M]\) such that for any two distinct \(f,f' \in
  \mathcal{F}\) we have \(|\{s \in [d-1]^{w-1}:f(s) \neq
  f'(s)\}|\gtrsim (d-1)^{w-1}\). Again by standard packing results, we
  can find such a family with \(\log |\mathcal{F}| \gtrsim (d-1)^{w-1}
  \log M\).  The packing of datasets we consider consists of
  \(\ds_f\) for \(f \in \mathcal{F}\).

  An important observation we make is that if we project
  \(\queries_{w,wd}(\elem_{i,s})\) onto coordinates corresponding to queries
  \(\query_S\) for which \(S\) has 1 element from the first \(d\)
  coordinates, one from the second \(d\) coordinates, etc., then we get the vector
  \(y_i \otimes \bigotimes_{j =1}^{w-1} g_{s_j}\). We then have
  \begin{align}
    \|\queries_{w,wd}(\ds_f) -  \queries_{w,wd}(\ds_{f'})\|_2^2
    &\ge
    \frac{a^2}{\dsize^2} \left\|\sum_{s \in [d-1]^{w-1} }  (y_{f(s)}-y_{f'(s)}) \otimes
      \bigotimes_{j =1}^{w-1} g_{s_j} \right\|_2^2\notag\\
    &=
      \frac{a^2}{2^{2w-2}\dsize^2}
      \left\|\sum_{s \in [d-1]^{w-1}}  (y_{f(s)}-y_{f'(s)}) \otimes
      \bigotimes_{j =1}^{w-1}  (h_0 + h_{s_j}) \right\|_2^2\label{eq:marg-dist-first}.
  \end{align}
  Let \(\Pi\) be the orthogonal projection matrix onto the hyperplane
  orthogonal to \(h_0\) (i.e., to the all-ones vector). Then
  multiplying
  \((y_{f(s)}-y_{f'(s)}) \otimes
  \bigotimes_{j =1}^{w-1}  (h_0 + h_{s_j})\)
  by the orthogonal projection matrix \(I \otimes \Pi^{\otimes w-1}\)
  gives
  \[
    (y_{f(s)}-y_{f'(s)}) \otimes
    \bigotimes_{j =1}^{w-1} \Pi(h_0 + h_{s_j}) =
    (y_{f(s)}-y_{f'(s)}) \otimes
    \bigotimes_{j =1}^{w-1} h_{s_j}.
  \]
  Since orthogonal projection cannot increase
  the \(\ell_2\) norm, this implies
  \begin{align}
    \left\|\sum_{s \in [d-1]^{w-1}}  (y_{f(s)}-y_{f'(s)}) \otimes
    \bigotimes_{j =1}^{w-1}  (h_0 + h_{s_j}) \right\|_2^2
    &\ge
    \left\|\sum_{s \in [d-1]^{w-1}}  (y_{f(s)}-y_{f'(s)}) \otimes
      \bigotimes_{j =1}^{w-1}  h_{s_j} \right\|_2^2\notag\\
    &=
      \sum_{s \in [d-1]^{w-1}}\left\|(y_{f(s)}-y_{f'(s)}) \otimes
      \bigotimes_{j =1}^{w-1}  h_{s_j} \right\|_2^2\label{eq:marg-orthog}\\
    &=
      \sum_{s \in [d-1]^{w-1}}d^{w-1}\|y_{f(s)}-y_{f'(s)}\|_2^2.\label{eq:marg-dist}
  \end{align}
  Equality \eqref{eq:marg-orthog} holds since for any two distinct \(s,s' \in [d-1]^{w-1}\)
  \[
    \left \langle (y_{f(s)}-y_{f'(s)}) \otimes \bigotimes_{j =1}^{w-1} h_{s_j},
      (y_{f(s')}-y_{f'(s')}) \otimes \bigotimes_{j =1}^{w-1} h_{s'_j}\right\rangle
    = \ip{y_{f(s)}-y_{f'(s)}}{y_{f(s')}-y_{f'(s')}}\prod_{j = 1}^{w-1}
    \ip{ h_{s_j}}{ h_{s'_j}} = 0,
  \]
  with the final equality due to the pairwise orthogonality of \(h_1,
  \ldots, h_{d-1}\). Going back to \eqref{eq:marg-dist}, we recall
  that for any \(f,f' \in \mathcal{F}\), the number of sequences \(s\)
  for which \(f(s) \neq f'(s)\) is \(\gtrsim (d-1)^{w-1}\). Since any
  two distinct \(y_i, y_j\) satisfy \(\|y_i - y_j\|_2^2 \gtrsim d\),
  this gives us
  \begin{equation}
    \label{eq:marg-dist-final}
    \sum_{s \in [d-1]^{w-1}}d^{w-1}\|y_{f(s)}-y_{f'(s)}\|_2^2
    \gtrsim d^{2w-1}. 
  \end{equation}
  Combining \eqref{eq:marg-dist-first}, \eqref{eq:marg-dist}, and
  \eqref{eq:marg-dist-final}, we get for any two distinct \(f,f' \in \mathcal{F}\),
  \[
    \frac{1}{\sqrt{{wd \choose w}}}\|\queries_{w,wd}(\ds_f) -  \queries_{w,wd}(\ds_{f'})\|_2
    \gtrsim
    \frac{c ad^{(w-1)/2}}{\dsize}.
  \]
  where \(c\) is a constant that may depend on \(w\). Towards applying
  Lemma~\ref{lm:packing-general}, let us choose \(a\gtrsim
  \frac{d}{\eps}\) so that \(a(d-1)^{w-1} \le
  \frac{\log(|\mathcal{F}|/2)}{\eps}\). Then, 
  \[
    \frac{1}{\sqrt{{wd \choose w}}}\|\queries_{w,wd}(\ds_f) -  \queries_{w,wd}(\ds_{f'})\|_2
    \gtrsim
    \frac{c d^{(w+1)/2}}{\eps\dsize},
  \]
  and, since any two distinct
  datasets \(\ds_f, \ds_{f'}\) differ in at most \(a(d-1)^{w-1}\)
  elements, we can apply   Lemma~\ref{lm:packing-general} with \(\dsize = \frac{c
    d^{(w+1)/2}}{\alpha\eps}\) to get the lower bound. Recall that we
  need \(\dsize \ge a(d-1)^{w-1}\), which happens when \(\alpha \le c'
  d^{-(w-1)/2}\) for a sufficiently small \(c' > 0\). This is exactly
  the parameter regime in which we need the second term of the lower
  bound to hold. 
  Note that we proved the
  lower bound against \(\queries_{w,wd}\), but this implies a lower
  bound for \(\queries_{w,d}\) for \(d \ge w\).

  The third term of the lower bound is proved analogously to the
  second, with some modifications. We choose some subset \(\Sigma
  \subseteq [d-1]^{w-1}\) of size \(|\Sigma|\gtrsim
  \frac{1}{\alpha^2}\) and only consider functions \(f:\Sigma \to
  [M]\). When defining a dataset \(\ds_f\), we take \(a \gtrsim
  \frac{d}{\eps}\) copies of each of
  the elements \(\elem_{f(s),s}\)
  for each sequence \(s \in \Sigma\), and we do not take any
  additional copies of the zero vector. The rest of the proof proceeds
  mutatis mutandis as the proof of the second term of the lower
  bound. 
\end{proof}


\section{Optimality of the Mechanisms}

 In the previous section we argued that the mechanisms in this paper
 are optimal for worst-case query workloads. We can in fact show that
 in the constant error regime the sample complexity of our mechanisms
 is also tight with respect to the optimal sample complexity for the
 given workload. The next lemma follows from combining
 Theorem~\ref{thm:meanw} with the methods of Blasiok, Bun, Nikolov,
 and Steinke~\cite{cdp}, and is the main ingredient in proving
 Theorem~\ref{thm:opt}

 \begin{lemma}\label{lm:opt}
   For any workload \(\queries\) of \(\qsize\) queries over a universe of
   size \(\usize\), and any \(0\le \alpha \le
   \frac{\diam(\spoly_\queries)}{\sqrt{\qsize}}\), there exists an
   \(\eps\)-differentially private mechanism \(\mech\) with running time
   polynomial in \(\qsize, \usize, \dsize, \frac{1}{\eps}\), and with sample complexity
   \begin{equation}\label{eq:opt-ub}
     \sc(\mech,\queries,\alpha)
     \lesssim
     \frac{1}{\eps\alpha^2} \log\left(\frac{\diam(S_\queries)}{\alpha \sqrt{\qsize}}\right)^2 \sup_{t \ge \alpha/4} t^2 \log \sep(S_\queries, t\sqrt{\qsize}B_2^\qsize),
   \end{equation}
   where \(S_\queries = \{\queries(\elem):\elem \in \uni\}\).
 \end{lemma}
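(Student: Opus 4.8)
The plan is to run the JL mechanism of Theorem~\ref{thm:main} not on $\queries$ itself, but on a ``rounded'' workload $\widetilde{\queries}$ whose sensitivity polytope is a coarse inner approximation of $\spoly_\queries$: a polytope with few vertices and small Gaussian width, yet close enough to $\spoly_\queries$ that rounding to it costs only a constant fraction of the allowed error. Concretely, set $t_0 = \alpha/4$ and let $N \subseteq S_\queries$ be an inclusion-maximal subset of $S_\queries$ whose points are pairwise at $\ell_2$-distance greater than $t_0\sqrt{\qsize}$. By definition $|N| \le \sep(S_\queries, t_0\sqrt{\qsize}\,B_2^{\qsize})$, and by maximality every point of $S_\queries$ lies within $\ell_2$-distance $t_0\sqrt{\qsize}$ of $N$. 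Since $N \subseteq S_\queries = \{\queries(\elem):\elem\in\uni\}$, for each $\elem\in\uni$ we may fix a point $\pi(\elem)\in N$ with $\|\queries(\elem)-\pi(\elem)\|_2 \le t_0\sqrt{\qsize}$, taking $\pi(\elem)$ to be $\queries(\elem)$ itself whenever $\queries(\elem)\in N$; then $\pi$ maps onto all of $N$. Define the workload $\widetilde{\queries}$ coordinatewise by $\widetilde{\queries}(\elem) = \pi(\elem)$. Its sensitivity polytope is $\spoly_{\widetilde{\queries}} = \mathrm{conv}(N) =: K_\alpha$, and, by the triangle inequality, $\|\widetilde{\queries}(\ds)-\queries(\ds)\|_2 \le t_0\sqrt{\qsize} = \frac{\alpha}{4}\sqrt{\qsize}$ for every dataset $\ds$.

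Next I would let $\mech$ be the efficient $\eps$-differentially private mechanism of Theorem~\ref{thm:main} applied to the workload $\widetilde{\queries}$ with target error $\alpha/2$, and have $\mech$ return its answer vector as its estimate of $\queries(\ds)$. Privacy is immediate: $\widetilde{\queries}$ is a genuine workload, so Theorem~\ref{thm:main} guarantees $\eps$-differential privacy with respect to the dataset, and re-interpreting the output as an estimate of $\queries$ rather than of $\widetilde{\queries}$ is post-processing. For the error, by the triangle inequality $\E \frac{1}{\sqrt{\qsize}}\|\mech(\ds)-\queries(\ds)\|_2$ is at most $\E \frac{1}{\sqrt{\qsize}}\|\mech(\ds)-\widetilde{\queries}(\ds)\|_2 + \frac{1}{\sqrt{\qsize}}\|\widetilde{\queries}(\ds)-\queries(\ds)\|_2 \le \alpha/2 + \alpha/4 < \alpha$, and this holds as soon as $\dsize$ exceeds the sample complexity that Theorem~\ref{thm:main} provides for $\widetilde{\queries}$ at error $\alpha/2$, namely $\dsize \gtrsim w(\spoly_{\widetilde{\queries}})^2/(\qsize\alpha^2\eps) = w(K_\alpha)^2/(\qsize\alpha^2\eps)$. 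Efficiency is likewise inherited: an evaluation oracle for $\widetilde{\queries}$ is just a nearest-point query in the finite set $N$, computable in time polynomial in $\usize$ and $\qsize$; a separation oracle for the polytope $K_\alpha$, which has at most $\usize$ vertices, follows from linear programming; and $N$ itself is computed greedily in time polynomial in $\usize$ and $\qsize$. Hence $\mech$ runs in time polynomial in $\qsize,\usize,\dsize,\frac{1}{\eps}$.

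It then remains to bound $w(K_\alpha)$ in terms of the separation numbers of $S_\queries$. Since a linear functional on a polytope is maximized at a vertex, $w(K_\alpha) = w(N)$, and I would estimate $w(N)$ by Dudley's entropy integral $w(N) \lesssim \int_0^\infty \sqrt{\log \mathcal{N}(N,u)}\,du$, where $\mathcal{N}(N,u)$ is the $\ell_2$ covering number of $N$ at scale $u$. For $u \le t_0\sqrt{\qsize}$ we have $\mathcal{N}(N,u) \le |N| \le \sep(S_\queries, t_0\sqrt{\qsize}\,B_2^{\qsize})$; for $u = s\sqrt{\qsize}$ with $s \ge t_0$, using $N \subseteq S_\queries$ together with the standard comparison of covering and packing numbers, $\mathcal{N}(N,s\sqrt{\qsize}) \le \sep(S_\queries, s\sqrt{\qsize}\,B_2^{\qsize})$; and $\mathcal{N}(N,u) = 1$ once $u \ge \diam(S_\queries)$. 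Splitting the range $[\,t_0\sqrt{\qsize},\ \diam(S_\queries)\,]$ into $O\!\big(\log\tfrac{\diam(S_\queries)}{\alpha\sqrt{\qsize}}\big)$ dyadic blocks, the contribution of each block, and of the part below $t_0\sqrt{\qsize}$, is at most a constant multiple of $\sqrt{\qsize}\,\sup_{t\ge\alpha/4} t\sqrt{\log\sep(S_\queries, t\sqrt{\qsize}\,B_2^{\qsize})}$, so
\[
w(K_\alpha) \lesssim \log\!\left(\frac{\diam(S_\queries)}{\alpha\sqrt{\qsize}}\right)\sqrt{\qsize}\;\sup_{t\ge\alpha/4} t\sqrt{\log\sep(S_\queries, t\sqrt{\qsize}\,B_2^{\qsize})},
\]
where the hypothesis $\alpha \le \diam(\spoly_\queries)/\sqrt{\qsize}$ keeps the logarithm nonnegative. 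Squaring and substituting into $\sc(\mech,\queries,\alpha) \lesssim w(K_\alpha)^2/(\qsize\alpha^2\eps)$ gives exactly \eqref{eq:opt-ub}.

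The main obstacle is the chaining estimate in the last paragraph: one must choose the net scale $t_0 = \Theta(\alpha)$ so that the portion of Dudley's integral below $t_0\sqrt{\qsize}$ — where the covering number of $N$ is stuck at $|N|$ and no longer shrinks — is absorbed into the packing term at scale $\Theta(\alpha)$, and must carry out the dyadic bookkeeping so that the number of scales is genuinely $O(\log(\diam(S_\queries)/(\alpha\sqrt{\qsize})))$ and each scale's contribution is dominated by the stated supremum over $t \ge \alpha/4$. Everything else — identifying $\spoly_{\widetilde{\queries}}$ with $K_\alpha$, the triangle-inequality error accounting, and the oracle and running-time claims — is routine.
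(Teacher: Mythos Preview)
Your proposal is correct and follows essentially the same approach as the paper: build a coarsened workload $\widetilde{\queries}$ by rounding each $\queries(\elem)$ to a maximal $\Theta(\alpha\sqrt{\qsize})$-separated net $N\subseteq S_\queries$, apply the JL mechanism of Theorem~\ref{thm:main}/\ref{thm:meanw} to $\widetilde{\queries}$, and bound $w(\mathrm{conv}(N))=w(N)$ by a Dudley-type chaining estimate in terms of the packing numbers of $S_\queries$. The only cosmetic differences are that the paper uses separation scale $\alpha\sqrt{\qsize}/2$ rather than $\alpha\sqrt{\qsize}/4$ and packages the chaining bound as a black-box lemma (Lemma~\ref{lm:dudley}, quoted from~\cite{cdp}) rather than sketching the dyadic integral inline.
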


 In the proof of Lemma~\ref{lm:opt}, we use the following lemma
 from~\cite{cdp}.
 \begin{lemma}\label{lm:dudley}
   Let \(S \subseteq \R^\qsize\) be a set such that for any distinct
   \(y,y' \in S\) we have \(\|y-y'\|_2 >
   \frac{\alpha\sqrt{\qsize}}{2}\). Then
   \[
     w(S) \lesssim
     \sqrt{\qsize}\log\left(\frac{\diam(S)}{\alpha\sqrt{\qsize}}\right)
     \sup\left\{t\sqrt{\log \sep(S,t\sqrt{\qsize}B_2^\qsize)}:t \ge \frac{\alpha}{4}\right\}.
   \]
 \end{lemma}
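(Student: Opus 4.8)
The plan is to prove this by a Dudley-style chaining argument, collapsing the entropy integral into the stated $\log \times \sup$ form by a dyadic split. By a standard approximation we may take $S$ finite, and we may assume $|S|\ge 2$, the case $|S|\le 1$ being trivial since then $w(S)=0$. Let $N(S,\varepsilon)$ denote the smallest number of Euclidean balls of radius $\varepsilon$ covering $S$. The centered Gaussian process $y\mapsto \ip{y}{G}$ on $S$ has canonical distance $\|y-y'\|_2$, so Dudley's integral inequality (see, e.g., Theorem~8.1.3 in~\cite{Vershynin-HDP}) gives
\[
w(S) \lesssim \int_0^{\diam(S)} \sqrt{\log N(S,\varepsilon)}\,d\varepsilon .
\]

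The first step is to pass from covering numbers to separation numbers: a set $P\subseteq S$ that is maximal subject to having all pairwise distances strictly greater than $\varepsilon$ is, by maximality, an $\varepsilon$-net of $S$, so $N(S,\varepsilon)\le \sep(S,\varepsilon B_2^\qsize)$ for every $\varepsilon>0$. Substituting $\varepsilon=t\sqrt{\qsize}$ (so $\varepsilon B_2^\qsize = t\sqrt{\qsize}B_2^\qsize$) and breaking the range of integration at $t=\alpha/4$,
\[
w(S) \lesssim \sqrt{\qsize}\left(\int_0^{\alpha/4} + \int_{\alpha/4}^{\diam(S)/\sqrt{\qsize}}\right)\sqrt{\log \sep(S, t\sqrt{\qsize}B_2^\qsize)}\,dt .
\]
For the first integral I would invoke the hypothesis: since all pairwise distances in $S$ exceed $\tfrac{\alpha\sqrt{\qsize}}{2}$, the whole set $S$ is a $t\sqrt{\qsize}$-separated subset of itself for every $t<\alpha/2$, whence $\sep(S,t\sqrt{\qsize}B_2^\qsize)=|S|=\sep(S,\tfrac{\alpha}{4}\sqrt{\qsize}B_2^\qsize)$ on $[0,\alpha/4]$; so the first integral is at most $\tfrac{\alpha}{4}\sqrt{\log\sep(S,\tfrac{\alpha}{4}\sqrt{\qsize}B_2^\qsize)}\le \sup_{t\ge\alpha/4} t\sqrt{\log\sep(S,t\sqrt{\qsize}B_2^\qsize)}$.

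For the second integral the key move is to peel off the supremum: for $t\ge\alpha/4$, $\sqrt{\log\sep(S,t\sqrt{\qsize}B_2^\qsize)}=\tfrac1t\cdot t\sqrt{\log\sep(S,t\sqrt{\qsize}B_2^\qsize)}\le \tfrac1t \sup_{s\ge\alpha/4} s\sqrt{\log\sep(S,s\sqrt{\qsize}B_2^\qsize)}$, so that integral is at most $\left(\sup_{s\ge\alpha/4} s\sqrt{\log\sep(S,s\sqrt{\qsize}B_2^\qsize)}\right)\int_{\alpha/4}^{\diam(S)/\sqrt{\qsize}}\tfrac{dt}{t} = \left(\sup_{s\ge\alpha/4} s\sqrt{\log\sep(S,s\sqrt{\qsize}B_2^\qsize)}\right)\log\!\left(\tfrac{4\diam(S)}{\alpha\sqrt{\qsize}}\right)$. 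Adding the two contributions yields
\[
w(S) \lesssim \sqrt{\qsize}\left(1 + \log\frac{4\diam(S)}{\alpha\sqrt{\qsize}}\right)\sup_{t\ge\alpha/4} t\sqrt{\log\sep(S,t\sqrt{\qsize}B_2^\qsize)},
\]
which is the asserted bound after absorbing the additive $1$ and the constant $4$ into the logarithm; this absorption is legitimate as long as $\diam(S)/(\alpha\sqrt{\qsize})$ is bounded away from $1$, which is the regime of interest, while in the complementary case $\diam(S)\lesssim\alpha\sqrt{\qsize}$ one argues directly that $w(S)\lesssim\int_0^{\diam(S)}\sqrt{\log N(S,\varepsilon)}\,d\varepsilon\le\diam(S)\sqrt{\log|S|}\lesssim\sqrt{\qsize}\sup_{t\ge\alpha/4} t\sqrt{\log\sep(S,t\sqrt{\qsize}B_2^\qsize)}$.

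I expect the main obstacle to be arranging this collapse correctly rather than any single estimate: bounding Dudley's integral crudely by $\diam(S)\sqrt{\log\sep(\cdot)}$ loses a polynomial-in-$\qsize$ factor, and the whole purpose of the dyadic split together with the $1/t$-peeling is to trade the entire entropy integral for one $\log(\text{aspect ratio})$ factor times the worst-case scaled packing exponent $\sup_{t} t\sqrt{\log\sep(S,t\sqrt{\qsize}B_2^\qsize)}$. A secondary subtlety, easily missed, is that the separation hypothesis on $S$ is exactly what controls the small-$\varepsilon$ part of the integral, where $\log N(S,\varepsilon)$ is pinned at $\log|S|$ and would otherwise not be comparable to the value of the sup at scale $t=\alpha/4$.
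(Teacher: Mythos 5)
Your proof is correct. The paper does not actually prove this lemma itself --- it is imported from~\cite{cdp} --- but your argument (Dudley's entropy integral, bounding covering numbers by separation numbers, splitting the integral at $t=\alpha/4$ where the separation hypothesis pins the entropy at $\log|S|$, and peeling off a factor $1/t$ so the remaining integral collapses to a single logarithm of the aspect ratio times the supremum) is the standard route and is exactly the kind of chaining argument behind the cited result. The only caveat is the degenerate regime $\diam(S)\lesssim\alpha\sqrt{\qsize}$, where the stated factor $\log\bigl(\diam(S)/(\alpha\sqrt{\qsize})\bigr)$ can be of order a constant or even nonpositive; you flag this explicitly and give the correct direct bound there, and this is an imprecision in how the lemma is stated (the constant hidden in $\lesssim$ and the interpretation of the logarithm) rather than a gap in your argument.
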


The mechanism in Lemma~\ref{lm:opt} is given by
Algorithm~\ref{alg:coarse}.

\begin{algorithm}
  \begin{algorithmic}
    \State Let \(S \subseteq S_\queries = \{\queries(\elem):\elem \in
    \uni\}\) be inclusion-maximal s.t.~for all distinct \(y,y' \in S\), \(\|y-y'\|_2 >
    \frac{\alpha \sqrt{\qsize}}{2}\).

    \State Define workload \(\widetilde{\queries}\) s.t.~for each \(\elem \in
    \uni\),
    \(
    \widetilde{\queries}(\elem) \in \arg \min\{\|y -
    \queries(\elem)\|_2: y \in S\}.
    \)

    \State Apply the mechanism from Theorem~\ref{thm:meanw} to \(\widetilde{\queries}(\ds)\),
    and output answer.
  \end{algorithmic}
  \caption{The Coarse JL-Release mechanism}
  \label{alg:coarse}
\end{algorithm}

\begin{proof}[Proof of Lemma~\ref{lm:opt}]
  As already mentioned, the required mechanism \(\mech\) is given by
  Algorithm~\ref{alg:coarse}. The first key claim is that, for all
  \(\ds \in \uni^\dsize\),
  \begin{equation}\label{eq:coarse-round}
    \frac{1}{\sqrt{\qsize}}\|\widetilde{\queries}(\ds) - \queries(\ds)\|_2 \le \frac{\alpha}{2}.
  \end{equation}
  By the triangle inequality, in order to verify
  \eqref{eq:coarse-round}, it suffices to check it for single-element
  datasets, i.e., to check that, for every \(\elem \in \uni\),
  \(
  \frac{1}{\sqrt{\qsize}}\|\widetilde{\queries}(\elem) - \queries(\elem)\|_2 \le \frac{\alpha}{2}.
  \)
  This follows by maximality of \(S\): if there were any \(\elem \in
  \uni\) such that
  \(
  \frac{1}{\sqrt{\qsize}}\|\widetilde{\queries}(\elem) - \queries(\elem)\|_2
  > \frac{\alpha}{2},
  \)
  then, by the construction of \(\widetilde{\queries}\), it would
  follow that
  \(
  \frac{1}{\sqrt{\qsize}}\|y - \queries(\elem)\|_2
  > \frac{\alpha}{2}
  \)
  for all \(y \in S\), and we could then add \(\queries(\elem)\) to
  \(S\), contradicting its maximality. Therefore, no such \(\elem\)
  exists, and this proves \eqref{eq:coarse-round}.

  By Theorem~\ref{thm:meanw}, we have
  \(
  \E \frac{1}{\sqrt{\qsize}} \|\mech(\ds) -
  \widetilde{\queries}(\ds)\|_2 \le \frac{\alpha}{2}
  \)
  as long as
  \(
  n \ge \frac{C w(\mathrm{conv}(S))^2}{\qsize\eps\alpha^2}
  = \frac{C w(S)^2}{\qsize\eps\alpha^2}
  \)
  for a constant \(C>0\).
  Together with Lemma~\ref{lm:dudley} and \eqref{eq:coarse-round},
  this means that
  \[
  \sc(\mech,\queries,\alpha)
  \lesssim
  \frac{w(S)^2}{\qsize \alpha\eps^2}
  \lesssim
  \frac{1}{\eps\alpha^2}\log\left(\frac{\diam(S)}{\alpha\sqrt{\qsize}}\right)^2
  \sup\left\{t^2{\log \sep(S,t\sqrt{\qsize}B_2^\qsize)}:t \ge \frac{\alpha}{4}\right\}.
  \]
  This establishes the required sample complexity, since
  \(\sep(S,t\sqrt{\qsize}B_2^\qsize) \le \sep(S_\queries, t\sqrt{\qsize}B_2^\qsize)\)

  The running time guarantee holds since \(S\) can be computed
  greedily by making a single pass over \(\uni\), and the rest of the
  algorithm is just the mechanism from Theorem~\ref{thm:meanw}. Privacy follows from Theorem~\ref{thm:meanw}.
\end{proof}

\begin{proof}[Proof of Theorem~\ref{thm:opt}]
  By Lemma~\ref{lm:packing-singleton}, we have
  \begin{equation}\label{eq:opt-lb}
    \sc_\eps(\queries,\alpha/8)
    \gtrsim
    \frac{1}{\eps\alpha}
    \sup \left\{t \log \sep(\spoly_\queries,
      t\sqrt{\qsize}B_2^\qsize):t \ge \frac{\alpha}{4}\right\}.
  \end{equation}
  Note that the suprema in \eqref{eq:opt-ub} and \eqref{eq:opt-lb} can
  both be taken over \(t \in
  \left[\frac{\alpha}{4},\frac{\diam(S_\queries)}{\sqrt{\qsize}}\right)\). Therefore, we
  have
  \[
    \frac{\sc(\mech,\queries,\alpha)}{\sc_\eps(\queries,\alpha/8)}
    \lesssim
    \frac{\diam(S_\queries)}{\alpha\sqrt{\qsize}}\log\left(\frac{\diam(S_\queries)}{\alpha
        \sqrt{\qsize}}\right)^2
    =
    \frac{\diam(\spoly_\queries)}{\alpha\sqrt{\qsize}}\log\left(\frac{\diam(\spoly_\queries)}{\alpha
        \sqrt{\qsize}}\right)^2.
  \]
  Together with the running time and privacy guarantees in
  Lemma~\ref{lm:opt}, this proves the theorem.
\end{proof}

\bibliographystyle{alpha}
\bibliography{JLQueryRelease}



\end{document}